\documentclass[11pt]{article}
\usepackage[utf8]{inputenc}
\usepackage[margin=1in]{geometry}
\usepackage{amsmath, amsthm, amssymb, thm-restate, graphicx, enumitem, bbm, braket, xcolor, float, complexity, comment, tikz, url, circuitikz, subcaption, xurl}
\usepackage[linktocpage=true]{hyperref}
\usepackage[backend=biber, style=alphabetic, maxbibnames=99]{biblatex}
\usepackage[capitalise]{cleveref}
\allowdisplaybreaks

\newcommand{\knote}[1]{{\color{brown} (Kunal: #1)}}
\newcommand{\jnote}[1]{{\color{teal} (James: #1)}}
\renewcommand{\knote}[1]{}
\renewcommand{\jnote}[1]{}

\newcommand{\of}[1]{\left( #1 \right)}
\newcommand{\ofc}[1]{\left\{ #1 \right\}}
\newcommand{\ofb}[1]{{\left[#1\right]}}
\newcommand{\abs}[1]{{\left|#1\right|}}

\newcommand{\vect}[1]{\boldsymbol{#1}}
\newcommand{\wt}[1]{\widetilde{#1}}

\newcommand{\defeq}{\stackrel{\mathrm{\scriptscriptstyle def}}{=}}

\newcommand{\StoqMA}{\class{StoqMA}}

\newcommand{\LHam}{{\sc Local Hamiltonian}}
\newcommand{\SpHam}{{\sc $\mathcal{S}^+\!$-Hamiltonian}}
\newcommand{\TpHam}{{\sc $\mathcal{T}^+\!$-Hamiltonian}}
\newcommand{\KpHam}{{\sc $\{K\}^+$-Hamiltonian}}
\newcommand{\pHam}[1]{\textsc{$\{#1\}^+$-Hamiltonian}}
\newcommand{\EPR}{{\sc EPR}}
\newcommand{\EPRs}{\mbox{\sc EPR\hspace{-.4mm}*\hspace{-.3mm}}}
\newcommand{\Toy}{{\sc Toy}}

\newtheorem{definition}{Definition}
\newtheorem{claim}{Claim}
\newtheorem{lemma}{Lemma}
\newtheorem{theorem}{Theorem}

\newtheorem{conjecture}{Conjecture}

\hypersetup{
    colorlinks=true,
    linkcolor=blue,
    filecolor=magenta,      
    urlcolor=cyan,
    citecolor=violet,
}

\begin{document}
\title{A complexity phase transition at the EPR Hamiltonian}

\author{Kunal Marwaha\footnote{\url{kmarw@uchicago.edu}} \, and James Sud\footnote{\url{jsud@uchicago.edu}}\\[8pt]
{\small University of Chicago}
}
\date{}

\maketitle

\begin{abstract}
    We study the computational complexity of 2-local Hamiltonian problems generated by a positive-weight symmetric interaction term, encompassing many canonical problems in statistical mechanics and optimization. We show these problems belong to one of three complexity phases: $\QMA$-complete, $\StoqMA$-complete, and reducible to a new problem we call \EPRs. The phases are physically interpretable, corresponding to the energy level ordering of the local term. 

    The \EPRs\ problem is a simple generalization of the \EPR\ problem of~\cite{king2023}. 
    Inspired by empirically efficient algorithms for \EPR, we conjecture that \EPRs\ is in $\BPP$. 
    If true, this would complete the complexity classification of these problems, and imply \EPRs\ is the transition point between easy and hard local Hamiltonians.

    Our proofs rely on perturbative gadgets. One simple gadget, when recursed, induces a renormalization-group–like flow on the space of local interaction terms. This gives the correct complexity picture, but does not run in polynomial-time. To overcome this, we design a gadget based on a large spin chain, which we analyze via the Jordan-Wigner transformation.
\end{abstract}

\section{Introduction}\label{sec:intro}
We study the computational complexity of the local Hamiltonian problem, which asks us to estimate the ground state energy of a physical system. While this problem is $\QMA$-complete in general \cite{kitaev2002}, it becomes efficiently solvable for certain restricted families, for example via Markov chain Monte Carlo methods or the quantum adiabatic algorithm. This raises a natural question:
\begin{center}
    \emph{Which local Hamiltonians admit efficient algorithms for ground state energy estimation?}
\end{center}
Identifying tractable families, however, does not by itself explain \emph{why} certain Hamiltonians are easy or hard. Ideally, we would like go further, by characterizing complexity directly from simple structural features:
\begin{center}
    \emph{What physical properties of a local Hamiltonian problem determine its complexity?}
\end{center}
We approach these questions through the lens of computational \emph{phase transitions}. For a family of problems controlled by a real parameter $\theta$, we say that a phase transition occurs if there is a threshold $t$ where the complexity of the problem $P(\theta)$ abruptly changes at $\theta = t$. Commonly, the problems in $\{P(\theta) \ |\ \theta < t\}$ all have the same complexity, which is different from the problems in $\{P(\theta) \ |\ \theta > t\}$. We may think of these two subsets as \emph{phases}, just as a physical system can exhibit different phases of matter.

We show how to answer these questions for Hamiltonians generated by a symmetric $2$-local interaction term with positive interaction strength. 
This family was first introduced by Piddock and Montanaro~\cite{piddock2015}, and covers a range of important problems in statistical physics and computer science.  
We show a trichotomy theorem, e.g. the existence of three computational phases, with phase transitions driven by energy level crossings in the local term.
The complexity of each phase is encoded in a simple physical property of the local term's energy level diagram (see \cref{fig:toy_levels_and_phases,fig:general_levels_and_phases}).

To build intuition, we first restrict to a toy model
in which each local term is a weighted sum of two projectors. 
In this toy model, one phase transition occurs at the EPR problem of~\cite{king2023}. Prior work suggests that this problem may lie in $\BPP$ \cite{rayudu2025}, which would place it at the boundary between easy and hard instances. In the general family of local Hamiltonian problems, the analogous phase transition corresponds to an augmented EPR problem, which we call the \EPRs\ problem. Showing \EPRs\ is in $\BPP$  would thus complete the complexity classification of this family.

\subsection{Results}\label{sec:intro/results}
A $k$-local Hamiltonian can be written as $H \defeq \sum_{\ell} H_{\ell}$, where each term $H_{\ell}$ acts nontrivially on at most $k$ qubits and as the identity on the rest. The $k$-\LHam\ problem asks: given thresholds $a$ and $b$ with $b - a \ge 1/\poly(n)$, determine whether the ground-state energy of a $k$-local $H$ on $n$ qubits is at most $a$ or at least $b$. This problem was shown to be $\QMA$-complete when $k=5$ in \cite{kitaev2002}, and even when $k=2$ by \cite{kempe2005}.

\subsubsection{The toy model}\label{sec:intro/results/toy}
Consider the following $2$-local interaction term $J(s)$, parameterized by the real value $s$
\begin{align*}
    J(s) \defeq -s \ket{\psi^-}\bra{\psi^-} -  \ket{\psi^+}\bra{\psi^+},
\end{align*}
where $\ket{\psi^\pm} \defeq \tfrac{1}{\sqrt{2}} \of{\ket{01} \pm \ket{10}}$ are Bell states. We define \Toy$(s)$ to be a restriction of $2$-\LHam\ where the local terms of the Hamiltonian are all proportional to $J(s)$
\begin{align*}
    H = \sum_{(i,j) \in E} w_{ij} J(s)_{ij}\,,
\end{align*}
for some graph $G$ with vertices $[n]$, edges $E$, and positive edge weights $\ofc{w_{ij}}$.

It can be shown as a corollary of \cite[Theorem 2]{piddock2015} that \Toy$(s)$ is $\QMA$-complete when $s > 1$. However, this result only shows that \Toy$(s)$ is in $\StoqMA$ when $s < 1$. We prove that a matching lower bound holds when $0 < s < 1$:
\begin{claim}\label{claim:toy_stoqma}
    For $0 < s < 1$, \Toy$(s)$ is $\StoqMA$-complete.
\end{claim}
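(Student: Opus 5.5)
Membership in $\StoqMA$ for $s<1$ is already given as a corollary of \cite[Theorem 2]{piddock2015}. Concretely, after conjugating by $Z$ on one side of a bipartition, the off-diagonal part of $-s\ket{\psi^-}\bra{\psi^-}-\ket{\psi^+}\bra{\psi^+}$ in the computational basis is $-\tfrac{1-s}{2}(\ket{01}\bra{10}+\ket{10}\bra{01})$. This is nonpositive when $s<1$, so $J(s)$ is stoquastic on every graph and the problem lies in $\StoqMA$. So the task is hardness, and I will reduce from a known $\StoqMA$-complete problem.

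I will use the transverse-field Ising model, or equivalently 2-local stoquastic Hamiltonians built from $ZZ$ terms with arbitrary signs, positive or negative $XX$ terms of the stoquastic sign, and single-qubit $Z$ and $X$ fields, shown $\StoqMA$-complete by Bravyi et al.\ and Cubitt--Montanaro. The target is to simulate these terms with positive-weight copies of $J(s)$ via perturbative gadgets.

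Write $J(s)$ in Pauli form: $\ket{\psi^\pm}\bra{\psi^\pm}=\tfrac14(I\pm XX\pm YY-ZZ)$. Hence
\[
J(s)=-\tfrac{1+s}{4}(I-ZZ)-\tfrac{1-s}{4}(XX+YY).
\]
This is a ferromagnetic $XY$ coupling plus an antiferromagnetic $ZZ$ coupling. Both pieces have nonzero weight exactly when $0<s<1$, so we are strictly between the \EPR-like point $s=1$ and the pure $XX+YY$ point.

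The plan proceeds in three steps.

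\textbf{Step 1: anisotropy and sign of $ZZ$.} I will combine edges of the same term in gadget configurations to generate effective interactions of the form $\alpha ZZ+\beta(XX+YY)$ with tunable ratio, including a ferromagnetic $ZZ$. A first attempt: join two qubits through a mediator qubit with strong $J(s)$ edges. Second-order perturbation theory then gives an effective coupling whose $XY$ and $ZZ$ coefficients scale differently with $s$ (roughly $(1-s)^2$ versus $(1+s)^2$). Iterating this produces a flow in the ratio, and the signs are handled by choosing odd or even path lengths.

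\textbf{Step 2: breaking $U(1)$ symmetry.} The family generated by $J(s)$ conserves total $Z$-magnetization and is $SU(2)$-invariant only at special points. Any simulation of an Ising model with transverse field must therefore break $U(1)$ in the encoding, not in the physical terms. I will encode each logical qubit in a small cluster of physical qubits whose strong internal $J(s)$ couplings have a degenerate two-dimensional ground space spanned by states of different magnetization. An example is a pair whose low-energy subspace is $\{\ket{01},\ket{10}\}$. Within such a pair, $\ket{\psi^+}$ and $\ket{\psi^-}$ are split, so I will instead use a triangle or another odd cluster with a Kramers-like doublet. Weak inter-cluster $J(s)$ edges then project to logical $ZZ$ and $XX$ terms, and unequal edge weights give effective fields.

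\textbf{Step 3: completing the reduction.} I will verify that projected couplings of both signs and fields are obtained with polynomially bounded weights and $1/\poly$ error in the spectrum. The standard gadget-simulation framework (Kempe--Kitaev--Regev, Bravyi--Hastings) then yields a reduction preserving the promise gap.

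The main obstacle is Step 2, because conservation of magnetization makes naive gadgets produce only $XY$-type effective terms. I will first try a three-qubit triangle cluster with the logical qubit defined in its lowest spin-$1/2$ sector, and compute the projected inter-cluster couplings to check that both $ZZ$ and transverse terms appear with independent signs. If that fails, I will fall back on the recursive gadget flow suggested in the abstract, which at least gives the complexity picture, even though it may not run in polynomial time.
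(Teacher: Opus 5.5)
Your membership argument is fine, though no conjugation is needed. The entry $-\tfrac{1-s}{2}$ you computed is already that of the unconjugated $J(s)$, and conjugating by $Z$ on one side would flip its sign.

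The hardness part has a genuine gap. No step produces a $\StoqMA$-hard target with polynomial overhead, and the one concrete mechanism you commit to cannot work. In Step~2 the logical qubit lives in an odd cluster whose doublet has different magnetizations. Every term, including $H_0$, commutes with $M=\sum_i Z_i$. So the unique ground state in each parity sector is an $M$-eigenstate with odd, hence nonzero, eigenvalue $m$, and $P_-MP_-=m\sum_i Z_i^{(L)}$. Any perturbative effective Hamiltonian commutes with this operator, because the Schrieffer--Wolff rotation commutes with $M$. Hence no transverse field and no lone logical $XX$ can arise at any order. Already at first order, \cref{lem:vertex_replacing} shows inter-cluster edges project only to products of logical Paulis, so unequal edge weights produce no fields.

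The triangle also moves the coupling the wrong way. Its doublet $\{\ket{W},\ket{\overline{W}}\}$ has $|t^Z|=\tfrac13$ and $|t^X|=|t^Y|=\tfrac23$. An inter-cluster edge therefore projects to $\tfrac{1+s}{36}ZZ-\tfrac{1-s}{9}(XX+YY)$, which is $J(s')$ with $\tfrac{1-s'}{1+s'}=4\cdot\tfrac{1-s}{1+s}$. For $s=\tfrac12$ this gives $s'=-\tfrac17$, inside the region that reduces to \EPR\ by \cref{claim:toy_EPR}. More generally, nothing in your plan uses $s>0$ decisively, yet any valid hardness argument must break down for $s\le 0$. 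Note also that $s=0$ is \EPR\ and $s=1$ is \textsc{MaxCut}, not the reverse.

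The paper supplies the two missing ideas. First, you need an almost pure antiferromagnetic Ising coupling while keeping the ferromagnetic $XY$ term, so the $XY$-to-$ZZ$ ratio must be driven to $1/\poly(n)$. A constant number of nested constant-size gadgets gives only a constant ratio, and $\Theta(\log\log n)$ levels give quasi-polynomial weights, as you concede. The paper proceeds as follows:
\begin{itemize}
\item It rotates $J(s)$ to $aXX-YY-ZZ$ with $a=\tfrac{1+s}{1-s}>1$. This is exactly where $s>0$ enters, since the edge gadget's map $a\mapsto a^2(a+1)/2$ grows only for $a>1$.
\item It boosts to $a\ge 4$ with a constant number of edge gadgets.
\item It applies a single vertex-replacing gadget of growing size, $K_{L,L-1}$ with $L=\Theta(\log n)$. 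Its effective term is $g(L)XX-YY-ZZ$ with $g(L)=e^{\Theta(L)}=\poly(n)$.
\item A final edge gadget yields $\{XX,-YY-ZZ\}$ (\cref{lem:xxz_to_xx_and_-yy-zz}).
\end{itemize}
The real work is the third step: proving the gap and the exponentially small $t^X$ via token graphs, Perron--Frobenius, reduction to an $L\times L$ tridiagonal matrix, and Gershgorin.

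Second, the transverse field comes from precisely the encoding you mentioned and then discarded: a pair whose doublet has equal magnetization, $\{\ket{01},\ket{10}\}$ in your frame. A bare $J(s)$ edge does split it. But once the pure Ising term is available, it can serve as $H_0$. The retained ferromagnetic $XY$ couplings to a neighbouring pair ($-YY-ZZ$ in the paper's frame) then generate the field at second order (\cref{lem:xx_and_-yy-zz_to_tim}). This gives the positive-weight transverse-field Ising model, which is $\StoqMA$-hard.
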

Furthermore, we show that all problems \Toy$(s)$  for $s < 0$ are no harder than \Toy$(0)$:
\begin{claim}\label{claim:toy_EPR}
    For $s < 0$, \Toy$(s)$ is reducible to \Toy$(0)$, also known as the \EPR\ problem~\cite{king2023}.
\end{claim}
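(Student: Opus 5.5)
The plan is to reduce by \emph{simulation}: replace each edge term $w_{ij}J(s)_{ij}$ of a \Toy$(s)$ instance by a constant-size gadget built only from positively weighted $-\ket{\psi^+}\bra{\psi^+}$ terms (i.e.\ \Toy$(0)$ terms). The low-energy effective Hamiltonian of the gadget should equal $w_{ij}J(s)_{ij}$ up to an identity shift and $1/\poly(n)$ error. Standard perturbative-gadget bookkeeping then preserves the promise gap with polynomial weights, and gives the reduction.

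First I will fix the space of interactions that can appear. Write $\ket{\psi^\pm}\bra{\psi^\pm} = \tfrac14(I - ZZ) \pm \tfrac14(XX+YY)$. For $s=-t$ with $t>0$, this gives
\begin{align*}
J(-t) = -\tfrac{1-t}{4}(I-ZZ) - \tfrac{1+t}{4}(XX+YY).
\end{align*}
So, relative to \EPR, $J(s)$ for $s<0$ has a stronger $XY$ part than $ZZ$ part. For $t>1$ the $ZZ$ part even becomes antiferromagnetic.

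Every \Toy\ term commutes with total $Z$ (a $U(1)$ symmetry), with the global flip $X^{\otimes n}$, and with the qubit swap. Hence any effective $2$-local interaction a gadget can produce between two of its external qubits is again a combination of $I$, $ZZ$ and $XX+YY$. Single-qubit $Z$ fields are excluded by the flip symmetry, and $XY-YX$ terms are excluded by swap plus time-reversal, since all terms are real. Up to identity, the effective term is therefore $\alpha(I-ZZ)+\beta(XX+YY)$, i.e.\ proportional to some $J(s')$ or to $\pm(XX+YY)$, or to a $ZZ$-only term. A gadget thus acts as a map on the one-parameter family of interaction ``directions''. Positive sums of simulated terms realise the convex cone spanned by the directions we can reach.

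The concrete gadget I would use is a mediator pair $a,b$ with a heavy EPR term $-\Delta\ket{\psi^+}\bra{\psi^+}_{ab}$, which has a unique ground state on the pair $ab$ and gap $\Delta$. External qubits $i,j$ are coupled weakly by EPR terms on the edges $(i,a)$ and $(j,b)$, and perhaps also on $(i,b)$ and $(j,a)$ with tunable weights. The first-order terms from these couplings reduce to identity plus possibly a direct $i$–$j$ term. At second order, the Schrieffer--Wolff expression $-\Pi V (H_0-E_0)^{-1} V\Pi$ produces an $i$–$j$ interaction. I will compute its $ZZ$ and $XX+YY$ coefficients explicitly: it is a $4$-qubit calculation. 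I expect the $XY$ coefficient, which comes from exchange processes through the mediators, to dominate, so that the induced direction is some $s'<0$. Adding this to a direct EPR edge $i$–$j$ with chosen weight should sweep out all of $-1\le s<0$.

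Reaching $s<-1$, which needs an antiferromagnetic $ZZ$ component, is where I expect the main difficulty. My first attempt would be to iterate the gadget, feeding the output direction $s'$ back in as the mediator coupling. This gives the renormalization-style flow suggested in the abstract, and I would check that the map $s\mapsto s'$ drives the direction monotonically to more negative values and covers the whole ray $s<0$.

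The obstacle is that the recursion depth multiplies the energy scales: each level needs a gap polynomially larger than the last. A constant number of levels suffices for each fixed $s$, so the reduction remains polynomial for fixed $s$, which is all the claim requires. If the flow instead converges to a fixed point short of a target $s$, I would replace the mediator pair by a longer EPR chain, which has a single-excitation structure that is tractable by Jordan--Wigner. In that chain the effective exchange can be tuned more freely.
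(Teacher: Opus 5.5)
\textbf{There is a genuine gap.} Your setup is the right one: the mediator pair held in $\ket{\psi^+}_{ab}$ by a heavy EPR term is, up to a local change of basis, the paper's edge-replacing gadget (\cref{lem:edge_replacing} with $a=1$, $b=-1$). But you never carry out the one computation that decides the matter, and your prediction of its outcome is wrong.

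At second order the induced $i$--$j$ term is proportional to $-\sum_{\sigma} c_\sigma^2 \langle \sigma_m \sigma_{m'}\rangle\, \sigma_i\sigma_j$. Here the $c_\sigma$ are the Pauli coefficients of the EPR term, all of magnitude $\tfrac14$, and $m,m'\in\{a,b\}$ are the mediator qubits that $i$ and $j$ couple to. Off-diagonal Pauli pairs have zero expectation.

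\emph{Different mediators.} In your $(i,a),(j,b)$ configuration, $\langle \sigma_a\sigma_b\rangle_{\psi^+} = (1,1,-1)$ for $\sigma=X,Y,Z$. The output is $-(XX+YY-ZZ) \simeq -\ket{\psi^+}\bra{\psi^+}$. That is exactly $s'=0$, not ``some $s'<0$'', so adding a direct EPR edge sweeps out nothing new.

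\emph{Same mediator.} Your optional couplings $(i,b),(j,a)$ create same-mediator paths through the products $(i,a)(j,a)$ and $(i,b)(j,b)$. There $\langle\sigma_a\sigma_a\rangle = 1$, and the output is $-(XX+YY+ZZ)\simeq +\ket{\psi^-}\bra{\psi^-}$. This is the ferromagnetic Heisenberg term, $K_3$ in the paper.

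That is the missing idea. For every $s<0$,
\begin{align*}
    J(s) = 1\cdot\left(-\ket{\psi^+}\bra{\psi^+}\right) + |s|\cdot \ket{\psi^-}\bra{\psi^-}
\end{align*}
is a positive combination of the two simulated terms, obtained from a single application of the gadget. In the paper's normal form, \Toy$(s)$ is $a'XX-YY-ZZ$ with $a'=\tfrac{1+s}{1-s}\in(-1,1)$. This is the combination of $K_2=XX-YY-ZZ$ and $K_3=-XX-YY-ZZ$ with weights $\tfrac{1+a'}{2}$ and $\tfrac{1-a'}{2}$, i.e.\ the $b'=-1$ case of the proof of \cref{thm:epr_line_simulates_box}.

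Consequently the regime you single out as the main difficulty, $s<-1$, is no harder than $-1\le s<0$. The tools you reserve for it are unnecessary, and they are also unsupported. You give no analysis showing that the iterated map $s\mapsto s'$ is monotone or reaches every target, and the spin-chain fallback is not worked out. As written, the proposal establishes neither range.

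A sign slip may have misled you. In the $\ket{\psi^\pm}$ labeling the EPR term already carries an \emph{antiferromagnetic} $+\tfrac14 ZZ$. The range $s<-1$ needs a \emph{ferromagnetic} $ZZ$ component, and that is exactly what a shared mediator supplies. Second-order exchange through a single qubit with maximally mixed reduced state yields $-\sum_\sigma (c_\sigma^2/E_\sigma)\,\sigma_i\sigma_j$, where $E_\sigma>0$ is the excitation energy of $\sigma_a\ket{\psi^+}_{ab}$. This is ferromagnetic in every Pauli direction.
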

Here, and throughout the text, by reducible we mean there exists a polynomial-time reduction. 

The computational phases in our toy model have a remarkable physical interpretation.
The interaction $J(s)$ projects into the unique \emph{singlet} state $\ket{\psi^-}$ and a \emph{triplet} state $\ket{\psi^+}$. The singlet state is the only Bell state that is \emph{antisymmetric} under interchanging qubits (i.e. $\ket{\psi^-}_{ij}=-\ket{\psi^-}_{ji}$). On the other hand, $\ket{\psi^+}$, along with the remaining Bell states $\ket{\phi^{\pm}} \defeq \tfrac{1}{\sqrt{2}} \of{\ket{00} \pm \ket{11}}$, are symmetric under interchange of qubits.

It turns out that we can predict the complexity of a problem in the toy model simply by the location of the singlet in the \emph{energy level ordering} of $J(s)$.  When the singlet is the unique ground state, \Toy$(s)$ is $\QMA$-complete; when the singlet is the first excited state, \Toy$(s)$ is $\StoqMA$-complete; and when the singlet is the second or third excited state, \Toy$(s)$ is reducible to EPR. We visualize these phases in \cref{fig:toy_levels_and_phases}. The phase transitions occur at \Toy$(0)$ and \Toy$(1)$, which correspond to the \EPR\ problem and the $\NP$-complete \textsc{MaxCut} problem, respectively.

\begin{figure}[ht]
    \centering
    \includegraphics[width=0.8\linewidth]{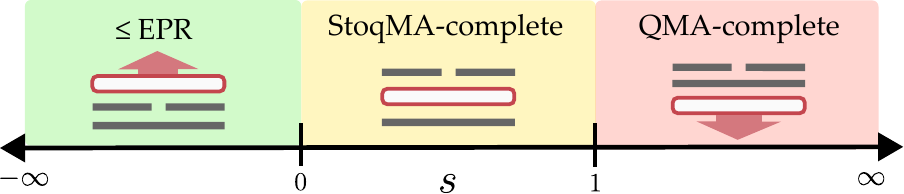}
    \caption{\small Computational phase diagram of \Toy$(s)$. There are three distinct phases, depending on the energy level ordering of the Bell states in the local term. The notation ``$\le$ \EPR'' denotes reducibility to the \EPR\ problem.
    The energy level diagrams depict the energies $(-1, 0, 0)$ of the triplet states with gray lines and the energy of the singlet $-s$ with a red box.
    }
    \label{fig:toy_levels_and_phases}
\end{figure}

\Toy$(0)$, or the \EPR\ problem, was first introduced by \cite{king2023}. Here $s=0$, so the local term $J(s)$ simply projects each edge into a triplet state (the original work chooses the ``EPR state'' $\ket{\phi^+}$). The complexity of the \EPR\ problem is unknown. It was raised as an open problem when restricted to bipartite graphs
\cite{gharibian2023, rayudu2025}, where it may be solvable in polynomial time by quantum Monte Carlo (QMC) \cite{rayudu2025} or quantum adiabatic methods \cite{wong2026}. Empirical evidence suggests that the QMC method may work on \emph{general} graphs.
Thus, we conjecture:
\begin{conjecture}
\label{conj:epr_easy}
    \Toy$(0)$ (the \EPR\ problem) is in $\BPP$.
\end{conjecture}

If this conjecture holds, the \EPR\ problem is the unique phase transition between easy (in $\BPP$) and hard (at least $\NP$-hard) problems in this model.
\begin{theorem}
\label{thm:classify_toy_model}
    Assuming \cref{conj:epr_easy}, the following holds:
    \begin{itemize}
        \item For $s \le 0$, \Toy$(s)$ is in $\BPP$.
        \item For $0 < s < 1$, \Toy$(s)$ is $\StoqMA$-complete.
        \item At $s = 1$, \Toy$(s)$ is $\NP$-complete.
        \item For $s > 1$, \Toy$(s)$ is $\QMA$-complete.
    \end{itemize}
\end{theorem}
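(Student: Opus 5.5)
This theorem is mostly an assembly of results already stated, so the plan is to treat the four regimes separately. For $s \le 0$: at $s=0$ the problem is exactly \EPR, which is in $\BPP$ by \cref{conj:epr_easy}. For $s<0$, \cref{claim:toy_EPR} gives a polynomial-time reduction from \Toy$(s)$ to \Toy$(0)$, and $\BPP$ is closed under such reductions. For $0<s<1$ the statement is exactly \cref{claim:toy_stoqma}. For $s>1$ we invoke the corollary of \cite[Theorem 2]{piddock2015} quoted above.

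That corollary requires checking that $J(s)$ fits the hypotheses of Piddock--Montanaro. First, $J(s)$ is symmetric under swapping the two qubits, since $J(s)$ commutes with SWAP and the $\ket{\psi^-}$ projector is invariant under it. Second, the weights are positive. Third, when $s>1$ the local term is not, up to a local basis change, stoquastic or classical. This holds because its unique ground state is then the singlet, which places it in the Heisenberg-antiferromagnet-like, $\QMA$-complete class of their classification. I would spell out this check briefly. Membership in $\QMA$ is standard for all $s$.

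The remaining case is $s=1$. There $J(1) = -\ket{\psi^-}\bra{\psi^-} - \ket{\psi^+}\bra{\psi^+} = -(\ket{01}\bra{01} + \ket{10}\bra{10})$. This operator is diagonal in the computational basis and equals $-\tfrac{1}{2}(I - Z_iZ_j)$. Hence $H = -\sum_{(i,j)\in E} w_{ij}\,\tfrac{1}{2}(1 - z_iz_j)$ is diagonal, and its ground energy is minus the weighted maximum cut. Membership in $\NP$ follows because the witness is a computational basis string, i.e.\ a cut, whose energy can be evaluated exactly. $\NP$-hardness follows because deciding whether weighted MaxCut exceeds a threshold is $\NP$-complete even with unit weights. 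With integer weights a promise gap of $1$ is automatic, so the gap $b-a \ge 1/\poly(n)$ is satisfied.

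The only potentially delicate step is matching the $s>1$ case to the precise hypotheses of \cite{piddock2015}, which are stated in terms of the Pauli decomposition of the interaction. Writing $J(s)$ in the Pauli basis gives
\[
J(s) = \tfrac{-s-1}{4}\, I + \tfrac{s-1}{4}\,(XX+YY) + \tfrac{s+1}{4}\, ZZ ,
\]
up to sign conventions. For $s>1$ the $XX+YY$ coefficient is positive, i.e.\ antiferromagnetic, and this is precisely the non-stoquastic regime in their theorem. With that verified, the four cases together give the theorem.
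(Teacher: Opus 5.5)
Your proposal is correct and is essentially the paper's own (implicit) argument. The theorem is assembled from \cref{conj:epr_easy} with \cref{claim:toy_EPR} for $s\le 0$, \cref{claim:toy_stoqma} for $0<s<1$, the diagonal MaxCut identification $J(1)=-\tfrac12(I-ZZ)$ at $s=1$, and Piddock--Montanaro for $s>1$. For that last case, shifting by $sI$ gives the Bell form $(\alpha,\beta,\gamma)=(s,s,s-1)$, so their $\QMA$ condition $\gamma>0$ is exactly $s>1$. Your Pauli decomposition is exact, so no sign caveat is needed; the operative Piddock--Montanaro criterion is the one you also state (the singlet is the unique ground state, i.e.\ the two smallest Pauli coefficients have positive sum), not ``non-stoquasticity'' as such.
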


Assuming \cref{conj:epr_easy}, an elegant picture emerges from the phases of \Toy$(s)$.
The closer the singlet is to the ground state of the local term $J(s)$, the harder the \Toy$(s)$ problem becomes. The singlet, being the unique antisymmetric Bell state, appears as the harbinger of hardness.
The transition from easy to hard problems occurs exactly at the \EPR\ problem.

This picture might seem to be a artifact of the definition of \Toy$(s)$. However, we now show that the same intuition holds for a larger family of $2$-\LHam\ problems, where all four Bell states have their own degree of freedom.

\subsubsection{Symmetric 2-local interactions}\label{sec:intro/results/general}
The larger family of Hamiltonians we consider was first introduced in \cite{piddock2015}. We begin by borrowing some of their notation:
\begin{definition}\label{def:SpHam}
    Given a set of $k$-local Hamiltonian terms $\mathcal{S}$, the \SpHam\ problem denotes the restriction of the \LHam\ problem to Hamiltonians of the form $H=\sum_{\ell} w_{\ell} H_{\ell}$, where each $H_{\ell} \in \mathcal{S}$ and $0 < w_{\ell}\le \poly(n)$, where $n$ is the number of qubits. We denote the corresponding Hamiltonians as $\mathcal{S}^+\!$-Hamiltonians.
\end{definition}
Our work considers singleton sets $S=\ofc{K}$ where the interaction term $K$ is $2$-local and symmetric under the interchange of its qubits (i.e. $K_{ij}=K_{ji}$). These restrictions are physically motivated, containing fundamental problems in both optimization and statistical mechanics:
\begin{itemize}
    \item $k=2$ is the smallest non-trivial value for the $k$-\LHam\ problem. Indeed, the general $2$-\LHam\ problem is $\QMA$-complete \cite{kempe2005}.
    \item A single, symmetric term $K$ governs many canonical models in statistical mechanics, such as the Ising, Heisenberg, and XXZ models. Furthermore, antisymmetric terms $K$ are already shown to yield $\QMA$-complete problems in \cite{piddock2015}.
    \item Positive weights allows us to distinguish the complexity of ferromagnetic (i.e. \textsc{MinCut}) and antiferromagnetic (i.e. \textsc{MaxCut}) interactions.
\end{itemize}
Building on \cite{cubitt2016}, we observe that any symmetric $2$-local term $K$ can without loss of generality be written in the form
\begin{align}
    K = \alpha \ket{\psi^+}\bra{\psi^+} + \beta \ket{\phi^+}\bra{\phi^+}+\gamma \ket{\phi^-}\bra{\phi^-}, \qquad \alpha \ge \beta \ge \gamma \label{eq:K_bell_form}.
\end{align}
This form makes it clear that our family of local Hamiltonians is parameterized by the real numbers $(\alpha, \beta, \gamma)$, which correspond to the energy of the three triplet states in $K$. The energy of the singlet is fixed to zero. In this convention, \cite{piddock2015} partially classifies the complexity of these Hamiltonians:
\begin{theorem}[{\cite[Theorem 2]{piddock2015}}]\label{thm:pm15}
The complexity of the \KpHam\ problem, where $K$ is defined in \cref{eq:K_bell_form}, depends on real numbers $\alpha\ge\beta\ge\gamma$:
    \begin{enumerate}
        \item 
        \begin{enumerate}
            \item If $\alpha\ge\beta\ge\gamma >0$, it is $\QMA$-complete.
            \item Otherwise it is in $\StoqMA$.
        \end{enumerate}  
        \item If $\alpha>\beta>0$ and $\gamma=0$, then it is $\StoqMA$-complete.
    \end{enumerate}
\end{theorem}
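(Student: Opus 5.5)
The plan is to split the argument into \emph{containment} (parts 1(a) and 1(b)) and \emph{hardness} (the QMA-hardness half of 1(a) and part 2). For containment I will first rewrite $K$ in the Pauli basis. Expanding the Bell projectors as $\ket{\psi^+}\bra{\psi^+}=\tfrac14(I+XX+YY-ZZ)$, $\ket{\phi^+}\bra{\phi^+}=\tfrac14(I+XX-YY+ZZ)$ and $\ket{\phi^-}\bra{\phi^-}=\tfrac14(I-XX+YY+ZZ)$ gives
\[
K=c\,I+A\,XX+B\,YY+C\,ZZ,
\]
with $A=\tfrac{\alpha+\beta-\gamma}{4}$, $B=\tfrac{\alpha-\beta+\gamma}{4}$ and $C=\tfrac{-\alpha+\beta+\gamma}{4}$. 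Membership in $\QMA$ is immediate because this is a $2$-local Hamiltonian.

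For 1(b), I will exhibit a single-qubit unitary $U$, applied identically to every qubit, that makes every term stoquastic. Using the same $U$ on every qubit is what makes this work on non-bipartite graphs. A term $x\,X X+y\,YY+z\,ZZ$ is stoquastic in the $Z$ basis exactly when its off-diagonal entries are nonpositive. Those entries are $x+y$ on $\ket{01}\leftrightarrow\ket{10}$ and $x-y$ on $\ket{00}\leftrightarrow\ket{11}$, so the condition is $x\le -|y|$. A global Clifford permutes the three axes, so it suffices to find an ordered pair of coefficients satisfying this condition.

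\begin{itemize}
\item If $\gamma\le 0$, take $(x,y)=(C,B)$. Then $C+B=\gamma/2\le 0$ and $C-B=(\beta-\alpha)/2\le 0$.
\item If $\gamma>0$, every pairwise sum $A+B=\alpha/2$, $A+C=\beta/2$, $B+C=\gamma/2$ is positive. This is consistent with the hard regime, since no choice of axes works.
\end{itemize}

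Once the Hamiltonian is stoquastic, the Bravyi et al. result places the problem in $\StoqMA$. Positive weights preserve stoquasticity, so this covers every instance.

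For the hardness results I will reduce from known complete problems using perturbative gadgets with mediator qubits.
\begin{itemize}
\item \textbf{QMA-hardness in 1(a).} I would start from the known $\QMA$-completeness of the antiferromagnetic Heisenberg model, or of the XY model with mixed signs (Cubitt--Montanaro).
\item \textbf{StoqMA-hardness in part 2.} I would start from a $\StoqMA$-complete problem such as the transverse-field Ising model (Bravyi--Hastings).
\end{itemize}

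The key tool is a second-order gadget. Couple two logical qubits $i,j$ to a mediator $m$ with heavy weight $\Delta$ on one copy of $K$ and weight $\sqrt{\Delta}$ on copies to $i$ and $j$. The strong term pins $m$ (or a pair of mediators) into the ground space of $K$. The effective Hamiltonian $-\tfrac{1}{\Delta}P V (H_0)^{-1} V P$ then produces an interaction between $i$ and $j$, plus possibly single-qubit terms, whose Pauli coefficients are quadratic in $(A,B,C)$ with signs that can differ from those of $K$. Combining direct terms $K_{ij}$ with such gadget-generated terms, with tunable positive weights, should span a cone of effective interactions. I then need to show that this cone contains a universal family: a Heisenberg or XY interaction of the required sign when $\gamma>0$, and an Ising coupling with an effective transverse field when $\gamma=0$, $\alpha>\beta>0$. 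The standard perturbation-gadget error bounds keep all weights polynomial.

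The main obstacle is the hardness direction, in particular generating the ``wrong-sign'' effective interactions using only positive weights. I expect to need the degenerate ground space of $K$ when $\gamma$ or $\beta$ vanishes, and a careful choice of mediator configuration in the boundary case $\gamma=0$, in order to produce the transverse field needed for $\StoqMA$-hardness.
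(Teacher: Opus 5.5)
Your containment half is correct and self-contained. The Pauli coefficients $A,B,C$ are right, and $x\le -|y|$ is exactly the stoquasticity condition for $xXX+yYY$ in the $Z$ basis. For $\gamma\le 0$ the pair $(C,B)$ gives $C+B=\gamma/2\le 0$ and $C-B=(\beta-\alpha)/2\le 0$. So a transversal Clifford makes every positively weighted instance stoquastic, and \cite{bravyi2006} gives membership in $\StoqMA$; this also supplies the upper bound in part 2. The paper imports this theorem from \cite{piddock2015} without proof, so the only question is whether your argument stands on its own. The hardness half does not: its decisive step (``this cone should contain a universal family'') is never carried out, and you yourself flag wrong-sign interactions as unresolved. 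Your first suggested source, the positively weighted antiferromagnetic Heisenberg model, is the point $\alpha=\beta=\gamma$ of part 1(a) itself, so it cannot be assumed. The legitimate input is the arbitrary-sign classification of \cite{cubitt2016}.

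The missing idea for 1(a) is the antisymmetry of the singlet. When $\gamma>0$, a heavily weighted mediator pair $(m_1,m_2)$ has unique ground state $\ket{\psi^-}$. By \cref{eq:pauli_on_bell}, $\sigma^a_{m_1}\ket{\psi^-}=-\sigma^a_{m_2}\ket{\psi^-}$ for every Pauli $\sigma^a$, and $X,Y,Z$ send $\ket{\psi^-}$ to the triplets of energies $\gamma,\beta,\alpha$. Hence $(V_{\text{main}})_{--}$ vanishes up to identity. Coupling $i$ and $j$ by $K$ to the \emph{same} mediator qubit produces the second-order term $-2\tilde K_{ij}$, while coupling them to \emph{different} mediator qubits produces $+2\tilde K_{ij}$, where $\tilde K=\tfrac{A^2}{\gamma}XX+\tfrac{B^2}{\beta}YY+\tfrac{C^2}{\alpha}ZZ$. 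This is the calculation of \cref{lem:edge_replacing} with $\ket{\phi^-}$ replaced by $\ket{\psi^-}$, except that now all three signs flip together; it is the source of negative weights. Even so, $\tilde K$ is in general not proportional to $K$. Positive combinations of $K$ and $-\tilde K$ are therefore not yet arbitrary-sign weights of a single interaction, and a further argument is needed before \cite{cubitt2016} applies.

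For part 2 your mediator gadget does not work as described. With $\gamma=0$ the pair's ground space is $\mathrm{span}\{\ket{\psi^-},\ket{\phi^-}\}$, and $X_{m}$ acts nontrivially on it. So $(V_{\text{main}})_{--}\neq 0$, \cref{lem:2nd_order_pert} does not apply, and the pair carries a residual qubit rather than acting as a mediator. That degenerate pair should instead serve as the logical qubit, with $\ket{0^{(L)}}=\ket{\psi^-}$ and $\ket{1^{(L)}}=\ket{\phi^-}$. On it $X_1$ and $X_2$ act as $-X^{(L)}$ and $+X^{(L)}$, so first order already gives logical $XX$ couplings of both signs. $Y$ and $Z$ leave the code space, so any transverse field has to come from second-order processes. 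Compare the proof of \cref{lem:xx_and_-yy-zz_to_tim}, whose gadget the paper adapts from \cite[Section 3.2]{piddock2015}.
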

We call this a partial result because problems in case $1(b)$ that are not covered by case $2$ are only shown to be contained in $\StoqMA$.
We first extend \cref{thm:pm15} by providing a matching lower bound on a subset of this region:
\begin{theorem}\label{thm:stoqma}
    The \KpHam\ problem, where $K$ is defined in \cref{eq:K_bell_form} and $\alpha \ge \beta>0 >\gamma$, is $\StoqMA$-hard, and thus by \cref{thm:pm15} is $\StoqMA$-complete. 
\end{theorem}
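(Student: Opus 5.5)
The plan is to reduce from a problem already known to be $\StoqMA$-hard, namely the case $\alpha'>\beta'>0=\gamma'$ covered by \cref{thm:pm15}(2). We will use perturbative gadgets to show that $\{K\}^+$-Hamiltonians with $\alpha\ge\beta>0>\gamma$ can simulate $\{K'\}^+$-Hamiltonians for some $K'$ in that region, up to an energy shift and $1/\poly(n)$ error.

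Since $\StoqMA$-hardness in \cite{piddock2015} holds for any $K'$ with $\alpha'>\beta'>0=\gamma'$, it is enough to produce one such effective term. We may also choose to target instead the transverse-field Ising form that \cite{piddock2015} uses internally. In the Bell basis the relevant effective term is $K'=\alpha'\ket{\psi^+}\bra{\psi^+}+\beta'\ket{\phi^+}\bra{\phi^+}$, where the singlet and $\ket{\phi^-}$ are degenerate at energy zero.

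The gadget will be built as follows. Each logical edge $(i,j)$ is replaced by a small mediator subsystem: for instance, a mediator qubit $m$ coupled to $i$ and $j$ with weight $\Delta$, plus a weak direct term $K_{ij}$ with weight $O(1)$. The heavy part $\Delta(K_{im}+K_{jm})$, or a heavier term on a pair of mediators, must have a ground space that is degenerate on the logical qubits. Because $K$ is symmetric and real in the computational basis, $K$ commutes with $X\otimes X$ and $Z\otimes Z$, and with swap. The effective Hamiltonian from second-order perturbation theory, $-V_{-+}\Delta^{-1}V_{+-}$, is therefore again a symmetric two-qubit term of the form \cref{eq:K_bell_form}, up to local Paulis and identity shifts. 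So the whole gadget induces a map $(\alpha,\beta,\gamma)\mapsto(\alpha',\beta',\gamma')$ on the parameter space. This is the ``renormalization-group–like flow'' mentioned in the abstract.

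The concrete steps, in order, are:
\begin{enumerate}
\item Compute the gadget's effective parameters as functions of $(\alpha,\beta,\gamma)$ and of the relative weights of the direct and mediated terms.
\item Tune the direct-term weight, which is a free positive parameter, so that the first-order and second-order contributions cancel $\gamma'$ exactly while keeping $\alpha'>\beta'>0$.
\item Verify, using the standard gadget theorems (Bravyi–Hastings, or the simulation framework of \cite{cubitt2016}), that the resulting $\{K\}^+$ instance has ground energy within $1/\poly(n)$ of that of the target $\{K'\}^+$ instance.
\end{enumerate}
Cancellation in step 2 is plausible by continuity. Adding a positive multiple of $K$ shifts $(\alpha',\beta',\gamma')$ by a vector with $\gamma<0$, while the mediated second-order term contributes a negative-semidefinite correction. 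So we can sweep an intermediate-value parameter from a setting with $\gamma'<0$ to one with $\gamma'>0$, and stop at $\gamma'=0$.

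The main obstacle is step 2: exact tuning and preservation of the ordering. Exact cancellation needs weights computable in polynomial time with $1/\poly$ precision. Residual $\gamma'$ errors of size $1/\poly$ must be absorbed into the promise gap, which requires that the $\StoqMA$-hardness of the target family tolerate such perturbations. We must also ensure that $\alpha'>\beta'>0$ survives. If a single mediator cannot achieve this for all $\alpha\ge\beta>0>\gamma$, in particular near $\beta\to0$ or $\alpha=\beta$, the plan is to iterate the gadget a constant number of times, following the flow toward the $\gamma=0$ plane. Failing that, we would use a longer mediator chain analyzed via Jordan–Wigner. In every case, polynomial blow-up in the weights is kept under control by using only $O(1)$ levels of recursion.
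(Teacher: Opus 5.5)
Your proposal has a genuine gap at its central step. The intermediate-value sweep in step 2 needs the mediated contribution alone, with the direct weight set to zero, to have $\gamma'_{\text{med}}>0$ in the singlet-at-zero normalization of \cref{eq:K_bell_form}. This is because adding the direct term only pushes $\gamma'$ further below zero.

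Negative semidefiniteness of $-V_{-+}H_0^{-1}V_{+-}$ says nothing about the sign of the triplet energies relative to the singlet. For example, $-\ket{\psi^-}\bra{\psi^-}$ is negative semidefinite yet has $\gamma=+1$.

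In fact the sign goes the wrong way. The relative triplet energies of $pXX+qYY+rZZ$ are twice the pairwise sums, so $\gamma'$ is at most twice the $YY$ coefficient plus twice the $ZZ$ coefficient. For the natural two-ancilla mediator of \cref{lem:edge_replacing}, that sum is $b-1<0$ for $K_1$ and $-b'-1<0$ for $K_2,K_3$. Hence every positive combination has $\gamma'<0$ strictly.

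More fundamentally, the zero-direct-weight construction is itself a $\{K\}^+$ gadget. If it produced $\gamma'_{\text{med}}>0$, it would reduce a $\QMA$-complete problem (case 1(a) of \cref{thm:pm15}) to one in $\StoqMA$ (case 1(b)). So no such crossing exists unless $\QMA\subseteq\StoqMA$. The effective terms stay strictly inside the stoquastic region. Any fixed gadget, or any constant number of recursions, leaves the normalized $\gamma'$ a constant distance below $0$: with $c=-1$, after $k$ recursions $a\approx a^{2^k}$ is still a constant.

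What is missing is a way to approach the $\gamma=0$ boundary to within $1/\poly(n)$. This requires gadgets whose size grows with $n$, and it is the real content of the paper. The paper proceeds as follows.
\begin{enumerate}
\item It flows the whole region (normal form $aXX+bYY-ZZ$, $a>1$, $-1\le b<1$) onto the lines $b=0$ and $b=-1$ (\cref{lem:above_xy_to_below_xy,lem:below_xy_to_xy}). The latter uses a 5-node vertex gadget with an explicit perturbation bound.
\item On $b=0$, an XY chain of length $\Theta(\log n)$, solved by Jordan--Wigner and Bogoliubov, yields $\poly(n)XX-ZZ$.
\item On $b=-1$, where the open XXZ chain has no usable exact solution, the complete bipartite gadget $K_{L,L-1}$ yields $\poly(n)XX-YY-ZZ$.
\end{enumerate}

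The boundary point reached this way is $XX$, i.e.\ MaxCut with $\alpha=\beta$, which is only $\NP$-complete. It is not a $\StoqMA$-hard term with $\alpha'>\beta'>0=\gamma'$. So a single effective term does not suffice. The paper applies \cref{lem:edge_replacing} once more to obtain $XX$ and $-YY$ (respectively $-YY-ZZ$) as separate terms. It then builds the transverse-field Ising model via \cite{piddock2015} (respectively \cref{lem:xx_and_-yy-zz_to_tim}).

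Your fallback of a ``longer mediator chain'' points in this direction. Without these ingredients, however, the proposal does not give a reduction.
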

We complement this hardness result with an ``easiness'' result. Consider the local term 
\begin{align}\label{eq:deT_k_EPR}
    K_{\text{EPR}}^b \defeq  (b+1)\ket{\psi^+}\!\bra{\psi^+}
        + (b-1)\ket{\phi^-}\!\bra{\phi^-}, \qquad -1 \le b \le 1 .
\end{align}
This is a generalization of the local term in the \EPR\ problem, which corresponds (up to a change of basis) to the case $b=-1$.
We refer to the \KpHam\ problem restricted to $K_{\text{EPR}}^b$ for any $b$ as the \EPRs\ problem, or colloquially as the \emph{augmented} \EPR\ problem. 

In our larger family, we prove that all symmetric \KpHam\ problems not classified by \cref{thm:pm15,thm:stoqma} 
reduce to the augmented EPR problem. 
\begin{theorem}\label{thm:epr_line_simulates_box}
    The \KpHam\ problem, where $K$ is defined in \cref{eq:K_bell_form} and $0 \ge \beta \ge \gamma$, is reducible to the \EPRs\ problem.
\end{theorem}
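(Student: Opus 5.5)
We want to reduce the \KpHam\ problem with $K = \alpha\ket{\psi^+}\!\bra{\psi^+} + \beta\ket{\phi^+}\!\bra{\phi^+} + \gamma\ket{\phi^-}\!\bra{\phi^-}$ and $0 \ge \beta \ge \gamma$ to some $K_{\text{EPR}}^b$. We do this by simulating each weighted copy of $K$ on an edge $(i,j)$ with a perturbative gadget built only from $K_{\text{EPR}}^b$ terms with positive weights. The simulation must have polynomially bounded weights and inverse-polynomial error, so that ground energies are preserved up to the promise gap.

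\textbf{Step 1: normalize.} Adding multiples of the identity changes only the energy offset, and rescaling by a positive factor is allowed. So we may work with the spectrum of $K$ modulo shift and positive scale.

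The box region splits according to the sign of $\alpha$.
\begin{itemize}
\item If $\alpha \le 0$, then all three triplet energies are $\le 0 =$ singlet energy.
\item If $\alpha > 0$, the singlet sits strictly between $\psi^+$ and the $\phi$ states.
\end{itemize}
On the boundary $\beta=\gamma$ with $\alpha \ge 0$, $K$ is already of the form $K^b_{\text{EPR}}$ up to a local change of basis. A local unitary on each qubit permutes the triplets, and we use it to match which triplet carries the top energy.

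\textbf{Step 2: direct matching for a two-parameter slice.} The family $K^b_{\text{EPR}}$, $b\in[-1,1]$, shifted and scaled, realizes every spectrum in which the singlet energy lies in a fixed affine position relative to the triplet energies. Explicitly, the triplets are $(b+1, 0, b-1)$ and the singlet is $0$, so the middle triplet and the singlet are degenerate. The singlet therefore always sits at the middle triplet level. This one-parameter family up to affine maps is the ``EPR line''. The general $K$ in the box has two free ratios, so we must generate an extra degree of freedom: the singlet energy relative to the triplets.

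\textbf{Step 3: gadget moving the singlet.} We connect $i$ and $j$ through mediator qubits using strong $K^{b'}_{\text{EPR}}$ edges (weight $\Delta$) plus weak edges. The mediator's ground space under the strong part should be a fixed state, or a degenerate space, invariant under qubit swap. Second-order perturbation theory then gives an effective interaction on $(i,j)$ that is symmetric. Because the gadget is built from swap-symmetric pieces and the overall symmetry is $SU(2)$-covariant in the appropriate rotated basis, the effective term decomposes into the same Bell projectors. Its singlet energy, however, is shifted differently from the triplet energies, since second-order terms typically act as $\sum \sigma^a\otimes\sigma^a$-type interactions.

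Combining the first-order direct edge $K^{b}_{\text{EPR}}$ with this second-order correction at tunable relative weights, we aim to sweep the singlet energy downward relative to the middle triplet. Downward is the direction needed when $\beta,\gamma\le 0$ sit below the singlet. We then solve for $b$, $b'$ and the weights so as to hit the target $(\alpha,\beta,\gamma)$ up to affine transformation. Error bounds follow from standard perturbative-gadget simulation theorems, e.g.\ those of \cite{cubitt2016}, with polynomial $\Delta$.

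\textbf{Main obstacle.} The hard step is Step 3: ensuring the second-order effective term pushes the singlet in the right direction (lowering it relative to the triplets, staying out of the $\StoqMA$/$\QMA$ regions) for every point of the box. This has to hold using only positive weights of a single EPR-line term. A naive recursive gadget may only move the parameters by a bounded amount per level. Iterating it gives a flow toward the target but with exponential blowup in weights. If a single-level gadget cannot cover the entire box, I would try a longer mediator chain whose low-energy sector can be computed exactly. This could be an XY-type chain analyzable via Jordan--Wigner, where the effective coupling's singlet shift can be made arbitrarily large with polynomial overhead, so that a single polynomial-size gadget suffices.
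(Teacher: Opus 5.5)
Your Step 3 is the entire content of the theorem, and it is stated only as an aim. No mediator Hamiltonian is specified, no effective term is computed, and nothing shows that the reachable terms cover the box.

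\emph{The direction is backwards.} As you note in Step 2, $K^b_{\text{EPR}}$ is exactly the locus $\beta=0$, where the singlet is degenerate with the middle triplet. In the box $0\ge\beta\ge\gamma$ the middle triplet lies at or below the singlet. So, starting from the EPR line, you must \emph{raise} the singlet relative to the middle triplet. Lowering it, as you propose in both Step 3 and the ``main obstacle'' paragraph, moves into $\alpha\ge\beta>0>\gamma$. That region is $\StoqMA$-complete by \cref{thm:stoqma}, so any gadget doing this would make \EPRs\ $\StoqMA$-hard.

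\emph{Step 1 misplaces the EPR line.} In the normalization $aXX+bYY-ZZ$ the box is $-1\le b\le a\le 1$. The EPR line is the edge $a=1$ (i.e.\ $\beta=0$), whereas $\beta=\gamma$ is the diagonal $a=b$. For example, $\tfrac13XX+\tfrac13YY-ZZ$ lies on that diagonal but has no triplet degenerate with the singlet.

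\emph{The symmetry justification is wrong.} Bell-diagonality of the second-order term does not come from $SU(2)$, which $K^b_{\text{EPR}}$ does not have. It comes from each single-qubit Pauli sending the ancilla ground state to a distinct Bell eigenstate of $H_0$.

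The missing idea is a single application of the edge-replacing gadget of \cref{lem:edge_replacing} at $a=1$.
\begin{itemize}
\item Put a strong $K^b_{\text{EPR}}$ on an ancilla pair $(y,z)$. For $-1\le b<1$ its unique ground state is $\ket{\phi^-}$.
\item Coupling $i$ to $y$ and $j$ to $z$ gives, at second order, $K_2=XX-f(b)YY-ZZ$ with $f(b)=b^2(1-b)/2$. Note that the $XX$ coefficient stays exactly $1$.
\item Coupling both $i$ and $j$ to the same ancilla $y$ flips the sign of the $XX$ part, because $X_y$ and $X_z$ act on $\ket{\phi^-}$ with opposite signs. This gives $K_3=-XX-f(b)YY-ZZ$, a point on the opposite edge $a=-1$ of the square.
\item A weak direct edge contributes $K_1=K^b_{\text{EPR}}$ at first order.
\end{itemize}
All three terms are available simultaneously with arbitrary positive weights, so every convex combination is simulated. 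For a target $a'XX+b'YY-ZZ$ in the box:
\begin{itemize}
\item If $b'\le 0$, pick $b\in[-1,0]$ with $f(b)=-b'$ by the intermediate value theorem. Then put weights $\tfrac{1+a'}{2}$ and $\tfrac{1-a'}{2}$ on $K_2$ and $K_3$.
\item If $0<b'<1$, take $b=\sqrt{b'}$ and check that the resulting weights on $K_1,K_2,K_3$ are nonnegative. This reduces to $-(\sqrt{b'}-1)^3(\sqrt{b'}+2)>0$.
\item If $b'=1$, then $a'=1$ and the target is $K^1_{\text{EPR}}$ itself.
\end{itemize}
This is one gadget level with constant overhead, so no iteration, weight blowup, or Jordan--Wigner chain is needed.
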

Our intuition for the toy model transfers to this more general family of local Hamiltonians. 
Again, the complexity of the \KpHam\ problem is concisely determined by energy-level ordering of the local term $K$.
The closer the singlet is to the ground state of $K$, the harder the problem becomes.
We visualize this in \cref{fig:general_levels_and_phases}, assuming no degeneracies.
When the singlet is the unique ground state, the problem is $\QMA$-complete; when the singlet is the first excited state, the problem is $\StoqMA$-complete; and when the singlet is the second or third excited state, the problem is reducible to \EPRs. 

\begin{figure}[ht]
    \centering
    \includegraphics[width=0.9\linewidth]{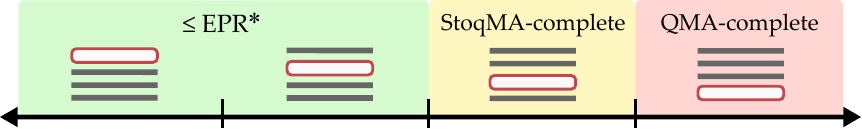}
    \caption{\small Cartoon of the computational phase diagram in our larger family of Hamiltonians. There are again three distinct phases, depending on the energy level ordering of the Bell states in the local term.}
    \label{fig:general_levels_and_phases}
\end{figure}

Inspired by the behavior in our toy model, we conjecture that the \EPRs\ problem is easy:
\begin{conjecture}\label{conj:bpp}
    The \EPRs\ problem is in $\BPP$.
\end{conjecture}
We have some intuition for this conjecture: when $b=1$, the problem is trivially in $\P$ \cite{piddock2015}. When $b=0$, the problem corresponds to the ferromagnetic XY model, which was shown to be in $\BPP$ by Bravyi and Gosset \cite{bravyi2017}. As mentioned, when $b=-1$, the problem corresponds to the \EPR\ problem, which recent work \cite{takahashi2024, rayudu2025} has suggested may be in $\BPP$. 

\cref{conj:bpp} would imply a complete classification of the \KpHam\ problem, and the existence of exactly two computational phase transitions in this family:
\begin{theorem}\label{thm:complete_classification}
    Assuming \cref{conj:bpp}, the complexity of the \KpHam\ problem, where $K$ is defined in \cref{eq:K_bell_form}, depends on real numbers $\alpha \ge \beta \ge \gamma$:
    \begin{enumerate}
        \item If $\alpha \ge \beta \ge \gamma > 0$, it is $\QMA$-complete.
        \item If $\alpha=\beta > 0 =\gamma$, it is $\NP$-complete.
        \item If $\alpha>\beta>0=\gamma$ or $\alpha\ge\beta>0>\gamma$, it is $\StoqMA$-complete.
        \item Else, it is in $\BPP$.
    \end{enumerate}
\end{theorem}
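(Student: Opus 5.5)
This is a case analysis: I will partition the region $\alpha\ge\beta\ge\gamma$ into the four listed cases and show that each point is covered by \cref{thm:pm15}, \cref{thm:stoqma}, \cref{thm:epr_line_simulates_box}, or a classical argument. The main bookkeeping is to check that the four cases are exhaustive. If $\beta>0$, then either $\gamma>0$ (case 1), $\gamma=0$ (cases 2 and 3, depending on whether $\alpha=\beta$), or $\gamma<0$ (case 3). If $\beta\le 0$, then $\gamma\le\beta\le 0$, and this is the region $0\ge\beta\ge\gamma$ of \cref{thm:epr_line_simulates_box}, which is case 4.

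\textbf{Cases 1 and 3.} Case 1 is exactly \cref{thm:pm15}(1a). In case 3, the subcase $\alpha>\beta>0=\gamma$ is \cref{thm:pm15}(2), and the subcase $\alpha\ge\beta>0>\gamma$ is \cref{thm:stoqma}.

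\textbf{Case 4.} Here \cref{thm:epr_line_simulates_box} gives a polynomial-time reduction to \EPRs, and \cref{conj:bpp} places \EPRs\ in $\BPP$. Since $\BPP$ is closed under polynomial-time many-one reductions, case 4 lies in $\BPP$. This includes the degenerate points with $\alpha\le 0$ and the all-zero term, for which the problem is trivial anyway.

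\textbf{Case 2.} This is the only case needing a direct argument. With $\alpha=\beta>0=\gamma$, rescale so that $\alpha=\beta=1$, giving $K=\ket{\psi^+}\bra{\psi^+}+\ket{\phi^+}\bra{\phi^+}$. I will change basis so that $K$ becomes diagonal and Ising-like. The state $\ket{\phi^-}$ has eigenvalue $0$ and the singlet $\ket{\psi^-}$ has eigenvalue $0$. In terms of Pauli operators,
\[
K=\tfrac14\big(2\,I+2\,X\otimes X-Y\otimes Y+Y\otimes Y\big)\ \text{(schematically)},
\]
and I expect a direct computation to give $K=\tfrac12(I+X\otimes X)$. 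One can verify this on the Bell basis: $X\otimes X$ takes the value $+1$ on $\psi^+$ and $\phi^+$, and $-1$ on $\psi^-$ and $\phi^-$. So $K=\tfrac12(I+X\otimes X)$ exactly, which is diagonal in the Hadamard basis. The \KpHam\ problem is then weighted \textsc{MaxCut} with positive weights, up to an affine shift.

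Membership in $\NP$ follows because a diagonal Hamiltonian has a computational-basis ground state. The witness is that basis string, whose energy can be evaluated classically. Hardness follows from the $\NP$-hardness of weighted \textsc{MaxCut}: for an unweighted graph with unit weights, ground energy $\le a$ is equivalent to having a cut of size $\ge |E|-a$. The promise gap is inverse-polynomial because the energies are integers (or half-integers) when the weights are integral.

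The main obstacle is only the case-2 identification of $K$ with $\tfrac12(I+X\otimes X)$ and the check that the promise version matches $\NP$. Everything else is a direct appeal to earlier results, together with closure of $\BPP$ under reductions.
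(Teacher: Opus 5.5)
Your proposal is correct and follows the paper's intended argument. The theorem comes from combining \cref{thm:pm15} (case 1, and the $\gamma=0$ part of case 3), \cref{thm:stoqma} (the $\gamma<0$ part of case 3), and \cref{thm:epr_line_simulates_box} with \cref{conj:bpp} for the leftover region $\beta\le 0$, while the point $\alpha=\beta>0=\gamma$ is positively weighted \textsc{MaxCut} via $K\propto\tfrac12(I+X\otimes X)$. Your exhaustiveness check and your explicit $\NP$ membership and hardness argument for case 2 supply details the paper leaves implicit.
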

As before, the two phase transitions correspond to crossings in the energy level diagram of the local term $K$.
One phase transition again occurs at the $\NP$-complete \textsc{MaxCut} problem. The other phase transition occurs at the augmented EPR problem. Assuming \cref{conj:bpp}, this phase transition at \EPRs\ is the boundary between easy and hard local Hamiltonians.

\subsection{Techniques}\label{sec:intro/techniques}
We prove our results via reductions between different \SpHam\ problems. These reductions are enabled by \emph{perturbative gadgets}, which use local terms $\mathcal{S}$ to simulate new \emph{effective terms} $\mathcal{S'}$.
\paragraph{Gadgets.}
We use two standard kinds of perturbative gadgets \cite{kempe2005, bravyi2014, piddock2015, cubitt2016}. We refer to the first kind as \emph{vertex-replacing} gadgets and briefly describe how they work. Suppose we start with an instance of the \KpHam\ problem acting on a graph $G$.  We first select a small \emph{gadget graph} $\wt{G}=([\wt{n}],\wt{E},\{\wt{w}_{ij}\})$, and replace each vertex of $G$ with a copy of $\wt{G}$. We then apply the local interaction $K$ to the edges of each copy of $\wt{G}$, weighted by a large factor. We choose $\wt{G}$ so that the resulting Hamiltonian  $\wt{H} = \sum_{(i,j)\in \wt{E}} \wt{w}_{ij} K_{ij}$  
satisfies two key properties: 
\begin{enumerate}
    \item $\wt{H}$ has exactly two degenerate ground states.  
    \item There is a spectral gap separating these ground states from the excited states.  
\end{enumerate}
These properties let us define a \emph{logical} qubit out of the \emph{physical} qubits in $\wt{G}$. The two ground states, denoted $\ket{0^{(L)}}$ and $\ket{1^{(L)}}$, are isolated from the rest of the spectrum by the gap. The large weights ensure that all other states have much higher energy, so we can safely treat $\ket{0^{(L)}}$ and $\ket{1^{(L)}}$ as a two-level system.

\begin{figure}[ht]
    \centering
    \includegraphics[width=0.45\linewidth]{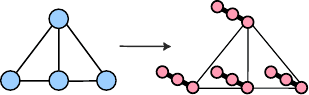}
    \caption{\small $P_3$: A simple vertex-replacing perturbative gadget. Bold lines denote strong interactions, and thin lines denote weak interactions. Each chain of three red qubits defines a single logical blue qubit. The logical qubits carry a new local interaction term, but the effective interaction graph is unchanged.}
    \label{fig:3_path_gadget}
\end{figure}

Once we have logical qubits, we can make them interact. Since we only have access to the original interaction $K$, we connect the physical qubits of different logical qubits. As an example, we let the $P_3$ \emph{gadget} denote the vertex-replacing gadget where each logical qubit is given by a chain of three physical qubits. In \cref{fig:3_path_gadget}, we connect the last qubit of neighboring chains with the local interaction $K$. For any such connection, we use first-order perturbation theory to compute the \emph{effective interaction} $K'$ produced by this physical interaction $K$, projected into the ground space of the two logical qubits.

Our other kind of gadget is an \emph{edge-replacing gadget}. This gadget instead replaces each edge of the original graph with a heavily weighted interaction term on a new pair of ancilla qubits. We then use perturbation theory to analyze the effective interaction between physical qubits in the ground space of the ancillae. 

Both gadgets create an effective Hamiltonian $H_{\text{target}}$ described by the interaction $K'$ on the edges of $G$. This is done by projecting a physical Hamiltonian $H_{\text{sim}}$ described by the interaction $K$ on a graph larger than $G$, into some logical ground space. We then say that $H_{\text{sim}}$ \emph{simulates} $H_{\text{target}}$. If this works for any $G$, we say the interaction $K$ \emph{simulates} the interaction $K'$. Our gadgets introduce at most a polynomial increase in the number of local terms, so they constitute a polynomial-time many-one reduction from the \pHam{K'} problem to the \pHam{K} problem. 

\paragraph{Flows.}
We prove \cref{thm:stoqma} by repeatedly applying our gadgets. First, \cite{piddock2015, cubitt2016} show that our symmetric $2$-local terms $K$ can be written  without loss of generality in the \emph{Pauli} basis
\begin{align}
    K = a X\otimes X + b Y\otimes Y + c Z\otimes Z, \qquad a\ge b \ge c. \label{eq:K_pauli_form}
\end{align} 
A bijection between the Bell and Pauli forms of $K$ is given in \cref{apx:bell_pauli_relations/bell_pauli_mapping}. The results of \cite[Theorem 2]{piddock2015} show that if $c$ is nonnegative, the problem is already classified as either $\QMA$ or $\NP$-complete. Hence, we assume the $c$ is negative. Since the complexity is invariant under rescaling by constant factors, we always rescale such that $c=-1$.

Now, given any local interaction term $K=aXX+bYY-ZZ$, the $P_3$ gadget simulates an effective interaction term $K'=a'XX+b'YY-ZZ$. This defines a map in  the two-dimensional space over the coefficients $(a,b)$. We plot this map from $K$ to $K'$ as a flow diagram in \cref{fig:3_path_flows}. Each arrow from $K$ to $K'$ means that the $P_3$ gadget \emph{simulates} $K'$ using local terms $K$. 

\begin{figure}[ht]
    \centering
    \begin{subfigure}{.65\textwidth}
      \centering
      \includegraphics[width=.9\linewidth]{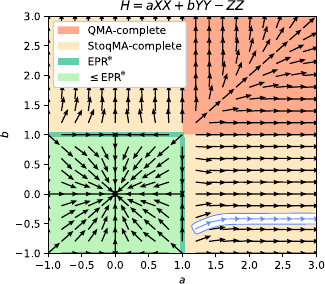}
      \caption{\small }
      \label{fig:3_path_flows}
    \end{subfigure}
    \begin{subfigure}{.25\textwidth}
      \centering
      \includegraphics[width=.65\linewidth]{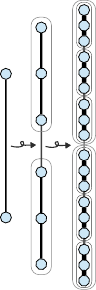}
      \vspace{5mm}
      \hspace{5mm}
      \caption{\small }
      \label{fig:recursed_gadget}
    \end{subfigure}
    \caption{\small  (a) Map from interaction terms $K$ to $K'$ generated by the $P_3$ gadget, viewed as a flow diagram. The base and head of each arrow represent the coefficients $(a,b,-1)$ of $K$ and $K'$, respectively. The arrows are rescaled to have unit norm for ease of visibility. The regions of the plane are colored by the complexity results of \cref{thm:pm15,thm:stoqma,thm:epr_line_simulates_box}. A ``flow'' is highlighted in blue, implying a recursive reduction from the \pHam{XX} problem ($a \to \infty$) to any point along the flow. (b) Recursing the $P_3$ gadget twice on a single edge. At each step of the recursion, a logical qubit is replaced by three physical qubits, and the interactions within a logical qubit are given a large weight. 
    \label{fig:3_path}}
\end{figure}

\cref{fig:3_path_flows} captures the intuition for our results.  We assumed $a\ge b$, so consider the subset of the yellow region defined by $a>1$ and $-1 \le b \le 1$.  We see that in this region the gadget takes a local term described by $(a,b,-1)$ to a larger $a$ and smaller $|b|$. Repeatedly applying this gadget produces a  ``flow'' toward the limit $a \to \infty$, $b=0$. This is exactly the local term of the $\ofc{XX}^+\!$–Hamiltonian (equivalent to $\textsc{MaxCut}$), meaning the entire region inherits $\NP$-hardness. 

This recursive viewpoint is reminiscent of the \emph{block-spin renormalization group} in many-body physics, where one progressively coarse-grains a system while preserving its large-scale behavior. This does not yield a proof however: for the recursions to work we end up either with quasi-polynomial edge weights or a quasi-polynomial promise gap.

\paragraph{Polynomial gaps.}
To overcome this issue, we replace the recursive applications of the $P_3$ gadget with a single application of a gadget whose size grows with $n$. For example, consider the line defined by $a>1$, $b=0$. This corresponds to the antiferromagnetic XY model in statistical mechanics. From \cref{fig:3_path_flows}, we see that this model flows out to $a\rightarrow \infty$ along the $b=0$ line. We replace the $P_3$ gadget with a long spin chain of length $L=\Omega\of{\log{n}}$. Then, we are able to flow all the way to the local term $\poly(n) XX-ZZ$ in a single step. This already simulates $+XX$ (\textsc{MaxCut}) to inverse polynomial error. We show that it can also simulate the $\StoqMA$-complete transverse-field Ising model (TIM). The key challenge is analyzing the length $\Omega\of{\log{n}}$ XY spin chain. For this we must diagonalize the system using the Bogoliubov transformation \cite{bogoljubov1958} and a mapping to a free-fermion system.

We believe this technique works for the rest of the $\StoqMA$-region. This would require analyzing long spin chains for the fully anisotropic XYZ model. Under \textit{periodic} boundary conditions, one can use the algebraic Bethe ansatz \cite{bethe1931} and a reduction to the classical $8$-vertex model. However, a rigorous solution under open boundary conditions is not known to us.
Instead, we complete the proof with a series of gadgets, showing that any point in this $\StoqMA$-complete region simulates either the antiferromagnetic XY model above, or the antiferromagnetic XXZ model.
It remains to show this latter model is $\StoqMA$-complete.

Like the XYZ model, we do not know of a rigorous solution for the XXZ spin chain under open boundary conditions. We instead analyze a gadget based on a complete bipartite graph $K_{L,L-1}$ with $L = \Theta(\log{n})$. The symmetry of the complete bipartite graph greatly simplifies the analysis.  We show a single application of this gadget simulates the local term $\poly(n)XX-YY-ZZ$, which again we show can simulate the TIM. This XXZ case also corresponds to the toy model Hamiltonians considered in \cref{claim:toy_stoqma}.

Finally, for our easiness results in \cref{thm:epr_line_simulates_box}, consider the the green square region of \cref{fig:3_path_flows} defined by $-1<a,b<1$. We would like to show that any point in this square can be simulated by an edge of the square, corresponding to $K_{\text{EPR}}^b$. We can see from \cref{fig:3_path_flows} that the edges do not flow into the square, so we cannot use the simple $P_3$ gadget to prove our result. Thus, we handle this region using an edge-replacing second-order perturbative gadget.
\cref{claim:toy_EPR} in the toy model is a special case of  \cref{thm:epr_line_simulates_box}.

\subsection{Related work}\label{sec:intro/related_work}
\paragraph{Complexity classifications and phase transitions.}
Our work continues a line of complexity classifications for local optimization problems based on structural properties of the allowed clauses (or Hamiltonian terms). In classical complexity theory, \emph{Schaefer's dichotomy theorem} \cite{schaefer1978}, showed that every Boolean constraint satisfaction problem (CSP) defined by a fixed set of relations is either solvable in polynomial time, or $\NP$-complete. That is, the computational complexity (a global property) can be determined simply from local properties of the allowed relations.

The \LHam\ problem can be interpreted as a quantum generalization of CSPs. A complete classification for the \LHam\ problem with \emph{arbitrarily}  weighted $2$-local terms was completed by \cite{cubitt2016,bravyi2014}. Depending on the set of local terms $S$, the problem is either $\QMA$-complete, $\StoqMA$-complete, $\NP$-complete, or in $\P$. 
However, the use of negative weights makes it impossible to distinguish between the complexity of ferromagnetic and antiferromagnetic interactions. For example, negative weights allow us to encode \textsc{MaxCut} problems using only the local terms from \textsc{MinCut}.

The complexity classes in these classifications are often separated by computational phase transitions.
Perhaps the simplest example is the \LHam\ problem on the Ising model, given by Hamiltonians $H = \sum_{(i,j)\in E} a  Z_iZ_j$ for some real parameter $a$. When $a<0$, the problem is ferromagnetic and trivial, with ground state $\ket{0}^n$. When $a>0$, the problem is antiferromagnetic and $\NP$-complete (equivalent to \textsc{MaxCut}). Thus, $a=0$ marks a transition between $\P$ and $\NP$. A canonical phase transition from classical optimization is random $3$-\textsc{SAT}, where instances are typically satisfiable up to a critical clause density and unsatisfiable above that density (e.g.~\cite{crawford1996, mezard2002}). In quantum settings, computational phase transitions occur in the sample complexity of bosonic systems~\cite{deshpande2018}, the hardness of simulating certain families of graph states~\cite{ghosh2023}, and threshold theorems for error correction~\cite{aharonov1999}.

\paragraph{Stoquasticity and the EPR problem.}
Stoquastic (also known as sign-problem free) Hamiltonians admit a basis in which all off-diagonal elements are real and non-positive. This allows for Quantum Monte Carlo (QMC) simulation methods based on stochastic sampling of Gibbs or path-integral distributions. The efficiency of these methods depends on the mixing time of the underlying Markov chains. Rapid mixing has been proven in some cases, such as the ferromagnetic transverse-field Ising model \cite{jerrum1993} and the ferromagnetic XY model \cite{bravyi2017,rayudu2025}. In many other cases, however, efficiency is supported mainly by numerical evidence. 

It is possible that stoquasticity fundamentally simplifies ground state energy estimation. 
It is known that $\MA \subseteq \StoqMA \subseteq \AM$~\cite{bravyi2006}, and under a plausible conjecture $\MA = \StoqMA$ \cite{aharonov2025}. By contrast, it is unlikely that $\QMA \subseteq \AM$; in fact, there is oracular evidence that $\BQP \nsubseteq \PH$~\cite{raz2022}.
Regardless, quantum adiabatic computation seems to be more powerful than classical methods in finding ground states of stoquastic Hamiltonians (e.g. \cite{hastings2021, gilyen2020, hamoudi2026}).

The \EPR\ (and \EPRs) problems are stoquastic. QMC methods such as the operator-loop update of \cite{takahashi2024,rayudu2025} are leading candidates for solving the \EPRs\ problem in polynomial time. An alternative approach for \EPR\ uses the quantum adiabatic algorithm \cite{farhi2000}, analyzed via Lee-Yang theory. Recent work \cite{wong2026} relates zero-freeness of associated partition functions to efficient adiabatic algorithms for problems similar to \EPR, suggesting a path towards showing the \EPR\ problem is in $\BQP$.

\paragraph{The quantum Heisenberg model and Quantum MaxCut.}
A key component in our proofs is a perturbative gadget built from a one-dimensional chain of qubits. To analyze this spin chain, we borrow from a long line of research in statistical physics, where our systems are studied under the name \emph{quantum Heisenberg model}. A central goal in this area is to identify integrable models, i.e., those whose eigenspectra admit efficient descriptions. A seminal example is the solution of the XY spin chain by \cite{lieb1961}, which uses a Bogoliubov transformation to map the system to free fermions; see the expositions of \cite{metlitski2005, stolz2014}. Building on the Bethe ansatz \cite{bethe1931}, \cite{baxter1971, baxter1985} solved the XXZ and XYZ spin chains using a connection to the classical six- and eight-vertex models. See \cite{slavnov2020} for a modern treatment of this technique. 

There has been significant progress on \emph{approximation algorithms} for the antiferromagnetic quantum Heisenberg XXX model, also known as \textsc{Quantum MaxCut}. Here, the aim is to determine the ground state energy to some constant factor. This work was initiated by \cite{gharibian2019} and later improved by \cite{parekh2021,parekh2022,lee2022,king2023,lee2024,jorquera2024,marwaha2024,kannan2024,gribling2025,apte2025,apte2025c}. The current state-of-the-art approximation ratio is $>0.611$ ~\cite{apte2025c}; see \cite{sud2025} for a summary of techniques and updated results. It was recently proven by Piddock~\cite{piddock2025} that
\textsc{Quantum MaxCut}
is $\NP$-hard to approximate to some constant factor (see also \cite{hwang2022}).

\subsection{Outline}\label{sec:intro/outline}
We introduce and formally define simulation and perturbative gadgets in \cref{sec:prelim}. We introduce our two main kinds of gadgets in \cref{sec:our_gadgets}.
These gadgets allow us to prove \cref{thm:stoqma} in \cref{sec:stoqma} and \cref{thm:epr_line_simulates_box} in \cref{sec:completing_the_classication}. We comment on open directions in \cref{sec:discussion}. We defer lengthy and technical proofs to the appendix.

\section{Preliminaries}\label{sec:prelim}

\subsection{Definition of simulation}\label{sec:prelim/simulation}

Our results rely on reductions between \SpHam\ problems for different sets $S$. To show our reductions, we first require the following rigorous definition of \emph{simulation} from \cite{bravyi2014}. 
\begin{definition}\label{def:sim}
    Let $H_{\text{target}}$ be a Hamiltonian acting on a Hilbert space $\mathcal{H}$ of dimension $N$. A Hamiltonian $H_{\text{sim}}$ acting on a Hilbert space $\mathcal{H}_{\text{sim}}$ and an isometry $\mathcal{E}:\mathcal{H}\rightarrow\mathcal{H}_{\text{sim}}$ are said to \emph{simulate} $H_{\text{target}}$ with error $(\eta,\epsilon)$ if there exists an isometry $\wt{\mathcal{E}}:\mathcal{H} \rightarrow \mathcal{H}_{\operatorname{sim}}$ such that
    \begin{itemize}
    \item The image of $\wt{\mathcal{E}}$ is equal to the subspace formed by the $N$ lowest energy eigenvectors of $H_{\text{sim}}$.
    \item 
    $\|H_{\text{target}}-\wt{\mathcal{E}}^{\dagger}H_{\text{sim}}\wt{\mathcal{E}}\| \leqslant \epsilon$.
    \item $\|\mathcal{E}-\wt{\mathcal{E}}\,\|\leqslant \eta$.
    \end{itemize}
\end{definition}
See \cite[Section 3]{bravyi2017} for a more detailed exposition on simulation. Suppose we would like to show a reduction from the \TpHam\ problem to the \SpHam\ problem for different sets of local terms $\mathcal{S}$, $\mathcal{T}$. We do this by showing \emph{any} $\ofc{\mathcal{T}}^+\!$-Hamiltonian can be simulated by \emph{some} $\ofc{\mathcal{S}}^+\!$-Hamiltonian of at most polynomially larger size. For ease of notation, when this reduction holds we sometimes say that the $\ofc{\mathcal{S}}^+\!$-Hamiltonian simulates the $\ofc{\mathcal{T}}^+\!$-Hamiltonian, or that the local terms $\mathcal{S}$ simulate the local terms $\mathcal{T}$. We may then chain simulations together:
\begin{lemma}[Bravyi and Hastings~\cite{bravyi2014}]\label{lem:simulate_chaining}
    Suppose $(H_1,\mathcal{E}_1)$ simulates $H$ with error $(\eta_1,\epsilon_1)$ and $(H_2,\mathcal{E}_2)$ simulates $H_1$ with error $(\eta_2,\epsilon_2)$. Let $\delta$ be the spectral gap separating the $N$ smallest eigenvalues of $H_1$ from the rest of the spectrum, and suppose $\delta > 2\epsilon_2$ and $\epsilon_1,\epsilon_2 \leqslant \|H\|$. Then $(H_2,\mathcal{E}_2\mathcal{E}_1)$ simulates $H$ with error $(\eta,\epsilon)$, where $\eta = \eta_1 + \eta_2 + \mathcal{O}(\epsilon_2 \delta^{-1})$ and $\epsilon = \epsilon_1 + \epsilon_2 + \mathcal{O}(\epsilon_2 \delta^{-1} \|H\|)$.
\end{lemma}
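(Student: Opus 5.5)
The plan is to compose the two simulations directly and track the error with a triangle inequality, using standard eigenvalue perturbation bounds. Write $P_1$ for the projector onto the span of the $N$ lowest eigenvectors of $H_1$, so that $P_1 = \wt{\mathcal{E}}_1\wt{\mathcal{E}}_1^\dagger$. Since $\wt{\mathcal{E}}_1^\dagger H_1 \wt{\mathcal{E}}_1$ is $\epsilon_1$-close to $H$, its eigenvalues are all within $O(\|H\|)$. Let $N_1=\dim\mathcal{H}_1$. The second simulation provides an isometry $\wt{\mathcal{E}}_2:\mathcal{H}_1\to\mathcal{H}_2$ onto the $N_1$ lowest eigenvectors of $H_2$, with $\|H_1-\wt{\mathcal{E}}_2^\dagger H_2\wt{\mathcal{E}}_2\|\le\epsilon_2$. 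Set $H_1' := \wt{\mathcal{E}}_2^\dagger H_2\wt{\mathcal{E}}_2$. This operator on $\mathcal{H}_1$ is unitarily equivalent to $H_2$ restricted to its low-energy subspace, so its $N$ lowest eigenvectors pulled back by $\wt{\mathcal{E}}_2$ are exactly the $N$ lowest eigenvectors of $H_2$.

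First I compare $H_1'$ with $H_1$. Because $\|H_1'-H_1\|\le\epsilon_2<\delta/2$, Weyl's inequality shows that $H_1'$ still has a gap of at least $\delta-2\epsilon_2>0$ after its $N$ lowest eigenvalues. The Davis–Kahan $\sin\Theta$ theorem (or a Riesz-projector contour argument around the low cluster) then gives $\|P_1'-P_1\| = O(\epsilon_2/\delta)$, where $P_1'$ is the low-energy projector of $H_1'$. Next I choose a unitary $U$ on $\mathcal{H}_1$ that maps $\operatorname{im}P_1$ to $\operatorname{im}P_1'$ and satisfies $\|U-I\|=O(\|P_1'-P_1\|)$; the direct rotation $U$ built from $P_1'P_1+(I-P_1')(I-P_1)$ after polar normalization works. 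I then define $\wt{\mathcal{E}} := \wt{\mathcal{E}}_2 U\wt{\mathcal{E}}_1$. Its image is $\wt{\mathcal{E}}_2(\operatorname{im}P_1')$, which is the span of the $N$ lowest eigenvectors of $H_2$, so the first condition of \cref{def:sim} holds.

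For the error in $\eta$, I split
\[
\|\mathcal{E}_2\mathcal{E}_1-\wt{\mathcal{E}}_2U\wt{\mathcal{E}}_1\| \le \|\mathcal{E}_2-\wt{\mathcal{E}}_2\| + \|U-I\| + \|\mathcal{E}_1-\wt{\mathcal{E}}_1\|,
\]
which gives $\eta_1+\eta_2+O(\epsilon_2/\delta)$. For the error in $\epsilon$, I write $\wt{\mathcal{E}}^\dagger H_2\wt{\mathcal{E}} = \wt{\mathcal{E}}_1^\dagger U^\dagger H_1' U\wt{\mathcal{E}}_1$ and bound its distance from $H$ as follows:
\[
\|H-\wt{\mathcal{E}}_1^\dagger H_1\wt{\mathcal{E}}_1\| + \|\wt{\mathcal{E}}_1^\dagger (H_1'-H_1)\wt{\mathcal{E}}_1\| + \|\wt{\mathcal{E}}_1^\dagger(U^\dagger H_1'U-H_1')\wt{\mathcal{E}}_1\|.
\]
The first two terms contribute $\epsilon_1+\epsilon_2$.

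The main obstacle is the third term, because $\|H_1'\|$ itself may be huge: gadget Hamiltonians carry large weights, so a naive bound of $\|U-I\|\,\|H_1'\|$ is useless. To get around this, I use the fact that $U$ maps $P_1$ into $P_1'$. This lets me write $\wt{\mathcal{E}}_1^\dagger U^\dagger H_1' U\wt{\mathcal{E}}_1 = \wt{\mathcal{E}}_1^\dagger U^\dagger P_1'H_1'P_1' U\wt{\mathcal{E}}_1$, which involves only the low-energy block. That block has norm at most $\|P_1H_1P_1\|+O(\epsilon_2)=O(\|H\|)$, since $\epsilon_1,\epsilon_2\le\|H\|$; I may shift energies so the low cluster is centred near zero, which does not affect the simulation statement up to a trivial identity offset. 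Comparing the pieces $P_1'H_1'P_1'$ and $P_1 H_1' P_1$ through the rotation then costs $O(\|U-I\|\cdot\|P_1'H_1'P_1'\|)=O(\epsilon_2\delta^{-1}\|H\|)$. Summing all the contributions gives the stated $\epsilon$.
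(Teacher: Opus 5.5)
Your proof is correct, and it is the standard argument for this composition lemma. The paper itself does not prove the lemma but imports it from Bravyi--Hastings. The steps are: Weyl plus Davis--Kahan give $\|P_1'-P_1\|\le \epsilon_2/(\delta-\epsilon_2)=O(\epsilon_2/\delta)$; a direct rotation $U$ with $UP_1U^\dagger=P_1'$ and $\|U-I\|\le\sqrt{2}\,\|P_1'-P_1\|$ aligns the two low-energy subspaces; $\widetilde{\mathcal{E}}_2U\widetilde{\mathcal{E}}_1$ is the new isometry; and only low-energy blocks of $H_1'$ enter the error bound.

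There are two small tidy-ups.
\begin{enumerate}
\item The energy shift is unnecessary, since $\|P_1H_1P_1\|=\|\widetilde{\mathcal{E}}_1^\dagger H_1\widetilde{\mathcal{E}}_1\|\le\|H\|+\epsilon_1$ already. It is also not harmless as stated: an identity offset counts in the $\epsilon$-error unless $H$ is shifted too.
\item In the last step, expand $U^\dagger H_1'U-H_1'=(U-I)^\dagger H_1'U+H_1'(U-I)$. The second piece is controlled by $\|P_1H_1'\|=\|P_1H_1P_1+P_1(H_1'-H_1)\|\le\|H\|+\epsilon_1+\epsilon_2$, which uses that $P_1$ commutes with $H_1$. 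It is not controlled by $\|P_1'H_1'P_1'\|$. Both quantities are $O(\|H\|)$, so your final $O(\epsilon_2\delta^{-1}\|H\|)$ stands.
\end{enumerate}
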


Our proof of \cref{thm:stoqma} combines a constant number of simulation steps in series. \cref{lem:simulate_chaining} allows us to bound the error of this chain, ensuring that $\eta$ and $\epsilon$ are $\mathcal{O}(1/\poly(n))$.

\subsection{Perturbative gadgets}\label{sec:prelim/gadgets}
We simulate different \SpHam\ problems using perturbative gadgets. For a more thorough introduction, see \cite{bravyi2014}. These gadgets build a \emph{simulator} Hamiltonian $H_{\text{sim}}$ that simulates a \emph{target} Hamiltonian $H_{\text{target}}$. The first piece of $H_{\text{sim}}$ is a heavily-weighted Hamiltonian $H_0$, whose ground space matches the dimension of $H_{\text{target}}$.
Denote $P_-$ to be the projector into the ground space of $H_0$ and $P_{+} \defeq I-P_-$. Then for any operator $O$ we use the shorthand notation
\begin{align*}
    O_{--} = P_- O P_-, \quad O_{-+} = P_- O P_+, \quad O_{+-} = P_+ O P_-, \quad O_{++} = P_+ O P_+.
\end{align*}
We write $\overline{H}_{\text{target}} := \mathcal{E} H_{\text{target}} \mathcal{E}^\dagger$,
for the encoding of $H_{\text{target}}$ into the (bigger) simulator space.
Then we can borrow the following lemmas:
\begin{lemma}[{First Order, \label{lem:1st_order_pert}\cite[Lemma 4]{bravyi2014}}]
    Suppose one can choose $H_0, V$ such that $H_0$ has ground state energy $0$ and all nonzero eigenvalues greater than or equal to $1$, and 
    \[\|\overline{H}_{\text{target}} - V_{--}\| \leqslant \varepsilon/2. \]
    Suppose $\|V\| \leqslant \Lambda$. Then $H_{\text{sim}} = \Delta H_0 + V$ simulates
    $H_{\text{target}}$ with error $(\eta,\epsilon)$, provided that $\Delta \geqslant \Omega(\epsilon^{-1}\Lambda^2+\eta^{-1}\Lambda).$
\end{lemma}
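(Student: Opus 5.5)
The plan is to realize the isometry $\wt{\mathcal{E}}$ as a small rotation of $\mathcal{E}$ that carries the unperturbed ground space $P_-$ onto the low-energy subspace of $H_{\text{sim}}$. It then remains to check that the projected Hamiltonian agrees with $V_{--}$ up to second-order corrections of size $O(\Lambda^2/\Delta)$.

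\emph{Step 1 (the low-energy subspace).} Write $H_{\text{sim}}=\Delta H_0+V$ with $\|V\|\le\Lambda$. Since $H_0$ has ground energy $0$ and all other eigenvalues at least $1$, Weyl's inequality gives the following. The $N=\dim \mathcal{H}$ lowest eigenvalues of $H_{\text{sim}}$ lie in $[-\Lambda,\Lambda]$, and the rest lie above $\Delta-\Lambda$. Here we use that $\operatorname{rank}P_-=N$, which holds because $V_{--}$ approximates $\overline{H}_{\text{target}}$ on the image of $\mathcal{E}$; we take this image to equal the range of $P_-$, as in the gadget setting. Assuming $\Delta\ge c\Lambda$ for a large constant $c$, there is a gap of order $\Delta$. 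Let $P$ be the spectral projector onto these $N$ eigenvectors.

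\emph{Step 2 (Schrieffer--Wolff / direct rotation).} Take the unique direct rotation $U=e^{S}$, with $S$ anti-Hermitian and block off-diagonal with respect to $P_-$, such that $UP_-U^\dagger=P$. Alternatively, use the Davis--Kahan $\sin\Theta$ theorem, which gives $\|P-P_-\|\le O(\Lambda/\Delta)$, together with Kato's intertwining unitary. Either way $\|U-I\|=O(\Lambda/\Delta)$. Set $\wt{\mathcal{E}}=U\mathcal{E}$. Its image is then exactly the low-energy subspace, and $\|\mathcal{E}-\wt{\mathcal{E}}\|\le\|U-I\|=O(\Lambda/\Delta)$. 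This is at most $\eta$ once $\Delta\ge\Omega(\eta^{-1}\Lambda)$.

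\emph{Step 3 (effective Hamiltonian).} We have $\wt{\mathcal{E}}^\dagger H_{\text{sim}}\wt{\mathcal{E}}=\mathcal{E}^\dagger\,(U^\dagger H_{\text{sim}}U)_{--}\,\mathcal{E}$. Expand $U^\dagger H_{\text{sim}} U$ in the standard Schrieffer--Wolff series. The order-zero term $\Delta (H_0)_{--}$ vanishes, and the first-order term is $V_{--}$. All remaining terms are controlled by $\|S\|\cdot\|V\|=O(\Lambda^2/\Delta)$. To make this rigorous I would use the resolvent (Bravyi--DiVincenzo--Loss) bound: the series converges when $\Delta\ge c\Lambda$, and its tail beyond first order is bounded by $O(\Lambda^2/\Delta)$. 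Combining this with the hypothesis $\|\overline{H}_{\text{target}}-V_{--}\|\le\epsilon/2$ and the fact that $\mathcal{E}^\dagger\overline{H}_{\text{target}}\mathcal{E}=H_{\text{target}}$ gives
\[
\|H_{\text{target}}-\wt{\mathcal{E}}^\dagger H_{\text{sim}}\wt{\mathcal{E}}\|\le \epsilon/2+O(\Lambda^2/\Delta).
\]
This is at most $\epsilon$ when $\Delta\ge\Omega(\epsilon^{-1}\Lambda^2)$.

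\emph{Main obstacle.} The delicate step is Step 3: bounding the higher-order Schrieffer--Wolff terms by $O(\Lambda^2/\Delta)$ uniformly, rather than just the leading one. I would handle it either with the convergent resolvent expansion of Bravyi, DiVincenzo and Loss, or more directly by comparing $P H_{\text{sim}} P$ with $P_-H_{\text{sim}}P_-$. For the direct route, write $P=P_-+(P-P_-)$ and use $\|(P-P_-)\|=O(\Lambda/\Delta)$. The term $\Delta H_0$ then needs care, since its contribution $\Delta\,(P-P_-)H_0(P-P_-)$ is of order $\Delta(\Lambda/\Delta)^2=\Lambda^2/\Delta$; this works because $H_0P_-=0$. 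That cancellation is the crux of the estimate.
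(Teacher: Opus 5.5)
The paper gives no proof of its own here; it imports the statement as Lemma~4 of Bravyi--Hastings, and your main argument is essentially their proof, so the proposal is correct: you take $\wt{\mathcal{E}}=U\mathcal{E}$ with the Schrieffer--Wolff (direct-rotation) unitary $U$, use $\|U-I\|=O(\Lambda/\Delta)$ to get $\eta$, and use the Bravyi--DiVincenzo--Loss bound that $P_-U^\dagger H_{\text{sim}}UP_-$ differs from $V_{--}$ by $O(\Lambda^2/\Delta)$ to get $\epsilon$.

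One caution on your fallback ``direct route'': bounding $\Delta\,(P-P_-)H_0(P-P_-)$ through $\|P-P_-\|^2$ silently assumes $\|H_0\|=O(1)$, which the lemma does not assume (in the paper's gadgets $\|H_0\|$ grows with $n$). Instead use $H_0P_-=0$ to rewrite the term as $\Delta\,PH_0P$, and then bound $\|PH_0P\|$ directly. For a unit vector $\psi\in\operatorname{ran}P$ you have $\langle\psi,H_0\psi\rangle\le\|H_0\psi\|^2$, because $\operatorname{spec}H_0\subseteq\{0\}\cup[1,\infty)$. You also have $\|H_0\psi\|=\Delta^{-1}\|(H_{\text{sim}}-V)\psi\|\le 2\Lambda/\Delta$, because $H_{\text{sim}}$ has spectrum in $[-\Lambda,\Lambda]$ on $\operatorname{ran}P$. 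Together these give $\Delta\,\|PH_0P\|\le 4\Lambda^2/\Delta$ with no dependence on $\|H_0\|$.
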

\begin{lemma}[{Second Order, \label{lem:2nd_order_pert}~\cite[Lemma 5]{bravyi2014}}]  
    Suppose one can choose $H_0, V_{\text{main}}, V_{\text{extra}}$ such that $H_0$ has ground state energy $0$ and all nonzero eigenvalues greater than or equal to 1, $(V_{\text{extra}})_{+-} = (V_{\text{extra}})_{-+} = 0$, $(V_{\text{main}})_{--} = 0$, and
    \[\| \overline{H}_{\text{target}} - (V_{\text{extra}})_{--} + (V_{\text{main}})_{-+}H_{0}^{-1}(V_{\text{main}})_{+-}\| \leqslant \varepsilon/2. \]
    Suppose $\|V_{\text{main}}\|, \|V_{\text{extra}}\| \leqslant \Lambda$. Then $H_{\text{sim}} = \Delta H_0 + \Delta^{1/2} V_{\text{main}} + V_{\text{extra}}$ simulates
    $H_{\text{target}}$ with error $(\eta,\epsilon)$, provided that $\Delta \geqslant \Omega(\epsilon^{-2}\Lambda^6+\eta^{-2}\Lambda^2).$
\end{lemma}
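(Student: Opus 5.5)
The plan is to prove the Second Order lemma with a Schrieffer--Wolff transformation that block-diagonalizes $H_{\text{sim}}$ with respect to its low-energy subspace. The choices of $\widetilde{\mathcal{E}}$ and the error parameters then come from the leading orders of that transformation. Write $H_{\text{sim}} = \Delta H_0 + V$ with $V \defeq \Delta^{1/2}V_{\text{main}} + V_{\text{extra}}$, so that $\|V\| \le (\Delta^{1/2}+1)\Lambda$.

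\textbf{Step 1: gap and low-energy subspace.} Since $\Delta \gg \Lambda^2$, we have $\|V\| \le 2\Delta^{1/2}\Lambda < \Delta/4$. By Weyl's inequality the spectrum of $H_{\text{sim}}$ then splits into a cluster of $N = \operatorname{rank} P_-$ eigenvalues inside $[-\Delta/4,\Delta/4]$ and the rest above $3\Delta/4$. Let $P$ be the projector onto the low cluster. A Davis--Kahan / resolvent bound gives $\|P - P_-\| = O(\|V_{+-}\|/\Delta)$. Because $(V_{\text{extra}})_{+-}=0$, this is $O(\Lambda\Delta^{-1/2})$.

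\textbf{Step 2: Schrieffer--Wolff.} Take the direct rotation $U = e^{T}$, with $T$ anti-Hermitian and block off-diagonal with respect to $P_-$, such that $U P_- U^\dagger = P$ and $\|T\| = O(\|P-P_-\|)$. Set $\widetilde{\mathcal{E}} \defeq U\mathcal{E}$. Its image is exactly the span of the $N$ lowest eigenvectors, and
\[
\|\mathcal{E}-\widetilde{\mathcal{E}}\| \le \|U - I\| = O(\Lambda\Delta^{-1/2}).
\]
This is at most $\eta$ once $\Delta \ge \Omega(\eta^{-2}\Lambda^2)$.

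\textbf{Step 3: effective Hamiltonian.} Expand $H_{\text{eff}} = P_- U^\dagger H_{\text{sim}} U P_-$ in the standard Schrieffer--Wolff series, using the Bravyi--DiVincenzo--Loss convergence bounds. The orders come out as follows.
\begin{itemize}
  \item The zeroth order vanishes, because the ground energy of $H_0$ is $0$.
  \item The first order is $V_{--} = \Delta^{1/2}(V_{\text{main}})_{--} + (V_{\text{extra}})_{--} = (V_{\text{extra}})_{--}$.
  \item The second order is $-V_{-+}(\Delta H_0)^{-1}V_{+-}$. Since $V_{+-} = \Delta^{1/2}(V_{\text{main}})_{+-}$, the powers of $\Delta$ cancel exactly and this equals $-(V_{\text{main}})_{-+}H_0^{-1}(V_{\text{main}})_{+-}$.
  \item The mixed terms involving $V_{\text{extra}}$ vanish at this order, again because $V_{\text{extra}}$ is block diagonal.
\end{itemize}
Every remaining term has at least three factors of $V$, at least two of which are off-diagonal, and at least two energy denominators of order $\Delta$. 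The remainder is therefore bounded by
\[
O\!\left(\|V_{+-}\|^2\|V\|/\Delta^2\right) = O\!\left(\Lambda^3\Delta^{-1/2}\right)
\]
(using $\Lambda \ge 1$ after rescaling). This is at most $\varepsilon/2$ when $\Delta \ge \Omega(\varepsilon^{-2}\Lambda^6)$. By the hypothesis, $\|\widetilde{\mathcal{E}}^\dagger H_{\text{sim}}\widetilde{\mathcal{E}} - H_{\text{target}}\| \le \varepsilon$.

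\textbf{Main obstacle.} The delicate step is Step 3: controlling the tail of the Schrieffer--Wolff series rigorously. $V$ itself is large (order $\Delta^{1/2}$), so convergence relies on the ratio $\|V\|/\Delta = O(\Lambda\Delta^{-1/2})$ being small. I would bound the tail with the convergence lemma of Bravyi--DiVincenzo--Loss. If that bookkeeping is awkward, an alternative is to work with the resolvent / self-energy $\Sigma(z)$ restricted to $[-\Delta/4,\Delta/4]$ and expand it to second order. One then tracks that the third-order term, not a cross term with $V_{\text{extra}}$, sets the $\Lambda^6\varepsilon^{-2}$ scaling.
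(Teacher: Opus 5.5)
Your proposal is correct and is essentially the standard proof. The paper does not prove this lemma but imports it from Bravyi--Hastings, whose argument is this same Schrieffer--Wolff (direct rotation) construction: the rotated encoding $\widetilde{\mathcal{E}}$ with $\|U-I\| = O(\Lambda\Delta^{-1/2})$ gives the $\eta^{-2}\Lambda^2$ term, and the tail bound $O(\|V\|^3/\Delta^2) = O(\Lambda^3\Delta^{-1/2})$ beyond second order gives the $\varepsilon^{-2}\Lambda^6$ term. The one loose point is your ``$\Lambda \ge 1$ after rescaling'', which you also silently use in Step~1 to get $\Delta \ge 1$ and hence $\|V\| \le 2\Delta^{1/2}\Lambda$; this is not a free rescaling, since the gap of $H_0$ is fixed at $1$, but it is an implicit normalization already hidden in the lemma's $\Omega(\cdot)$ and is harmless in the paper's regime $\Lambda = \poly(n)$.
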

\cref{lem:1st_order_pert,lem:2nd_order_pert} give us a recipe to simulate new Hamiltonians:
we must first design a heavily-weighted operator $H_0$, and perturbative operators $V$ (for first order) or  $(V_{\text{main}}, V_{\text{extra}})$ (for second order). The simulated Hamiltonian is determined by by effect of the perturbative operators in the ground space of $H_0$.

In this work we choose $n$-qubit operators $H_0$, $V$, $V_{\text{main}}$, $V_{\text{extra}}$ to have at most $\poly(n)$ local terms, where each local term has norm at most $\poly(n)$. As long as the spectral gap of $H_0$ is constant, $H_0$ can be scaled and shifted by identity to meet the requirements of \cref{lem:1st_order_pert,lem:2nd_order_pert}.
We always take $\epsilon, \eta = \mathcal{O}(1/\poly(n))$ to ensure that ground state energies can be estimated to inverse polynomial precision, consistent with the promise gap of the \LHam\ problem. This implies $\Delta=\mathcal{O}\of{\poly(n)}$. By \cref{lem:simulate_chaining}, we can combine a constant number of simulation steps in series while keeping the norm of each Hamiltonian at most $\mathcal{O}(\poly(n))$.

\subsection{Normal form of local terms}\label{sec:prelims/normal_form}

The complexity of the \KpHam\ problem is invariant under rescaling by a polynomial factor and under shifting by a polynomial multiple of the identity. Since $K$ does not depend on $n$, we can, without loss of generality, shift $K$ so that it is traceless. It is shown in \cite[Lemma 9]{cubitt2016} that any traceless, symmetric, strictly $2$-local term $K$ can be written in the form $aXX + bYY + cZZ$.

Moreover, \cite[Lemma 8]{cubitt2016} shows that conjugating a $\ofc{K}^+\!$-Hamiltonian by a single-qubit unitary applied transversely, i.e., $H \mapsto U^{\otimes n} H (U^{\otimes n})^\dagger$, permutes the coefficients of $XX$, $YY$, and $ZZ$. Since this transformation preserves the spectrum of $H$, it does not affect the complexity of the problem. Thus, we may assume without loss of generality that $a \ge b \ge c$.

A simple linear transformation (see \cref{apx:bell_pauli_relations/bell_pauli_mapping}) relates the Pauli form of $K$ to the Bell form $\alpha \ket{\psi^+}\bra{\psi^+} + \beta \ket{\phi^+}\bra{\phi^+} + \gamma \ket{\phi^-}\bra{\phi^-}$
given in \cref{eq:K_bell_form}. This mapping from $(a,b,c)$ to $(\alpha,\beta,\gamma)$ is a bijection. Consequently, permutations of Paulis correspond to permutations of triplet states, so we may assume without loss of generality that $\alpha \ge \beta \ge \gamma$. 

\section{Our gadgets}\label{sec:our_gadgets}
We now describe the two kinds of perturbative gadgets used in our work. We introduce each kind formally using the notation in \cref{sec:prelim}, and derive the map of effective terms that they produce.

\subsection{Vertex-replacing gadgets}\label{sec:our_gadgets/vertex_replacing}
Our vertex-replacing gadgets follow the sketch presented in \cref{sec:intro/techniques}, which uses \cref{lem:1st_order_pert}. We start with our original \emph{interaction Hamiltonian} $H_K(G)$, for some local term $K$ and \emph{interaction graph} $G=(V,E,w)$. $H_K(G)$ denotes the Hamiltonian formed by the interaction $K$ on the edges $E$ weighted by the edge weights $w$.

Our goal is to simulate $H_{K'}(G)$ for a new local term $K'=a' XX+b'YY+c'ZZ$. In order to use \cref{lem:1st_order_pert} we must first choose a heavily-weighted Hamiltonian $H_0$ to define our ground space. To do this, we pick a \emph{gadget graph} $\wt{G}=(\ofb{\wt{n}},\wt{E},\wt{w})$ such that 1) the ground state of $H_K(\wt{G})$ is \emph{exactly} two-fold degenerate, and 2) the spectral gap $\delta$ of $H_K(\wt{G})$ is constant.
Given a suitable gadget graph, we construct a new graph $F$ where each vertex of $G$ is replaced by a copy of $\wt{G}$. Now, we can index vertices of $F$ by a tuple $(i,u)$ with $i \in [n]$ and $u \in [\wt{n}]$. We then apply \cref{lem:1st_order_pert} with the following specification:
\begin{align*}
    H_0 &= \sum_{i \in [n]} (H_K(\wt{G}))_i \defeq \sum_{i \in [n]} \sum_{(u,v) \in \wt{E}} \wt{w}_{uv} K_{(i,u),(i,v)},\\
    V_{u,v} &= \sum_{(i,j) \in E} w_{ij} K_{(i,u),(j,v)}.
\end{align*}
The heavily weighted term $H_0$ is the sum over Hamiltonians $H_K(\wt{G})$ located at each vertex of the original interaction graph $G$. We identify logical ground states $\ket{0_i^{(L)}}$ and $\ket{1^{(L)}_i}$ with the ground states of the copy of $H_K(\wt{G})$ at vertex $i$. Each term $V_{u,v}$ corresponds to connecting vertices $u$ and $v$ in neighboring gadget graphs. 

The ground state of $H_0$ is $\ket{0_i^{(L)}}^{\otimes n}$, and all other states have energy at least $\delta$. Since $\delta$ is a constant we can multiply $H_0$ by $1/\delta$ in order to apply \cref{lem:1st_order_pert}. We suppress this constant factor for ease of notation.

From \cref{lem:1st_order_pert}, the effective Hamiltonian is determined by the projection of the terms $V_{u,v}$ into the logical ground space defined by $H_0$
\begin{align*}
    H_{\text{sim}} = \Big(\sum_{(i,j) \in E} w_{ij} K_{(i,u),(j,v)}\Big)_{--}=\sum_{(i,j) \in E} w_{ij} \of{K_{(i,u),(j,v)}}_{--}.
\end{align*}
Notice that $\of{K_{(i,u),(j,v)}}_{--}$ does not depend on $i$ or $j$, since $H_0$ is identical on all vertices $i \in V$, thus we can express
\begin{align*}
    H_{\text{sim}} = \sum_{(i,j) \in E} w_{ij} \of{K_{uv}}_{--},
\end{align*}
where $\of{K_{uv}}_{--}$ denotes the projector of the interaction $K_{(i,u),(j,v)}$ for \emph{any} two vertices $(i,j)\in E$ into the simultaneous ground state of these two vertices. This means that we can simulate the $\ofc{\of{K_{uv}}_{--}}^+\!$-Hamiltonian specified by $G$. The projectors into the ground space of $H_0$ and operators in $K$ can all be expressed as a tensor product over single qubit operators, so we can factor 
\begin{align}
    \of{K_{uv}}_{--} &= a\of{X_u X_v}_{--}+b \of{Y_u Y_v}_{--} + c\of{Z_u Z_v}_{--}\nonumber\\ 
    &= a\of{X_u}_{--} \otimes \of{X_v}_{--}+b \of{Y_u}_{--}\otimes\of{Y_v}_{--} + c\of{Z_u}_{--}\otimes\of{Z_v}_{--} \label{eq:K_tensor_product},
\end{align}
where the projector $\of{O_u}_{--}$ denotes the projector of operator $O_u$ into the two-dimensional ground space of a single copy of $H_K(\wt{G})$. To compute $\of{O_u}_{--}$, we use the following lemma, which we prove in \cref{apx:vertex_replacing_gadget}.
\begin{lemma}\label{lem:vertex_replacing}
    Let $\wt{G}=([\wt{n}],\wt{E},\wt{w})$ be a positively weighted graph with $\wt{n}$ odd. Let $K=aXX+bYY+cZZ$ be a two-body interaction such that the ground state of $H_K(\wt{G})$ is \emph{exactly} two-fold degenerate.
    Let $\mathcal{B}$ denote the set of 
    length-$\wt{n}$ bitstrings with even parity, and $\ket{\chi}=\sum_{z\in\mathcal{B}}\alpha_z\ket{z}$ be the minimum energy eigenstate in the even parity sector. Let $z^{(u)}$ denote the bitstring obtained by flipping bit $u$ of $z$ and let $\overline{z}$ denote the bitstring obtained by flipping all bits of $z$. 
    Then, for any $u \in [\wt{n}]$,
    \begin{align*}
        &\!\!\of{X_u}_{--} \!\!=\! t_u^X\, X^{(L)} , \qquad\,\of{Y_u}_{--} \!\!= \!t_u^Y\,Y^{(L)},\qquad\qquad\of{Z_u}_{--} \!\!=\! t_u^Z\,Z^{(L)},\\
        &t_u^X \defeq \sum_{z \in \mathcal{B}} \alpha_z^* \alpha_{\overline{z}^{(u)}}, \quad t_u^Y\defeq \sum_{z \in \mathcal{B}} \alpha_z^* (-1)^{z_u} \alpha_{\overline{z}^{(u)}}, \quad t_u^Z \defeq \sum_{z \in \mathcal{B}} |\alpha_z|^2 (-1)^{z_u}.
    \end{align*}
    where $O^{(L)}$ denotes the single-qubit Pauli operator $O$ acting in the two dimensional ground space of $H_K(\wt{G})$ defined by $\ket{0^{(L)}}=\ket{\chi}$ and $\ket{1^{(L)}}=\of{\prod_{i \in \ofb{\wt{n}}} X_i}\ket{\chi}$.
\end{lemma}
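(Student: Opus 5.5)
The plan is to exploit the two symmetries of $H \defeq H_K(\wt{G})$: the global spin flip $\mathcal{X}\defeq\prod_{i} X_i$ and the parity operator $\mathcal{Z}\defeq\prod_i Z_i$. Each term $X_uX_v$, $Y_uY_v$, $Z_uZ_v$ commutes with both $\mathcal{X}$ and $\mathcal{Z}$, so $H$ commutes with both. Because $\wt{n}$ is odd, $\mathcal{X}$ anticommutes with $\mathcal{Z}$. Hence $\mathcal{X}$ maps the even-parity sector bijectively and isospectrally onto the odd-parity sector.

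The first step is to identify the ground space. Since $H$ preserves parity, every eigenvector can be chosen inside a parity sector. The isospectral map between the two sectors means the minimum energy of each sector is the same. Let $\ket{\chi}$ be the minimum-energy even-parity eigenstate. Then $\mathcal{X}\ket{\chi}$ is an odd-parity eigenstate with the same energy, and together they give two orthogonal ground states. The exact two-fold degeneracy hypothesis then says that these two states span the whole ground space. (It also forces $\ket{\chi}$ to be unique within its sector.) This justifies the choice $\ket{0^{(L)}}=\ket{\chi}$ and $\ket{1^{(L)}}=\mathcal{X}\ket{\chi}$, where $\mathcal{X}\ket{\chi}=\sum_{z\in\mathcal{B}}\alpha_z\ket{\overline{z}}$.

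The second step computes the four matrix elements of each single-site Pauli in this basis.

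\textbf{$Z_u$.} It preserves parity, so its off-diagonal elements $\bra{\chi}Z_u\mathcal{X}\ket{\chi}$ vanish, since the two states lie in different sectors. Its diagonal elements are
\[
\bra{\chi}Z_u\ket{\chi}=\sum_{z\in\mathcal{B}}|\alpha_z|^2(-1)^{z_u}=t_u^Z,
\qquad
\bra{\chi}\mathcal{X}Z_u\mathcal{X}\ket{\chi}=-\bra{\chi}Z_u\ket{\chi}=-t_u^Z .
\]
This gives $(Z_u)_{--}=t_u^Z Z^{(L)}$.

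\textbf{$X_u$ and $Y_u$.} Each of these flips parity, so both diagonal elements vanish. For the off-diagonal element of $X_u$,
\[
\bra{\chi}X_u\mathcal{X}\ket{\chi}=\sum_{z,w\in\mathcal{B}}\alpha_z^*\alpha_w\braket{z|\overline{w}^{(u)}}.
\]
The overlap is nonzero only when $w=\overline{z}^{(u)}$. This $w$ has even parity, since $\overline{z}^{(u)}$ flips $\wt{n}-1$ bits of $z$, an even number. So the sum equals $t_u^X$. The other off-diagonal element $\bra{\chi}\mathcal{X}X_u\ket{\chi}$ is the same quantity because $X_u$ commutes with $\mathcal{X}$, giving $(X_u)_{--}=t_u^X X^{(L)}$.

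For $Y_u = iX_uZ_u$, acting on $\ket{\overline{w}}$ produces the sign $i(-1)^{\overline{w}_u}$ along with flipping bit $u$. Matching $w=\overline{z}^{(u)}$ gives $\overline{w}_u=z_u$, so
\[
\bra{\chi}Y_u\mathcal{X}\ket{\chi}=i\,t_u^Y .
\]
Using $\mathcal{X}Y_u\mathcal{X}=-Y_u$, the reverse element is
\[
\bra{\chi}\mathcal{X}Y_u\ket{\chi}=-\bra{\chi}Y_u\mathcal{X}\ket{\chi}=-i\,t_u^Y .
\]
The logical operator $Y^{(L)}$ has entries $\bra{0^{(L)}}Y^{(L)}\ket{1^{(L)}}=-i$ and $\bra{1^{(L)}}Y^{(L)}\ket{0^{(L)}}=i$. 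So this pattern matches $Y^{(L)}$ up to an overall sign. That sign is fixed by the convention relating $\overline{z}^{(u)}$ to the index on which $(-1)^{z_u}$ is evaluated, or equivalently by replacing $\alpha$ with its conjugate.

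The main obstacle is this bookkeeping: getting the conjugation and the phase in $t_u^Y$ to match the stated formula exactly. The other computations are routine. If a sign discrepancy appears, I would try to resolve it in one of two ways. The first is to note that $\alpha_z$ can be taken real: $H$ is real in the computational basis, because $YY$ has real entries. With real amplitudes the remaining discrepancy is only a sign, which I would absorb into the definition of $t_u^Y$. Since the lemma is used only through products $t_u^Y t_v^Y$, such a sign is harmless there, but I would check that it matches the convention in \cref{eq:K_tensor_product}. The second is to double check the reindexing $z\leftrightarrow\overline{z}^{(u)}$, which swaps the role of $z_u$ and $1-z_u$.
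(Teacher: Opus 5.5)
Your route is the same as the paper's: split by the parity $\prod_i Z_i$, use $\mathcal{X}=\prod_i X_i$ (which anticommutes with the parity for odd $\wt{n}$) to pair the two sectors, and read off the $2\times 2$ matrix of each single-site Pauli in the basis $\{\ket{\chi},\mathcal{X}\ket{\chi}\}$. Your identification of the ground space is correct, and so are the $X_u$ and $Z_u$ computations, including the check that $\overline{z}^{(u)}$ has even parity.

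The $Y_u$ step contains an actual error. As a result, the proposal ends with an unresolved sign discrepancy instead of a proof of the stated formula. When you match $w=\overline{z}^{(u)}$, you have $\overline{w}=z^{(u)}$, so $\overline{w}_u=1-z_u$, not $z_u$. Hence $(-1)^{\overline{w}_u}=-(-1)^{z_u}$, and
\[
\bra{\chi}Y_u\mathcal{X}\ket{\chi}=-i\,t_u^Y,\qquad \bra{\chi}\mathcal{X}Y_u\ket{\chi}=-\bra{\chi}Y_u\mathcal{X}\ket{\chi}=i\,t_u^Y .
\]
These are exactly the entries of $t_u^Y\,Y^{(L)}$, since $\bra{0^{(L)}}Y^{(L)}\ket{1^{(L)}}=-i$. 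So the lemma holds with the stated sign, and there is nothing to absorb. The smallest case already shows this: for $\wt{n}=1$ you have $\ket{\chi}=\ket{0}$ and $t_u^Y=1$, while $\bra{0}Y\ket{1}=-i$, not your value $i\,t_u^Y$.

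Your first fallback, redefining $t_u^Y$ up to a sign, would prove a different statement. It would be harmless for the later uses, where only products $t_u^Y t_v^Y$ or ratios like $t_1^Y/t_2^Y$ appear, but it does not establish the lemma as written. Your second fallback, rechecking the reindexing, is the correct one-line fix. The paper avoids this pitfall by reading the coefficient from the other off-diagonal entry, $t_u^Y=\tfrac{1}{i}\bra{\chi}\mathcal{X}Y_u\ket{\chi}$. There the sign factor is evaluated on the ket-side string $\overline{z}^{(u)}$, whose $u$-th bit is $z_u$, so $(-1)^{z_u}$ appears directly.
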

Putting everything together, we have specified a method of constructing first order gadgets given as input an interaction $K=aXX+bYY+cZZ$:
\begin{enumerate}
    \item Find an odd-order graph gadget $\wt{G}=(\wt{V},\wt{E},\wt{w})$ such that $H_K(\wt{G})$ has an exactly two-dimensional ground space and a constant, nonzero spectral gap.
    \item Compute the minimum eigenvector $\ket{\chi}$ in the even-parity subspace.
    \item For each pair of vertices $(u,v)\in \wt{V}\,^2$, compute the effective interaction term $\of{K_{uv}}_{--}$ using \cref{lem:vertex_replacing}.
\end{enumerate}
This method allows us to simulate up to $\wt{n}^{\,2}$ different terms from just one gadget graph $\wt{G}$. However, some of these terms may be identical: for instance, note that $\of{K_{uv}}_{--}=\of{K_{vu}}_{--}$ for all $u,v \in \wt{V}$. Symmetries in $\wt{G}$ may further restrict the number of unique effective terms.

\subsection{Edge-replacing gadgets}\label{sec:our_gadgets/edge_replacing}
We use one particular edge-replacing gadget extensively in our proofs. 
This edge-replacing gadget replaces each edge $(i,j)$ in the graph with a pair of ancilla qubits $(y_{ij}, z_{ij}$), weighted heavily by the interaction term.
A visual of this single edge gadget is given in \cref{fig:edge_replacing_gadget}.
\begin{figure}[ht]
    \centering
    \includegraphics[width=0.45\linewidth]{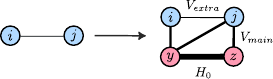}
    \caption{\small{The edge-replacing gadgets take each edge $(i,j)$ which we would like to produce an interaction on and adds an ancilla pair $(y,z)$. Then, the interaction on $(y,z)$ will be chosen to be $H_0$ in \cref{lem:edge_replacing}. The interaction on the middle edges will be $V_{\text{main}}$, and the top edge will be $V_{\text{extra}}$.}}
    \label{fig:edge_replacing_gadget}
\end{figure}

We  use \cref{lem:edge_replacing} to compute the effect of sums of interactions between $i$, $j$, $y_{ij}$ and $z_{ij}$. 
\begin{lemma}\label{lem:edge_replacing}
    The $\ofc{aXX+bYY-ZZ}\!^+\!$-Hamiltonian with $a> b\ge -1$ and $b<1$ can simulate the \newline$\ofc{K_1, K_2, K_3}^+\!$-Hamiltonian to inverse polynomial error, where 
    \begin{align*}
        K_1 \defeq aXX+bYY-ZZ, \quad K_2 \defeq a'XX-b'YY-ZZ, \quad K_3 \defeq -a'XX-b'YY-ZZ, 
    \end{align*}
    and 
    \begin{align*}
        a'\defeq\frac{a^2(a-b)}{1-b}, \quad b'\defeq\frac{b^2(a-b)}{a+1}.
    \end{align*}
\end{lemma}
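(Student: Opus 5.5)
The plan is to use the edge-replacing construction of \cref{fig:edge_replacing_gadget} together with the second-order lemma, \cref{lem:2nd_order_pert}. $K_1$ is the original term, so it is simulated trivially: we keep those edges and assign them to $V_{\text{extra}}$. For each edge $(i,j)$ that should carry $K_2$ or $K_3$, we add an ancilla pair $(y,z)=(y_{ij},z_{ij})$ and set $H_0=K_{yz}$, rescaled by its gap and shifted by the identity. We take $V_{\text{main}}$ to be a sum of copies of $K$ on two "middle" edges joining $\{i,j\}$ to $\{y,z\}$. Because the target terms act on different edges, the three gadget types can be used at once, with all weights kept polynomial.

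\textbf{Step 1: spectrum of $H_0$.} On Bell states, $K=aXX+bYY-ZZ$ has the following energies:
\begin{itemize}
\item $\ket{\phi^-}$: $-(a-b)-1$;
\item $\ket{\phi^+}$: $a-b-1$;
\item $\ket{\psi^-}$: $1-a-b$;
\item $\ket{\psi^+}$: $a+b+1$.
\end{itemize}
Under the hypotheses $a>b\ge -1$ and $b<1$, the state $\ket{\phi^-}$ is the unique ground state. Its excitation gaps are $2(a-b)$ to $\ket{\phi^+}$, $2(1-b)$ to $\ket{\psi^-}$, and $2(a+1)$ to $\ket{\psi^+}$, all positive constants. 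These three quantities are exactly the denominators appearing in $a'$ and $b'$, which supports this choice of $H_0$.

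\textbf{Step 2: second-order term.} On the ancilla, single Paulis map the ground state to a single excited Bell state:
\begin{itemize}
\item $X_y\ket{\phi^-}\propto\ket{\psi^-}$;
\item $Y_y\ket{\phi^-}\propto\ket{\psi^+}$;
\item $Z_y\ket{\phi^-}\propto\ket{\phi^+}$;
\end{itemize}
and similarly for $z$, up to signs and phases. Since $(V_{\text{main}})_{--}=0$ automatically, the condition of \cref{lem:2nd_order_pert} holds.

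In $-(V_{\text{main}})_{-+}H_0^{-1}(V_{\text{main}})_{+-}$ there are three kinds of contributions:
\begin{itemize}
\item Products of the same Pauli on the same system qubit give multiples of the identity.
\item Mixed products on the same qubit, such as $X_iY_i$, come with $\langle\phi^-|X_y\,\Pi\,Y_y|\phi^-\rangle=0$, because different Paulis excite orthogonal Bell states.
\item Cross terms between the $i$-edge and the $j$-edge give two-body terms $c_P^2\,\sigma\, P_iP_j/E_P$ for $P\in\{X,Y,Z\}$. Here $c_P\in\{a,b,-1\}$, $E_P$ is the corresponding gap, and $\sigma=\pm1$ is a sign fixed by the Bell-state phases and by the wiring.
\end{itemize}
Multiplying through by the common factor $2(a-b)$ rescales the $ZZ$ coefficient to magnitude $1$ and turns the $XX$ and $YY$ coefficients into $a^2(a-b)/(1-b)=a'$ and $b^2(a-b)/(a+1)=b'$, respectively.

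\textbf{Step 3: signs and wiring.} Next we verify that two different wirings produce the sign patterns $(+,-,-)$ and $(-,-,-)$, giving $K_2$ and $K_3$. The two candidates are $(i,y),(j,z)$ versus $(i,y),(j,y)$, possibly combined with the transverse relabeling of \cref{sec:prelims/normal_form}.

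If the $ZZ$ sign comes out wrong, or if leftover terms appear, we cancel them by adding a suitably weighted copy of $K$ on $(i,j)$ as $V_{\text{extra}}$. This is legitimate because $V_{\text{extra}}$ commutes with $P_-$, as \cref{lem:2nd_order_pert} requires. Finally, choosing $\Delta=\poly(n)$ gives inverse-polynomial error $(\eta,\epsilon)$.

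I expect the main obstacle to be this sign bookkeeping: computing the phases $\langle\phi^-|P_y\,|E\rangle\langle E|\,P_z|\phi^-\rangle$ for each Pauli $P$ and checking that the available wirings realize exactly the sign patterns of $K_2$ and $K_3$. If only one wiring turns out to work, the fallback is to obtain the other term via a $Z$-conjugation on one endpoint.
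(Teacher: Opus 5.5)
Your construction is the paper's. It uses an ancilla pair with $H_0=K_{yz}$ whose unique ground state is $\ket{\phi^-}$, puts $K_1$ in $V_{\text{extra}}$, and uses the wirings $(i,y),(j,z)$ and $(i,y),(j,y)$ for $K_2$ and $K_3$. Your Bell energies and gaps $2(a-b)$, $2(1-b)$, $2(a+1)$ are correct. The only structural difference is minor: you give each $K_2$ or $K_3$ term its own ancilla pair. The paper instead uses one pair per edge, with middle edges $K_{iy},K_{jz},K_{jy}$ weighted to produce $\mu_1K_1+\mu_2K_2+\mu_3K_3$ at once. The two are equivalent, because distinct ancilla pairs do not couple at second order.

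The step you leave open, the sign bookkeeping, is exactly where $K_2$ and $K_3$ are told apart, and it closes directly from \cref{eq:pauli_on_bell}. We have $X_y\ket{\phi^-}=-\ket{\psi^-}$ but $X_z\ket{\phi^-}=+\ket{\psi^-}$. By contrast, $Y_y\ket{\phi^-}=Y_z\ket{\phi^-}=i\ket{\psi^+}$ and $Z_y\ket{\phi^-}=Z_z\ket{\phi^-}=\ket{\phi^+}$. Hence for the wiring $(i,y),(j,x)$, the term $-(V_{\text{main}})_{-+}H_0^{-1}(V_{\text{main}})_{+-}$ equals $-s_x a'X_iX_j-b'Y_iY_j-Z_iZ_j$, up to a positive factor and an identity shift, with $s_y=+1$ and $s_z=-1$. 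So $x=z$ gives $K_2$ and $x=y$ gives $K_3$, with nothing left over.

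You should drop your fallbacks, which would not have worked. Positive multiples of $K_{ij}$ add $aXX+bYY-ZZ$ as a block, so they cannot remove an arbitrary unwanted term. Conjugating one endpoint by $Z$ sends $(a',-b',-1)$ to $(-a',b',-1)$, not to $K_3$, and it would also flip every other term incident to that vertex.
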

\begin{proof}
    Our proof uses \cref{lem:2nd_order_pert}. First, by \cref{apx:bell_pauli_relations/bell_pauli_mapping} we can express the local term $aXX+bYY-ZZ$ in the Bell basis as
    \begin{align*}
         K=\of{a+b}\ket{\psi^+}\bra{\psi^+}+\of{a-1}\ket{\phi^+}\bra{\phi^+}+\of{b-1}\ket{\phi^-}\bra{\phi^-},
    \end{align*}
    where the singlet $\ket{\psi^-}$ has energy $0$. We can shift this local term by $(1-b)I$ to get 
    \begin{align*}
         K=\of{1-b}\ket{\psi^-}\bra{\psi^-} + \of{a+1}\ket{\psi^+}\bra{\psi^+}+\of{a-b}\ket{\phi^+}\bra{\phi^+}.
    \end{align*}
    In this form, we see can easily see that $\ket{\phi^-}$ is the ground state of the $K$ in the stated region: $\ket{\phi^-}$ has energy $0$, while $a>-1$, $a>b$, and $1>b$ ensure that the other three Bell states have positive energy. We may then divide the local term by $\min\ofc{a+1,a-b,1-b}$ to ensure that these Bell states have energy at least $1$, meeting the criteria for the nonzero eigenvalues of $H_0$ in \cref{lem:2nd_order_pert}. 
    
    Now, for each edge $(i,j)$, we introduce new ancilla qubits $y, z$.  This results in a total of $n+2m \le n+2n^2$ physical qubits, which is still polynomial in the original graph.  From \cite[Section 2.3]{piddock2015}, we analyze the perturbative gadget in parallel on every edge $(i,j)$. Let 
    \begin{align*}
        H_0 = K_{yz},  \quad 
        V_{\text{main}} =  \frac{\mu_2}{\sqrt{\mu_2 + \mu_3}} (K_{iy}+K_{jz}) + \frac{\mu_3}{\sqrt{\mu_2 + \mu_3}}(K_{iy}+K_{jy}),\quad V_{\text{extra}} = 2\mu_1 (a-b) K_{ij}.
    \end{align*}
    We show that this simulates a target interaction term $H_{\text{target}} = \mu_1 K_1 + \mu_2 K_2 + \mu_3 K_3$ on edge $(i,j)$.

    We first verify the remaining conditions of \cref{lem:2nd_order_pert}; that is, $(V_{\text{main}})_{--} = (V_{\text{extra}})_{+-} = (V_{\text{extra}})_{-+} = 0$.     Since $V_{\text{extra}}$ acts trivially on the ancillae $y$ and $z$, $\of{V_{\text{extra}}}_{-+}=\of{V_{\text{extra}}}_{+-}=0$. For $V_{\text{main}}$, recall that the ground state of $K$ is $\ket{\phi^-}$.  Using the action of single-qubit Paulis on Bell states (shown in \cref{eq:pauli_on_bell}) we can compute
    \begin{align*}
        \of{K_{iy}}_{--} = \ket{\phi^-}_{yz}\bra{\phi^-}_{yz}\of{aX_iX_y+bY_iY_y-Z_iZ_y}\ket{\phi^-}_{yz}\bra{\phi^-}_{yz}=0.
    \end{align*}
    As the same holds for $(K_{iz})_{--}, (K_{jy})_{--}, (K_{jz})_{--}$, we have $(V_{\text{main}})_{--} = 0$.

    We now compute the target Hamiltonian, which is equal to
    $$
    (V_{\text{extra}})_{--} - (V_{\text{main}})_{-+}H_0^{-1}(V_{\text{main}})_{+-}\,
    $$
    Since $V_{\text{extra}}$ acts trivially on ancillae $y$ and $z$, $(V_{\text{extra}})_{--} = V_{\text{extra}} \otimes \ket{\phi^-}_{yz}\bra{\phi^-}_{yz}$. This is equal to $2\mu_1 (a-b) \cdot K_1 \otimes \ket{\phi^-}_{yz}\bra{\phi^-}_{yz}$. For the other terms:
    \begin{align*}
        \of{K_{iy}}_{-+} &= \ket{\phi^-}_{yz}\bra{\phi^-}_{yz}\of{aX_iX_y+bY_iY_y-Z_iZ_y}\of{\ket{\psi^+}\bra{\psi^+}+\ket{\phi^+}\bra{\phi^+}+\ket{\psi^-}\bra{\psi^-}}_{yz} \\
        &= \ket{\phi^-}_{yz}\of{a\cdot(-1)\cdot X_i\bra{\psi^-}_{yz}+b\cdot(-i)\cdot Y_i\bra{\psi^+}_{yz}-(1) \cdot Z_i\bra{\phi^+}_{yz}},
    \end{align*}
    and
    \begin{align*}
        \of{K_{iz}}_{-+} &= \ket{\phi^-}_{yz}\bra{\phi^-}_{yz}\of{aX_iX_z+bY_iY_z-Z_iZ_z}\of{\ket{\psi^+}\bra{\psi^+}+\ket{\phi^+}\bra{\phi^+}+\ket{\psi^-}\bra{\psi^-}}_{yz} \\
        &= \ket{\phi^-}_{yz}\of{a\cdot(1)\cdot X_i\bra{\psi^-}_{yz}+b\cdot (-i)\cdot Y_i\bra{\psi^+}_{yz}-(1)\cdot Z_i\bra{\phi^+}_{yz}}.
    \end{align*}
    The calculation is identical, mutatis mutandis, for $\of{K_{jy}}_{-+}$, $\of{K_{jz}}_{-+}$. For all Hermitian $V$, $V_{-+}=\of{V_{+-}}^{\dagger}$. We now wish to compute $\of{K_{kw}}_{-+} H_0^{-1} \of{K_{\ell x}}_{+-}$ for $k,\ell \in \{i,j\}$ and $w,x \in \{y,z\}$. This is 
    \begin{align}
    \label{eq:xyz_edge_2ndorder}
    \ket{\phi^-}_{yz}\left( (-1)^{\mathbbm{1}_{w \ne x}} \frac{a^2}{1-b} X_k X_{\ell}   + \frac{b^2}{a+1} Y_k Y_{\ell}  + \frac{1}{a-b} Z_k Z_{\ell} \right) \bra{\phi^-}_{yz}.
    \end{align}
    When $k = \ell$, this is proportional to $\ket{\phi^-}_{yz}\bra{\phi^-}_{yz} \otimes I$, and thus is an identity shift when restricted to the ground space of $H_0$.  So up to identity shift, 
    $$
    \of{V_{\text{main}}}_{-+}\!H_0^{-1}\!\of{V_{\text{main}}}_{+-}\!\! = 2 (\frac{\mu_2 + \mu_3}{\sqrt{\mu_2 + \mu_3}})  \of{K_{iy}}_{-+}H_0^{-1} \left( \frac{\mu_2}{\sqrt{\mu_2 + \mu_3}} \of{K_{jz}}_{+-} + \frac{\mu_3}{\sqrt{\mu_2 + \mu_3}} \of{K_{jy}}_{+-} \right).
    $$
Using \cref{eq:xyz_edge_2ndorder}, we have
\begin{align*}
    - \of{K_{iy}}_{-+} H_0^{-1} \of{K_{jz}}_{+-} &= (a-b) \cdot K_2 \otimes \ket{\phi^-}_{yz}\bra{\phi^-}_{yz} \\
    - \of{K_{iy}}_{-+} H_0^{-1} \of{K_{jy}}_{+-} &= (a-b) \cdot K_3 \otimes \ket{\phi^-}_{yz}\bra{\phi^-}_{yz}
\end{align*}
And so $-\of{V_{\text{main}}}_{-+}H_0^{-1}\of{V_{\text{main}}}_{+-}$ is equal to $2(a-b) \left( \mu_2 K_2 + \mu_3 K_3 \right) \otimes \ket{\phi^-}_{yz}\bra{\phi^-}_{yz}$. Altogether, the target Hamiltonian term is equal to
\begin{align*}
2(a-b) \left( \mu_1 K_1 + \mu_2 K_2 + \mu_3 K_3 \right) \otimes \ket{\phi^-}_{yz}\bra{\phi^-}_{yz} = 2(a-b)\overline{H}_{\text{target}}.
\end{align*}
Dividing by a positive global factor of $2(a-b)$, we see that this is exactly the encoding of our target Hamiltonian in the simulator space. 
\end{proof}

\section{Proof of \texorpdfstring{$\StoqMA$-completeness}{StoqMA-completeness}}\label{sec:stoqma}

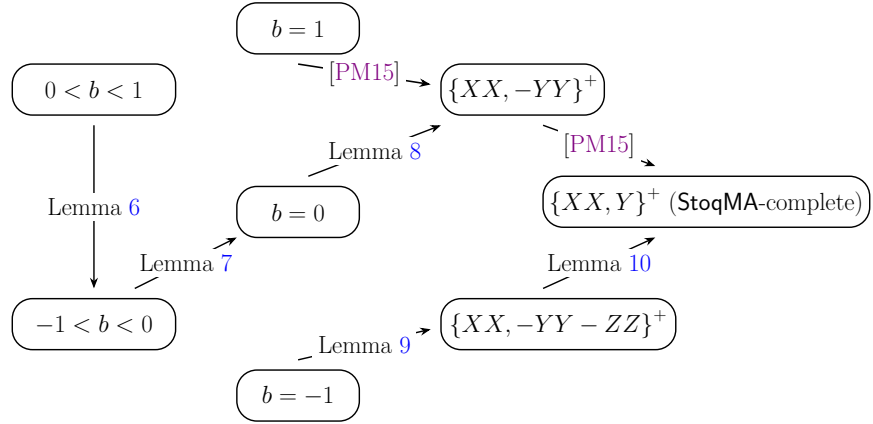
\begin{figure}[!ht]
    \centering
    \resizebox{0.7\textwidth}{!}{
    \begin{circuitikz}
    \tikzstyle{every node}=[font=\Large]
    \draw [line width=1pt, rounded corners=14] (-2,9.25)  rectangle node {\LARGE $0<b<1$} (2,8);
    \draw [line width=1pt, rounded corners=14] (-2,3.5) rectangle node {\LARGE $-1<b<0$} (2,2.25);
    \draw [line width=1pt, rounded corners=14] (3.5,10.75) rectangle node {\LARGE $b=1$} (6.5,9.5);
    \draw [line width=1pt, rounded corners=14] (3.5,6.25) rectangle node {\LARGE $b=0$} (6.5,5);
    \draw [line width=1pt, rounded corners=14] (3.5,1.75) rectangle node {\LARGE $b=-1$} (6.5,0.5);
    \draw [line width=1pt, rounded corners=14] (8.5,9.25) rectangle
        node {\LARGE $\ofc{XX,-YY}^+$} (12.5,8);
    \draw [line width=1pt, rounded corners=14] (8.5,3.5) rectangle
        node {\LARGE $\ofc{XX,-YY-ZZ}^+$} (14.25,2.25);
    \draw [line width=1pt, rounded corners=14] (11,6.5) rectangle
        node {\LARGE $\ofc{XX,Y}^+$ (\StoqMA-complete)} (19,5.25);
    \draw [line width=1pt, ->, >=Stealth] (0,7.75) -- (0,3.75)
        node[pos=0.5, fill=white, font=\LARGE]{\cref{lem:above_xy_to_below_xy}};
    \draw [line width=1pt, ->, >=Stealth] (1,3.75) -- (3.5,5)
        node[pos=0.5, fill=white, font=\LARGE]{\cref{lem:below_xy_to_xy}};
    \draw [line width=1.1pt, ->, >=Stealth] (5,9.25) -- (8.25,8.75)
        node[pos=0.5, fill=white, font=\LARGE]{\cite{piddock2015}};
    \draw [line width=1pt, ->, >=Stealth] (5.25 ,6.5) -- (8.5,7.75)
        node[pos=0.5, fill=white, font=\LARGE]{\cref{lem:xy_to_xx_and_-yy}};
    \draw [line width=1pt, ->, >=Stealth] (5,2) -- (8.25,2.75)
        node[pos=0.5, fill=white, font=\LARGE]{\cref{lem:xxz_to_xx_and_-yy-zz}};
    \draw [line width=1pt, ->, >=Stealth] (11,7.75) -- (13.75,6.75)
        node[pos=0.5, fill=white, font=\LARGE]{\cite{piddock2015}};
    \draw [line width=1pt, ->, >=Stealth] (11,3.75) -- (13.75,5)
        node[pos=0.5, fill=white, font=\LARGE]{\cref{lem:xx_and_-yy-zz_to_tim}};
    \end{circuitikz}
    }
    \caption{\small Outline of the proof of \cref{thm:stoqma}. Unless explicitly noted, an oval region corresponds to the $\ofc{aXX+bYY-ZZ}^+\!$-Hamiltonian with $a>1$ and $b$ as labeled. An arrow pointing from one problem to another means any instance of the latter can be simulated by the former using perturbative gadgets. This constitutes a reduction from the latter to the former. This diagram shows how all Hamiltonians in the orange region of \cref{fig:3_path_flows} ($a>1$, $-1\le b \le 1$) can simulate the $\ofc{XX,Y}^+\!$-Hamiltonian, shown to be $\StoqMA$-hard in \cite{bravyi2014,piddock2015}.}
    \label{fig:stoqma_flowchart}
\end{figure}

We now prove \cref{thm:stoqma}. We outline the steps of our proof in \cref{fig:stoqma_flowchart}. In this directed graph, an arrow from one node to another means the starting node can simulate the ending node, as described in \cref{sec:prelim}. This constitutes a reduction from the ending node to the starting node. These reductions show that all of the orange region of \cref{fig:3_path_flows} (the $\ofc{K}^+\!$-Hamiltonian with $K=aXX+bYY-ZZ$ and $a>1$, $-1\le b\le 1$) can simulate (i.e. \emph{flows} to) the $\ofc{XX,Y}^+\!$-Hamiltonian. This Hamiltonian is the antiferromagnetic transverse-field Ising model with positive weights (TIM). It was shown to be $\StoqMA$-hard in \cite[Theorem 5]{cubitt2016} by reducing from arbitrary-weight TIM, which was shown to be $\StoqMA$-hard in \cite{bravyi2014}.
Thus, the  whole region is $\StoqMA$-hard. The region was already shown to be in $\StoqMA$ by \cref{thm:pm15}, so it is $\StoqMA$-complete. It thus remains to prove the labeled arrows. We now state each of these reductions formally and give brief proof sketches. We defer involved proofs to \cref{apx:main_theorem}.

\begin{lemma}\label{lem:above_xy_to_below_xy}
    For any $a>1$, $0<b<1$, the local term $aXX+bYY-ZZ$
     can simulate some local term  $a'XX-b'YY-ZZ$ where $a'>1$ and $0<b'<1$, to inverse polynomial error.
\end{lemma}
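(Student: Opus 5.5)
The plan is to apply the edge-replacing gadget of \cref{lem:edge_replacing} directly and keep only its second output term. The hypotheses of that lemma are $a>b\ge -1$ and $b<1$. Both hold here: $a>1>b>0$ gives $a>b$ and $b\ge -1$, and $b<1$ is assumed. So the $\ofc{aXX+bYY-ZZ}^+\!$-Hamiltonian simulates the $\ofc{K_1,K_2,K_3}^+\!$-Hamiltonian to inverse polynomial error, where
\begin{align*}
K_2 = a'XX - b'YY - ZZ, \qquad a' = \frac{a^2(a-b)}{1-b}, \qquad b' = \frac{b^2(a-b)}{a+1}.
\end{align*}

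To isolate $K_2$, we take $\mu_1=\mu_3=0$ and $\mu_2=1$ on every edge in the construction from the proof of \cref{lem:edge_replacing}. Then $V_{\text{extra}}=0$ and $V_{\text{main}}=K_{iy}+K_{jz}$, and the target term is exactly $K_2$ on each edge. A $\ofc{K_2}^+\!$-Hamiltonian is in particular a $\ofc{K_1,K_2,K_3}^+\!$-Hamiltonian, so the reduction applies directly. The only care needed is that $\mu_2/\sqrt{\mu_2+\mu_3}$ stays well defined, which it does since $\mu_2>0$.

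It then remains to check that $a'>1$ and $0<b'<1$.

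\textbf{The bound $a'>1$.} Since $a>1$ we have $a^2>1$. Also $a-b>1-b>0$, so $(a-b)/(1-b)>1$. Multiplying the two gives $a'>1$.

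\textbf{The bound $b'>0$.} This holds because $b\neq 0$, $a-b>0$ and $a+1>0$.

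\textbf{The bound $b'<1$.} Use $b^2<1$ (from $0<b<1$) together with $a-b<a+1$ (from $b>-1$). These give $b' < (a-b)/(a+1) < 1$.

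There is no real obstacle here, since the content is all in \cref{lem:edge_replacing}. The only point to watch is that $a'$ stays above $1$ rather than merely positive; that is exactly where $a>1$ (rather than just $a>b$) is used.
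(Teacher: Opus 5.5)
Your proposal is correct and follows the paper's proof. Like the paper, you apply the edge-replacing gadget of \cref{lem:edge_replacing}, keep its $K_2$ output, and check $a'>1$ and $0<b'<1$ with the same elementary bounds: $a^2>1$ together with $(a-b)/(1-b)>1$, and $b^2<1$ together with $(a-b)/(a+1)<1$. Your explicit choice $\mu_1=\mu_3=0$ is just a concrete instance of that lemma; for a weighted target edge you would take $\mu_2=w_{ij}$ rather than $1$.
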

\begin{proof}
    The proof follows directly from applying the edge-replacing gadget in \cref{sec:our_gadgets/edge_replacing}. Then, because $a>1$ we have $a'=\frac{a^2(a-b)}{1-b}>a^2>1$, because $-b<1$ we have $b'=\frac{b^2(a-b)}{a+1}<b^2<1$, and because $a>b$ and $a>-1$, $b'$ is positive.
\end{proof}

\begin{lemma}\label{lem:below_xy_to_xy}
    For any $a>1$, $-1<b<0$, the local term $aXX+bYY-ZZ$ can simulate some local term $a'XX-ZZ$ where $a'>1$, to inverse polynomial error.
\end{lemma}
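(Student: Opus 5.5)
The plan is to find a gadget that produces effective terms whose $YY$ coefficients have \emph{opposite} signs, and then take a positive combination of them in which the $YY$ part cancels exactly. The edge-replacing gadget of \cref{lem:edge_replacing} does not work here. For $-1<b<0$ the terms $K_1,K_2,K_3$ all have a negative $YY$ coefficient ($b<0$ and $-b'<0$ with $b'=\frac{b^2(a-b)}{a+1}>0$), so no positive combination removes $YY$. The flow picture of \cref{fig:3_path_flows} also only drives $|b|$ towards $0$ without reaching it. I will therefore use a first-order vertex-replacing gadget and couple \emph{different} pairs of physical vertices of neighbouring logical qubits.

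\textbf{Step 1 (gadget).} Take $\wt{G}=P_3$, the path on three vertices $1,2,3$ with unit weights, carrying $K=aXX+bYY-ZZ$. I will verify that in the regime $a>1$, $-1<b<0$ the ground space of $H_K(P_3)$ is exactly two-fold degenerate with a constant gap. This is an $8\times 8$ matrix that splits into parity sectors, and each even-parity block is a small explicit matrix. Let $\ket{\chi}$ be the even-parity ground vector. By the reflection symmetry $1\leftrightarrow 3$ we have $t_1^O=t_3^O$ for $O\in\{X,Y,Z\}$.

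\textbf{Step 2 (effective terms).} Using \cref{lem:vertex_replacing} together with \cref{eq:K_tensor_product}, coupling vertex $u$ of one copy to vertex $v$ of its neighbour gives the effective term
\begin{align*}
K^{(uv)} = a\, t^X_u t^X_v\, XX + b\, t^Y_u t^Y_v\, YY - t^Z_u t^Z_v\, ZZ .
\end{align*}
The key claim to check is a sign pattern. I expect $t^Y_1$ and $t^Y_2$ to have opposite signs, because the end and middle spins are anticorrelated in the $Y$ direction. By contrast, $t^X_u t^X_v$ and $t^Z_u t^Z_v$ should have the same sign for $(u,v)=(1,1)$ and $(1,2)$, up to a harmless relabelling of the logical basis. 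If the $Z$ products instead have opposite signs, I will switch to a pair such as $(1,1)$ versus $(2,2)$ or $(1,3)$ and reorganise the same argument.

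\textbf{Step 3 (cancellation).} Given the sign pattern, I add the two couplings $(1,1)$ and $(1,2)$ on every edge of $G$, with positive weights $\mu,\nu$ chosen so that
\begin{align*}
\mu\,(t^Y_1)^2 + \nu\, t^Y_1 t^Y_2 = 0 .
\end{align*}
The resulting effective term has the form $A\,XX - C\,ZZ$ with $C>0$, which after rescaling is $a'XX-ZZ$. Both couplings are part of $V$ in \cref{lem:1st_order_pert}, so one first-order application gives the simulation to inverse polynomial error, since all weights involved are constants.

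\textbf{Main obstacle.} The hardest step is showing $a'=A/C>1$. This needs quantitative control of the $t$'s, specifically that $|t^X|$ dominates $|t^Z|$ enough to survive the cancellation weights. I would first try to compute $\ket{\chi}$ in closed form from the small even-parity block and bound $t^X_u/t^Z_u$ directly. If that is too weak, I would fall back on a two-stage argument. First, as a preliminary simulation, iterate $P_3$ a constant number of times: the flow in \cref{fig:3_path_flows} increases $a$ while keeping $-1<b<0$, so this makes $a$ large. Then do the cancellation step, which leaves $a'$ large as well. The error accumulated over these constantly many steps is controlled by \cref{lem:simulate_chaining}.
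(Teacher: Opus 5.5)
There is a genuine gap. The sign pattern your Step 2 relies on does not hold for $P_3$ anywhere in the region $a>1$, $-1<b<0$.

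Put $p=a-b$ and $q=a+b$; both are positive and $p>q$. The even-parity block of $H_K(P_3)$ has the antisymmetric eigenvector $\ket{011}-\ket{110}$ at energy $0$. On the basis $\ket{000},\tfrac{1}{\sqrt2}(\ket{011}+\ket{110}),\ket{101}$ it equals
\[
N=\begin{pmatrix}-2&\sqrt2 p&0\\ \sqrt2 p&0&\sqrt2 q\\ 0&\sqrt2 q&2\end{pmatrix},\qquad \lambda^3-(4+2p^2+2q^2)\lambda+4(p^2-q^2)=0 .
\]
The ground eigenvalue satisfies $\lambda<-2$, since the cubic equals $8p^2>0$ at $\lambda=-2$. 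The ground vector $(x,y,w)$ has $y=(\lambda+2)x/(\sqrt2 p)$ and $w=\sqrt2 q\,y/(\lambda-2)$, so $x>0>y$ and $0<w<x$. \cref{lem:vertex_replacing} then gives
\[
t_1^Y=\sqrt2\,y\,(x-w)<0,\qquad t_2^Y=2xw-y^2=\frac{x^2(\lambda+2)}{2p^2(\lambda-2)}\bigl(4+4pq-\lambda^2\bigr).
\]
Evaluating the cubic at $\lambda_0=-2\sqrt{1+pq}$ gives $-2\lambda_0(p-q)^2+4(p^2-q^2)>0$. Since the roots sum to zero, the smallest root lies below $\lambda_0$, so $\lambda^2>4+4pq$ and $t_2^Y<0$ as well. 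At $b=0$ it vanishes exactly. For example, at $a=2$, $b=-\tfrac12$ one gets $t_1^Y\approx-0.50$ and $t_2^Y\approx-0.10$.

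So every coupling $(u,v)$ of the $P_3$ gadget yields a $YY$ coefficient $b\,t_u^Yt_v^Y<0$, and no positive combination can cancel it. Relabelling the logical basis cannot fix this: both couplings on an edge join the same two identically encoded logical qubits, so the sign of $t_u^Yt_v^Y$ is intrinsic. Your fallback of first enlarging $a$ by iterating $P_3$ does not help either. The obstruction is a sign, not a magnitude, and $P_3$ keeps $b<0$.

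Your Step 3 mechanism, two couplings with positive weights whose $YY$ parts cancel, is exactly what the paper uses. What is missing is a gadget whose ground state actually has $t^Y$ of opposite signs at two vertices. The paper does this in three moves:
\begin{enumerate}
\item It applies \cref{lem:edge_replacing} a constant number of times, then once more with a positive $K_2/K_3$ combination, to reach $2XX-\wt b\,YY-ZZ$ with $0<\wt b<10^{-7}$.
\item It uses the asymmetric five-vertex tree: the path $1$--$2$--$3$ with two leaves attached to $3$. At $b=0$ its ground state has $t_1^Y\approx-0.398$ and $t_2^Y\approx0.065$. So $K_{12}$ and $K_{22}$ cancel $YY$ with positive weights and leave an $XX$ coefficient above $500$.
\item Weyl's inequality and Davis--Kahan carry these properties over to all $|b|<10^{-7}$.
\end{enumerate}
Shrinking $|b|$ first is what makes this perturbative check around $b=0$ possible. $P_3$ cannot play the tree's role, because its $t_2^Y$ has the wrong sign for every $b<0$ and degenerates to $0$ at $b=0$. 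The asymmetry of the gadget graph is essential.
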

\begin{proof}[{Proof sketch; full proof in \cref{apx:main_theorem/below_xy_to_xy}}]
    This proof has three steps: 
    \begin{enumerate}
        \item We first show that for any constants $(a,b,\mu)$ with $-1<b<1<a$ and $0<\mu \le |b|$, there is a constant $k$ such that recursing the edge-replacing gadget (\cref{lem:edge_replacing}) $k$ times simulates  $a'XX-b' YY-ZZ$ for some $a'>2$ and $0<b'<\mu$.
        \item We then use the edge-replacing gadget (\cref{lem:edge_replacing}) once to simulate  $2XX-b''YY-ZZ$ for some $0<b''<\mu$.
        \item Finally, we apply a vertex-replacing gadget (\cref{lem:vertex_replacing}) with an irregular $5$-node gadget graph. We show there is a small $\epsilon > 0$ such that for any $0 < b'' < \epsilon$, the term $2XX-b'' YY - ZZ$ simulates an effective term $a'' XX - ZZ$  for some $a''>1$.
    \end{enumerate}
    We then choose $\mu = \min(\epsilon, |b|)$ to  complete the proof.
\end{proof}
Combining \cref{lem:above_xy_to_below_xy} and \cref{lem:below_xy_to_xy} constitutes a flow from the entire orange region of \cref{fig:3_path_flows} ($a>1$, $-1 < b<0$) to the line defined by $a>1$, $b=0$.

\begin{lemma}\label{lem:xy_to_xx_and_-yy}
    For any $a > 1$, the local term $aXX-ZZ$ can simulate the local terms $\ofc{XX,-YY}$ to inverse polynomial error.
\end{lemma}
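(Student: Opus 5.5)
Following the introduction, I will use a single vertex-replacing gadget built from a long open XY chain. The gadget graph $\wt{G}$ is the path $P_L$ on $L$ vertices, with $L$ odd and $L=\Theta(\log n)$, weighted uniformly by $K=aXX-ZZ$.

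\textbf{Step 1: diagonalize the chain.} Apply the Jordan--Wigner transformation to $H_K(P_L)$. Because $a\neq 1$ there is pairing, so the chain is a quadratic fermion Hamiltonian with anisotropy $\frac{a-1}{a+1}$. Diagonalize it by a Bogoliubov transformation, in the manner of Lieb--Schultz--Mattis. The Hamiltonian preserves parity $\prod_i X_i$.

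For odd $L$ with open boundaries I expect one exact zero mode. Up to the basis permutation, the Bogoliubov equations reduce to a transfer matrix that alternates $a$ and $-1$. The zero mode then decays geometrically on one sublattice with ratio roughly $1/a$. This exact zero mode makes the ground space exactly two-fold degenerate, with states of opposite parity, as \cref{lem:vertex_replacing} requires. All other modes lie in a band bounded below by roughly $a-1>0$, uniformly in $L$. So the spectral gap $\delta$ is a constant independent of $n$. Verifying this gap carefully, together with the exactness of the degeneracy, is one technical point.

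\textbf{Step 2: compute the logical operators.} Use \cref{lem:vertex_replacing}, or equivalently the fermionic picture: $X_1$ and $Y_1$ at the end of the chain are single Majorana operators. Their projections onto the ground space are the zero-mode amplitudes at site $1$, which are $\Theta(1)$.

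For $Z$-type operators, $(Z_u)_{--}$ is a string of Majoranas. The logical $Z^{(L)}$ corresponds to the product of the two zero-mode Majoranas, which are localized at opposite ends. Its local restriction should therefore be exponentially small in $L$, as $a^{-\Theta(L)}$. With $L=C\log n$ this gives $t_u^Z=O(1/\poly(n))$, while $t_1^X,t_1^Y=\Theta(1)$.

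I also need the relative size of $t_1^X$ and $t_1^Y$. The end-spin couplings should give $|t_1^Y|$ either exponentially small or comparable to $|t_1^X|$, depending on the sublattice. I expect $t_1^X=\Theta(1)$ and $|t_1^Y|\le a^{-\Theta(L)}$. Then coupling endpoints $u=v=1$ gives
\[
(K_{11})_{--}=a(t_1^X)^2 XX+O(1/\poly(n)),
\]
which simulates $XX$.

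For $-YY$ I would instead pick endpoints $u,v$ on the sublattice where the $Y$-amplitude is $\Theta(1)$ and the $X$-amplitude is small. Alternatively, I would choose $u=1$ and $v=L$, so that the sign of $t_u^Yt_v^Y$ is negative and gives $b\,t^Yt^Y$ with effective coefficient $-\Theta(1)$. Since the problem allows separate weights, simulating $\{XX,-YY\}$ means placing different connection types on different edges of $G$. The remaining $O(1/\poly(n))$ error terms are absorbed into $\epsilon$ by \cref{lem:1st_order_pert}, with $\Delta=\poly(n)$.

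\textbf{Main obstacle.} The hard step is the explicit solution of the open odd-length chain. I need exact two-fold degeneracy, an $L$-independent gap, and precise sign and magnitude control of the zero-mode amplitudes at the chosen coupling sites, including a $Y$-coupling whose effective sign is negative. I would first try the transfer-matrix ansatz for the Bogoliubov coefficients with open boundary conditions and read the amplitudes off directly. If the sign structure does not directly produce $-YY$, I would rotate by $Z$ on one sublattice, which is a gauge freedom inside the gadget, or use the endpoint pairing $(1,L)$ to flip the sign.
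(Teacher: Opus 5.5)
Your Step 1 and the endpoint coupling $(K_{11})_{--}$ match the paper's first step: an odd open chain of length $\Theta(\log n)$, Jordan--Wigner plus Bogoliubov, one exact zero mode, and an $L$-independent gap. Your Majorana bookkeeping is off, though. For $K=aXX-ZZ$ the Jordan--Wigner strings must run along $Y$, or you must first swap $Y\leftrightarrow Z$ as the paper does. Then the endpoint Majoranas are $X_1$ and $Z_1$, not $X_1$ and $Y_1$, and your assertions $t_1^Y=\Theta(1)$ and $|t_1^Y|\le a^{-\Theta(L)}$ contradict each other. The relevant exact fact is $|t_1^Z/t_1^X|=a^{-(L-1)/2}$, which gives $(K_{11})_{--}\propto XX-a^{-L}ZZ$. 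This part of your plan is fine.

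The genuine gap is the $-YY$ term. Because $K=aXX-ZZ$ has no $YY$ component, \cref{eq:K_tensor_product} gives $(K_{uv})_{--}=a\,t_u^Xt_v^X\,XX-t_u^Zt_v^Z\,ZZ$ for every choice of $u,v$. So your ``$b\,t^Yt^Y$'' term is identically zero. Reading your plan in the swapped frame $aXX-YY$ does not rescue it. There the ground states are dressed $X$-N\'eel states, so $|t_u^X|=\Theta(1)$, while the amplitude of the spin-flipping Pauli is $a^{-\Theta(L)}$. At both endpoints it is exactly $\Theta(a^{-(L-1)/2})$, by the computation above and reflection symmetry. Consequently:
\begin{itemize}
\item there is no sublattice with a $\Theta(1)$ $Y$-amplitude and a small $X$-amplitude;
\item the pairing $(1,L)$ gives $\pm a(t_1^X)^2XX$ plus an exponentially small correction;
\item a $Z$-rotation on a sublattice changes only signs, not magnitudes.
\end{itemize}
Every connection type is therefore dominated by $\pm XX$.

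The missing idea is to keep the tiny $-ZZ$ residual and amplify it by cancelling $XX$ exactly. Choosing $L=c\log_a n$ makes the gadget output $n^cXX-ZZ$. Applying \cref{lem:edge_replacing} with $b=0$ then simulates $K_2=n^{3c}XX-ZZ$ and $K_3=-n^{3c}XX-ZZ$ simultaneously. Equal weights on $K_2$ and $K_3$ give exactly $-ZZ$. Weight $n^{-3c}$ on $K_2$ gives $XX$ up to $O(n^{-3c})$ per edge. Finally, a transversal $Y\leftrightarrow Z$ interchange (\cref{sec:prelims/normal_form}) turns $\{XX,-ZZ\}$ into $\{XX,-YY\}$.
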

\begin{proof}[{Proof sketch; full proof in \cref{apx:main_theorem/xy_to_xx_and_-yy}}]
    This proof has two steps:
    \begin{enumerate}
        \item We use a vertex-replacing gadget (\cref{lem:vertex_replacing}) with the path graph $P_{n'}$ and $n'=\Omega\of{\log n}$ as the gadget graph. We find the ground state of the gadget graph Hamiltonian by diagonalizing this spin chain; this uses a transformation of Bogoliubov to recast the spin chain as free fermions~\cite{bogoljubov1958}. Analyzing the term $\of{K_{11}}_{--}$ then yields the effective term $K'= XX-\nu ZZ$ for some $\nu=\Theta\of{1/\poly(n)}$. 
        \item  We then use the edge-replacing gadget of \cref{lem:edge_replacing} to simulate $XX$ and $-ZZ$ simultaneously. By \cref{sec:prelims/normal_form}, this is equivalent to simulating $XX$ and $-YY$. \qedhere
    \end{enumerate}
\end{proof}
\begin{lemma}\label{lem:xxz_to_xx_and_-yy-zz}
    For any $a > 1$, the  local term $aXX-YY-ZZ$ can simulate the
    local terms \newline$\ofc{XX,-YY-ZZ}$  to inverse polynomial error.
\end{lemma}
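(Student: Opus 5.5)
Following the intro's sketch, I will mirror the proof of \cref{lem:xy_to_xx_and_-yy}, but replace the path gadget with a vertex-replacing gadget on the complete bipartite graph $K_{L,L-1}$, where $L=\Theta(\log n)$. The strong term on this gadget is $H_0=\sum_{u\in A,v\in B} K_{uv}$ with $K=aXX-YY-ZZ$, $|A|=L$ and $|B|=L-1$. The total number of vertices is $2L-1$, which is odd, so \cref{lem:vertex_replacing} applies.

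\textbf{Diagonalizing the gadget.} The key simplification is that $K_{uv}$ summed over all bipartite pairs depends only on the collective operators $S^P_A=\sum_{u\in A}P_u$ and $S^P_B=\sum_{v\in B}P_v$ for $P\in\{X,Y,Z\}$:
\[
H_0 = a S^X_A S^X_B - S^Y_A S^Y_B - S^Z_A S^Z_B .
\]
The low-energy sector lies in the maximal total-spin subspaces of $A$ and of $B$, with spins $L/2$ and $(L-1)/2$. The interaction is ferromagnetic in $Y,Z$ and antiferromagnetic in $X$, so the ground state tends to anti-align the two large spins along $X$. The net spin is then $1/2$, which gives the two-fold degeneracy (a Kramers-type doublet exchanged by $\prod_i X_i$). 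To justify this I will do three things:
\begin{enumerate}
\item Show the ground state lies in the maximal-spin sectors, e.g.\ by a Perron--Frobenius/stoquasticity argument after rotating to a basis where $H_0$ is stoquastic, together with permutation symmetry within $A$ and $B$.
\item Reduce $H_0$ to a two-large-spin Hamiltonian of dimension $O(L^2)$.
\item Bound its spectral gap.
\end{enumerate}
A constant gap is not needed. A gap of at least $1/\poly(L)$ suffices, since $L=O(\log n)$ and we may take $\Delta=\poly(n)$ in \cref{lem:1st_order_pert}.

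\textbf{Computing effective couplings.} I would attach the external edges to a vertex $u\in A$ and compute $t_u^X,t_u^Y,t_u^Z$ from \cref{lem:vertex_replacing}. In the large-spin (semiclassical) picture, $\langle X_u\rangle$ within the doublet is of order $1/L$ times the net polarization along $X$, which stays $\Theta(1/L)$, maybe $\Theta(1)$ relative to the normalization. The transverse components $t^Y_u$ and $t^Z_u$ correspond to tunneling between the two anti-aligned configurations. I expect these to be suppressed exponentially in $L$, roughly like $\bigl(\tfrac{1}{a}\bigr)^{\Theta(L)}$, since flipping the large $X$-polarization requires of order $L$ spin flips against the anisotropy $a>1$. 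The $Y\leftrightarrow Z$ symmetry of $K$ gives $t^Y_u=\pm t^Z_u$ up to sign conventions. The effective term is then
\[
a(t^X)^2XX-(t^Y)^2YY-(t^Z)^2ZZ \propto XX-\nu(YY+ZZ),
\]
with $\nu=e^{-\Theta(L)}=1/\poly(n)$ after choosing the constant in $L=\Theta(\log n)$.

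\textbf{Separating the terms.} Finally I apply the edge-replacing gadget, \cref{lem:edge_replacing}, or a variant, to the term $XX-\nu YY-\nu ZZ$, in direct analogy with step 2 of \cref{lem:xy_to_xx_and_-yy}. This yields $XX$ and $-YY-ZZ$ as separate terms with polynomial weights. Errors are controlled by \cref{lem:simulate_chaining}.

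\textbf{Main obstacle.} I expect the hardest step to be rigorously establishing that the $K_{L,L-1}$ ground space is exactly two-fold degenerate with a $1/\poly(L)$ gap. Close behind is obtaining the precise exponential scaling of $t^Y_u,t^Z_u$ versus $t^X_u$, which requires explicit control of the two-spin ground state. I would first try writing the state in the $X$-eigenbasis of the collective spins and treating $-S^Y_AS^Y_B-S^Z_AS^Z_B$ as a tridiagonal hopping operator on the $O(L)$ states with fixed total $S^X$ parity. The resulting path-like chain can be analyzed, for example by a transfer-matrix bound on amplitudes decaying geometrically away from the anti-aligned configurations.
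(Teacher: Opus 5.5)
The overall route matches the paper's. You use a $K_{L,L-1}$ vertex-replacing gadget with $L=\Theta(\log n)$ and get exponentially small transverse couplings, giving $XX-\nu(YY+ZZ)$. You then finish with \cref{lem:edge_replacing} at $b=-1$, which returns $b'=1$, so $K_2/a'\approx XX$ and $\tfrac12(K_2+K_3)=-YY-ZZ$. Your collective-spin/Perron--Frobenius reduction is the paper's token-graph reduction in other language. Total $S^X$ itself is conserved, not just its parity, since $S^Y_AS^Y_B+S^Z_AS^Z_B$ is invariant under joint rotations about $X$; that is what makes each symmetric block an $O(L)$ tridiagonal matrix.

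Your gap step is incomplete, though. Perron--Frobenius only shows that the \emph{lowest} state in each total-$S^X$ sector is permutation symmetric. So bounding the gap of the reduced two-large-spin Hamiltonian does not bound the gap of $H_0$. In the two sectors containing the doublet ($S^X_{\mathrm{tot}}=\pm\tfrac12$), the next level can be a non-symmetric state (smaller $s_A$ or $s_B$) that your reduction never sees. The paper handles this by deleting the dominant configuration from that sector's matrix and using Cauchy interlacing. The remaining matrix is still irreducible and permutation invariant, so its top eigenvector is again symmetric, and Gershgorin applies to a tridiagonal matrix. You need this, or a bound on the two-spin Hamiltonian for every smaller pair $(s_A,s_B)$.

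The more substantive missing idea is a preliminary amplification of $a$. You propose to analyze the gadget directly at arbitrary $a>1$, but there is no $a$-independent exponential suppression as $a\to1^+$. At $a=1$, conjugating side $B$ by $X$ turns $H_0$ into the isotropic Heisenberg antiferromagnet on $K_{L,L-1}$. Its ground doublet is an $SU(2)$ spin-$\tfrac12$, so all three $|t^P_u|$ coincide. By continuity in $a$ at fixed $L$, the decay rate must therefore degenerate as $a\to1^+$. For $a$ near $1$ you would need a much sharper, $a$-dependent analysis than the crude transfer-matrix bound you sketch.

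The paper sidesteps this. It first recurses \cref{lem:edge_replacing} with $b=-1$, i.e.\ $aXX-YY-ZZ\mapsto \tfrac{a^2(a+1)}{2}XX-YY-ZZ$, a constant number of times to reach $a\ge 4$. Only then do its elementary estimates hold. The Gershgorin/AM--GM gap bound $(a-2)L-a+\tfrac32$ is vacuous for $a\le 2$. The ratio bound $\overline{\alpha}_j/\overline{\alpha}_{j-1}\le\sqrt2/a$ and the lower bound $1/9$ on the large coefficient use $a\ge 4$. With this step and the interlacing argument added, your plan becomes the paper's proof.

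A minor correction: $t^X_u$ is $\Theta(1)$, not $\Theta(1/L)$, since every spin on the attached side is nearly polarized along $\pm X$. Attaching at $u\in A$ rather than the paper's smaller side is harmless.
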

\begin{proof}[{Proof sketch; full proof in \cref{apx:main_theorem/xxz_to_xx_and_-yy-zz}}]
    This proof has three steps:
    \begin{enumerate}
        \item We recursively use our edge-replacing gadget (\cref{lem:edge_replacing}) a constant number of times to simulate the local term $K'=a'XX-YY-ZZ$ where $a'\ge 4$.
        \item We use a vertex-replacing gadget (\cref{lem:vertex_replacing}) with the complete bipartite graph $K_{L,L-1}$ where $L>L_a$ for some constant $L_a$ depending on $a$. Analyzing $\of{K'_{i,i}}_{--}$ then yields the effective term $K''= g(L)XX-YY-ZZ$ where $g(L)=\Theta\of{\exp\of{L}}$.
        \item We choose $L=\Theta\of{\log n}$, so $g(n)=\Theta\of{\poly\of{n}}$. We then use the edge-replacing gadget of (\cref{lem:edge_replacing}) to simultaneously simulate  $-YY-ZZ$ and  a term close enough to $XX$. \qedhere
    \end{enumerate}
\end{proof}
\begin{lemma}\label{lem:xx_and_-yy-zz_to_tim}
   The local terms $\ofc{XX,-YY-ZZ}$ can simulate the local terms $\ofc{XX,Y}$ (i.e. the TIM) to inverse polynomial error.
\end{lemma}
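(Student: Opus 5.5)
The plan is to go through the intermediate problem $\ofc{XX,-YY}^+$ and avoid simulating a one-body field directly. The flowchart (\cref{fig:stoqma_flowchart}) already records that $\ofc{XX,-YY}^+$ simulates $\ofc{XX,Y}^+$ by \cite{piddock2015}, and \cref{lem:simulate_chaining} lets us compose a constant number of simulation steps. So it suffices to show that $\ofc{XX,-YY-ZZ}$ simulates $\ofc{XX,-YY}$ (equivalently, after a transversal relabeling as in \cref{sec:prelims/normal_form}, an Ising coupling together with a ferromagnetic coupling in a single transverse direction). The whole job is therefore to kill one of the two components $YY$, $ZZ$ of the ferromagnetic XY coupling while keeping $XX$ alive.

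For this I would use a vertex-replacing gadget (\cref{lem:vertex_replacing}) whose internal Hamiltonian breaks the $Y\leftrightarrow Z$ symmetry. The first candidate is a small odd gadget graph (a triangle or a $P_3$) carrying strong terms $w_1 XX + w_2(-YY-ZZ)$ with $w_1,w_2>0$. Because both terms are allowed local terms with positive weights, this is a legitimate $H_0$. For the pure ferromagnetic triangle $w_1=0$, one checks that $\sum_{u<v}(Y_uY_v+Z_uZ_v)=2(S^2-S_x^2)-3$. Its ground space is therefore the pair $\ket{S=\tfrac32,m_x=\pm\tfrac12}$, which is exactly two-fold degenerate with a constant gap. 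This gives a clean logical qubit, on which $X_u$ acts diagonally and $Y_u,Z_u$ act off-diagonally with equal weights.

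Turning on $w_1$ (or changing the graph) splits the $Y$ and $Z$ matrix elements. I would then compute $t^X_u,t^Y_u,t^Z_u$ via \cref{lem:vertex_replacing} for the chosen connection vertices $u,v$, and plug them into \cref{eq:K_tensor_product}. The inter-gadget couplings come in two types. An $XX$ link yields an effective $(t^X)^2$ Ising term. A $-YY-ZZ$ link yields $-(t^Y_ut^Y_v)YY-(t^Z_ut^Z_v)ZZ$, which is anisotropic once $t^Y\ne t^Z$.

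Next I would apply the edge-replacing gadget (\cref{lem:edge_replacing}), possibly recursed a constant number of times as in the proof of \cref{lem:below_xy_to_xy}, to an effective term of the form $a XX - b'YY - ZZ$ with $b'\neq 1$. The map $b'\mapsto b'^2(a-b')/(a+1)$ drives the weaker transverse coefficient toward zero while the $XX$ coefficient grows. In the manner of \cref{lem:xy_to_xx_and_-yy}, one more edge-replacing step then gives the pair $XX$, $-ZZ$ simultaneously (this is $K_2,K_3$ with $b'\approx 0$), and we finish by relabeling Paulis.

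The main obstacle is the first step: finding a gadget whose logical ground space is exactly two-fold degenerate and gapped while genuinely producing $t^Y\neq t^Z$. With Lemma-compatible parity structure, pure $-YY-ZZ$ gadgets are symmetric under rotations about $x$ and force $t^Y=t^Z$, so anisotropy must come from the $XX$ terms. I would first try the triangle with a small $w_1$ and compute the splitting perturbatively. If the symmetry persists, the fallback is to allow a second-order correction that reaches the residual error $1/\poly(n)$ only with polynomially large weights, as in step 3 of \cref{lem:xxz_to_xx_and_-yy-zz}.
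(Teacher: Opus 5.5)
There is a genuine gap. The step meant to produce $\{XX,-YY\}$ cannot work, and the obstruction is stronger than the one you identify. Reducing to $\{XX,-YY\}$ is fine in principle, since the paper's flowchart records that $\{XX,-YY\}^+$ simulates the TIM.

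Every available term commutes with $S_x=\tfrac12\sum_u X_u$. This is obvious for $X_uX_v$, and $Y_uY_v+Z_uZ_v=2(\sigma^+_u\sigma^-_v+\sigma^-_u\sigma^+_v)$, where $\sigma^\pm$ are the raising and lowering operators along the $x$ axis. So the $XX$ terms are just as invariant under rotations about $x$ as the $-YY-ZZ$ terms, and no choice of $w_1$, graph, or weights will split $t^Y$ from $t^Z$.

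Concretely, take any odd gadget with an exactly two-fold degenerate ground space. Then $P_Z=\prod_u Z_u$ commutes with $H_0$ and anticommutes with $S_x$, so the logical states have $m_x=\pm m$ for a positive half-integer $m$.
\begin{itemize}
\item If $m\ge 3/2$, then $(Y_u)_{--}=(Z_u)_{--}=0$, because these operators shift $m_x$ by $\pm 1$.
\item If $m=1/2$, covariance under the logical rotation forces $t^Z_u=s\,t^Y_u$ with a sign $s$ independent of $u$. A $-(YY+ZZ)$ link then projects to $-t^Y_ut^Y_v(YY+ZZ)$, which is again isotropic. Your triangle simply returns $\tfrac19 XX$ and $-\tfrac49(YY+ZZ)$.
\end{itemize}
The edge-replacing gadget behaves the same way. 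Its ancilla ground state $\ket{\phi^-}$ has $m_x=0$, and indeed \cref{lem:edge_replacing} with $b=-1$ outputs $b'=1$.

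Hence the family $\{A\,XX-B(YY+ZZ)\}$ is closed under every gadget you propose, and you never obtain an input with $b'\neq 1$. Your higher-order fallback fails for the same reason. Whenever each logical qubit carries nonzero $S_x$-charge, the effective Hamiltonian commutes with a product of single-qubit logical rotations. No $\{XX,-YY\}$ instance in which some qubit carries both kinds of edge has such a symmetry.

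The missing idea is to encode each logical qubit inside a single charge sector, where the symmetry acts trivially. This is what the paper does, using an even two-qubit gadget that lies outside \cref{lem:vertex_replacing}.
\begin{itemize}
\item A strong $X_{i'}X_{i''}$ term has ground space $\mathrm{span}\{\ket{\psi^-},\ket{\phi^-}\}=\mathrm{span}\{\ket{+-},\ket{-+}\}$, which has $m_x=0$.
\item Weak $X_{i'}X_{j'}$ terms give logical $X^L_iX^L_j$ at first order.
\item Couplings $-(YY+ZZ)$ from a single qubit $f(j)'$ of another pair to both $j'$ and $j''$ give, at second order via \cref{lem:2nd_order_pert}, a logical field $\beta_jZ^L_j$. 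The virtual excitation amplitudes through $j'$ and $j''$ cancel on $\ket{\psi^-}_j$ and add on $\ket{\phi^-}_j$, so the energy shift depends only on the logical state of $j$.
\end{itemize}
This produces the TIM in one step, with no need to pass through $\{XX,-YY\}$.
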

\begin{proof}[{Proof sketch; full proof in \cref{apx:main_theorem/xx_and_-yy-zz_to_tim}}]
This reduction uses a vertex-replacing gadget, replacing each vertex with two physical qubits interacting with the $XX$ term. We can asymmetrically connect neighboring logical qubits with $-YY-ZZ$ to simulate a single-qubit logical operator $Y$.
\end{proof}

\section{Proof of \texorpdfstring{\cref{thm:epr_line_simulates_box}}{reduction to EPR*}}\label{sec:completing_the_classication}
It now remains to prove \cref{thm:epr_line_simulates_box}, which reduces the green region in \cref{fig:3_path_flows} to the \EPRs\ problem. We accomplish this via the edge-replacing gadget in \cref{lem:edge_replacing}.

\begin{proof}
    By \cref{lem:edge_replacing}, $K^b_{\text{EPR}}= XX + bYY - ZZ$ with $-1 \le b < 1$ simultaneously simulates
    \begin{align*}
        K_1 = XX + bYY - ZZ, \quad K_2 = XX - f(b)YY - ZZ, \quad K_3 = -XX - f(b)YY - ZZ,
    \end{align*}
    where $f(b) = \tfrac{b^2(1-b)}{2}$. 
    
    We show that any local term $K' = a'XX + b'YY - ZZ$ with $-1 \le b' \le a' \le 1$ can be simulated by some $K^b_{\text{EPR}}$ with $-1 \le b \le 1$. We proceed in three cases.
    
    \paragraph{Case $b'=1$.}
    If $b' = 1$, then since $b' \le a' \le 1$, we have $a' = 1$. Then $K'$ is exactly $K^1_{\text{EPR}}$.

    \paragraph{Case $0 < b'< 1$.}
    When $0 < b'< 1$ we claim there exists $b \in [0,1)$ such that $(a', b', -1)$
    can be expressed as a convex combination of $K_1$, $K_2$, $K_3$. That is, we seek a value $b \in [0,1)$ 
    satisfying
    \begin{align}
        &p_1 \cdot (1,b,-1) + p_2 \cdot (1,-f(b),-1) + p_3 \cdot (-1,-f(b),-1) = (a',b',-1),
        \label{eq:coordinate_eqs}\\
        &p_1 + p_2 + p_3 = 1, \label{eq:convex}\\
        &p_1, p_2, p_3 \ge 0 \label{eq:nonnegative}.
    \end{align}
    The third coordinate equation of \cref{eq:coordinate_eqs} is automatic given \cref{eq:convex}. So the choice of $b$ uniquely determines $p_1$ and $p_2$ via the first two coordinate equations of \cref{eq:coordinate_eqs}. Thus, we aim to find $b \in [0,1)$ such that solving these two coordinate equations yields $p_1, p_2, 1-p_1-p_2 \ge 0$.
    
    The first coordinate gives $p_1 +p_2 - p_3 = a'$, which combined with \cref{eq:convex} yields
    $p_3 = \tfrac{1-a'}{2}$ and $p_1 + p_2 = \tfrac{1+a'}{2}$. Since $a' \le 1$, $p_3 = \frac{1-a'}{2} \ge 0$.
    For the second coordinate equation, we get
    \begin{align*}
        b p_1 - f(b)(p_2 + p_3) = b' \ \quad  \implies b p_1 - f(b)(1 - p_1) = b' \ \quad \implies p_1 = \frac{b' + f(b)}{b + f(b)}\,.
    \end{align*}
    We set $b := \sqrt{b'}$. Then $b \in [0,1)$ for any $0 < b' < 1$. We must verify that 
    \begin{align*}
        p_1 = \frac{b'+f(b)}{b+f(b)}, \quad p_2 = \frac{1+a'}{2}-p_1,
    \end{align*}
    are both positive.  Plugging in our choice of $b$, we see
    $$
    p_1 = \frac{b' + f(\sqrt{b'})}{\sqrt{b'} + f(\sqrt{b'})} = \frac{2b' + b'(1-\sqrt{b'})}{2\sqrt{b'} + b'(1-\sqrt{b'})} = \frac{3\sqrt{b'} - b'}{2 + \sqrt{b'} - b'}\,.
    $$
    From the second equality, we see that $p_1 > 0$ for any $0 < b' < 1$. Moreover,  since $a' \ge b'$, we have $p_2 = \frac{1 + a'}{2} - p_1 \ge \frac{1 + b'}{2} - p_1$, which is positive exactly when
    $$
    0 < (1 + b')(2 + \sqrt{b'} - b') - 2(3\sqrt{b'} - b') = -(\sqrt{b'} - 1)^3(\sqrt{b'} + 2)\,.
    $$
    This expression is a quartic in $\sqrt{b'}$ with roots at $\{1, -2\}$, and by inspection is positive between these roots. So $p_2 > 0$ for any $0 < b' < 1$.

    \paragraph{Case $b' \le 0$.}
    Note that $-f(b)$ is continuous on $[-1,0]$, with $-f(-1) = -1$ and $-f(0) = 0$. By the intermediate value theorem, there exists $b \in [-1,0]$ such that $-f(b) = b'$. For this choice of $b$, the terms $K_2 = XX + b'YY - ZZ$ and $K_3 = -XX + b'YY - ZZ$ are simulated simultaneously, and their convex combination with weights $\tfrac{1+a'}{2}$ and $\tfrac{1-a'}{2}$ respectively yields $K' = a'XX + b'YY - ZZ$. \qedhere
\end{proof}

\section{Discussion}\label{sec:discussion}

Our work identifies the \EPRs\ problem as a computational phase transition, marking the hardest \KpHam\ problem not known to be at least $\NP$-hard by \cref{thm:pm15,thm:stoqma,thm:complete_classification}. Establishing \cref{conj:bpp} (that \EPRs\ lies in $\BPP$) would complete the classification of the \KpHam\ problem initiated in \cite{piddock2015}. 
It would also imply that \EPR\ is in $\BPP$; so far, we only know a 0.8395-approximation algorithm for this problem~\cite{apte2025b} (see also \cite{sud2025} for a reference for state-of-the-art algorithms).

\paragraph{Extensions}
Our techniques suggest extensions to broader classes of \SpHam\ problems.

\emph{Transverse fields.}
Adding single-qubit transverse-field terms can change the complexity of a Hamiltonian; for example, the $\NP$-complete antiferromagnetic Ising model turns into a $\StoqMA$-complete problem. In other settings, however, transverse fields do not affect complexity (see, e.g., \cite{cubitt2016,takahashi2024}). An interesting open question is whether transverse fields would affect the complexity phases shown in \cref{fig:general_levels_and_phases}.

\emph{Non-symmetric terms.}
We analyze \KpHam\ problems where the interaction term $K$ is symmetric under interchange of qubits. At the opposite end, fully antisymmetric interactions yield $\QMA$-complete problems \cite{cubitt2016}. The intermediate regime, where $K$ has both symmetric and antisymmetric components, is not well understood.

\emph{Multiple interaction terms.}
Our results focus on the case $\mathcal{S} = \ofc{K}$. When $\mathcal{S}$ contains multiple interaction terms, the complexity is less clear. Such Hamiltonians are at least as hard as any convex combination of terms in $\mathcal{S}$. For example, let $\mathcal{S} = \ofc{K_1, K_2}$, where
\begin{align*}
    K_1 &= \tfrac{2}{3}\ket{\psi^+}\bra{\psi^+} - \tfrac{1}{3}\ket{\phi^+}\bra{\phi^+} - \ket{\phi^-}\bra{\phi^-}, \\
    K_2 &= -\tfrac{1}{3}\ket{\psi^+}\bra{\psi^+} + \tfrac{2}{3}\ket{\phi^+}\bra{\phi^+} - \ket{\phi^-}\bra{\phi^-}.
\end{align*}
Both \pHam{K_1} and \pHam{K_2} reduce to \EPRs, whereas their average
\begin{align*}
    K' = \tfrac{1}{2}(K_1 + K_2)
    = \tfrac{1}{3}\ket{\psi^+}\bra{\psi^+}
    + \tfrac{1}{3}\ket{\phi^+}\bra{\phi^+}
    - \ket{\phi^-}\bra{\phi^-}
\end{align*}
is $\StoqMA$-complete by \cref{thm:stoqma}. This demonstrates that mixtures of interaction terms can increase complexity.

\emph{Restricted geometries.}
Many physically relevant models impose geometric constraints on the interaction graph, such as regular lattices. Prior work \cite{bravyi2014,piddock2015} studies triangular and square lattices, where \LHam\ problems often remain hard. It remains open whether \cref{thm:stoqma} can be strengthened to establish $\StoqMA$-hardness under such geometric restrictions.

\emph{Higher locality.}
Extending our framework to $k$-local Hamiltonians presents additional challenges. Unlike the $2$-local case, there is no canonical notion of a maximally entangled state. It is unclear whether states such as $\ket{W}$ or $\ket{GHZ}$ play roles analogous to the singlet and triplet states. A key question is whether similarly simple structural rules can govern computational complexity and phase transitions in this setting.

A useful repository of complexity results for these generalizations of the \KpHam\ problem is given in \cite{waite2023}.

\paragraph{A stronger conjecture}
We conclude with a stronger conjecture suggested by our results. In the toy model, \cref{fig:toy_levels_and_phases} shows that as the singlet moves down in the energy-level ordering of $K$, the corresponding \KpHam\ problem becomes monotonically harder. The same phenomenon appears in the general setting (\cref{fig:general_levels_and_phases}). This suggests the following general principle:
\begin{conjecture}[The singlet conjecture]\label{conj:flow_down}
    Let $K$ be any symmetric $2$-local interaction term, and let $\ket{\tau}$ be any triplet state. Define $K' = K - \ket{\tau}\bra{\tau}$. Then the {\normalfont \pHam{K'}} problem is at most as hard as the {\normalfont \pHam{K}} problem.
\end{conjecture}
In other words, the singlet conjecture suggests that lowering the energy of a triplet state (equivalently, raising the singlet in the ordering) should not increase the complexity of the associated \KpHam\ problem.

This conjecture has strong consequences. The local term $K$ is described by one singlet energy level and three indistinguishable triplet energy levels. Let an \emph{arrangement} of the energy levels denote a weak-ordering (i.e. allowing degeneracies) of these energy levels. We then show
\begin{lemma}\label{lem:energy_orderings}
     There are exactly $20$ unique arrangements of the energy levels of $K$.
\end{lemma}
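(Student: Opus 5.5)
The plan is to make ``arrangement'' precise and then count by conditioning on the triplet levels alone. An arrangement is the weak ordering induced on the four energy levels $E_s$ (singlet) and $E_{\tau_1},E_{\tau_2},E_{\tau_3}$ (triplets). Because the three triplets are indistinguishable, two weak orderings count as the same arrangement when they differ only by a permutation of the triplet labels. Equivalently, an arrangement is determined by two pieces of data. The first is the sorted triplet energies $\alpha\ge\beta\ge\gamma$, recorded only through which of these inequalities are equalities. The second is the position of the singlet energy relative to them; in the normal form of \cref{eq:K_bell_form} this is the sign pattern of $0$ against $\alpha,\beta,\gamma$.

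\textbf{Step 1: count the triplet patterns.} Let $k\in\{1,2,3\}$ be the number of distinct triplet energy values. The triplet pattern is a composition of $3$ into $k$ positive parts, read from the top level down:
\begin{itemize}
\item $k=1$: the single composition $(3)$;
\item $k=2$: the two compositions $(2,1)$ and $(1,2)$, i.e.\ $\alpha=\beta>\gamma$ or $\alpha>\beta=\gamma$;
\item $k=3$: the single composition $(1,1,1)$.
\end{itemize}
These are $1,2,1$ patterns respectively.

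\textbf{Step 2: place the singlet.} Fix a triplet pattern with $k$ distinct values $v_1>\dots>v_k$. The singlet energy can lie strictly inside one of the $k+1$ open gaps, namely above $v_1$, between consecutive values, or below $v_k$. Alternatively it can coincide with one of the $k$ values. This gives $2k+1$ positions. Distinct choices of pattern and position give distinct weak orderings of the multiset $\{s,\tau,\tau,\tau\}$, and every such weak ordering arises this way, so the map is a bijection.

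\textbf{Step 3: realizability and total.} Every combinatorial arrangement is realized by some admissible $K$. Fix the singlet energy at $0$ as in \cref{eq:K_bell_form} and choose real $\alpha\ge\beta\ge\gamma$ with the prescribed equalities and the prescribed signs relative to $0$; this is always possible since the triplet energies are free real parameters. The count is therefore
\[
1\cdot 3+2\cdot 5+1\cdot 7=20 .
\]

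The only real subtlety is in Step 2 together with the definition: checking that the triplet indistinguishability identification neither overcounts nor undercounts. Specifically, two placements of the singlet must never yield the same weak ordering up to triplet relabeling. I would verify this by noting that the pair (composition, singlet position) can be read back from the weak ordering of the multiset. As a sanity check, I would cross-check the total against the $75$ weak orderings of four labeled items, grouped into orbits under the action of $S_3$ on the triplets, or simply list the $20$ cases explicitly.
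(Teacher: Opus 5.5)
Your proof is correct, but it decomposes the count differently from the paper.

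The paper conditions on the singlet first. It fixes the number $t$ of triplets degenerate with the singlet and splits the remaining $s=3-t$ triplets into $m$ above and $s-m$ below. It then counts the groupings on each side by compositions and sums $F(s)=\sum_m f(m)f(s-m)$, giving $1+2+5+12=20$.

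You instead condition on the triplet degeneracy pattern, which is a composition of $3$ into $k$ parts (there are $\binom{2}{k-1}$ of these). You then insert the singlet into one of $2k+1$ slots, giving $3+10+7=20$.

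What your route buys:
\begin{itemize}
\item It makes the bijection with weak orderings of the multiset $\{s,\tau,\tau,\tau\}$ explicit: deleting the singlet recovers the composition.
\item It needs no special handling of degenerate cases such as $s=0$. There the paper's edge cases $m=0$ and $m=s$ coincide and its bullet formula $f(0)f(s)=2^{s-1}$ does not literally apply.
\item It generalizes immediately to $\sum_k\binom{n-1}{k-1}(2k+1)=(n+2)2^{n-1}$ for $n$ indistinguishable levels.
\end{itemize}

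The paper's route is organized by where the singlet sits relative to the triplets. That is exactly the data that determines the complexity class (color) of each arrangement in the figure.

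Your realizability step and the proposed Burnside cross-check are extras the paper omits. The latter does close: $(75+3\cdot 13+2\cdot 3)/6=20$.
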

The proof is deferred to \cref{apx:energy_orderings}. We treat the triplets as indistinguishable because permuting the triplet states does not change the complexity of the corresponding \KpHam\ problem (as explained in \cref{sec:prelims/normal_form}).

In \cref{fig:level_conjectures}, we visualize the implications of \cref{conj:flow_down} across these $20$ arrangements. 
\begin{figure}[ht]
    \centering
    \includegraphics[width=.7\linewidth]{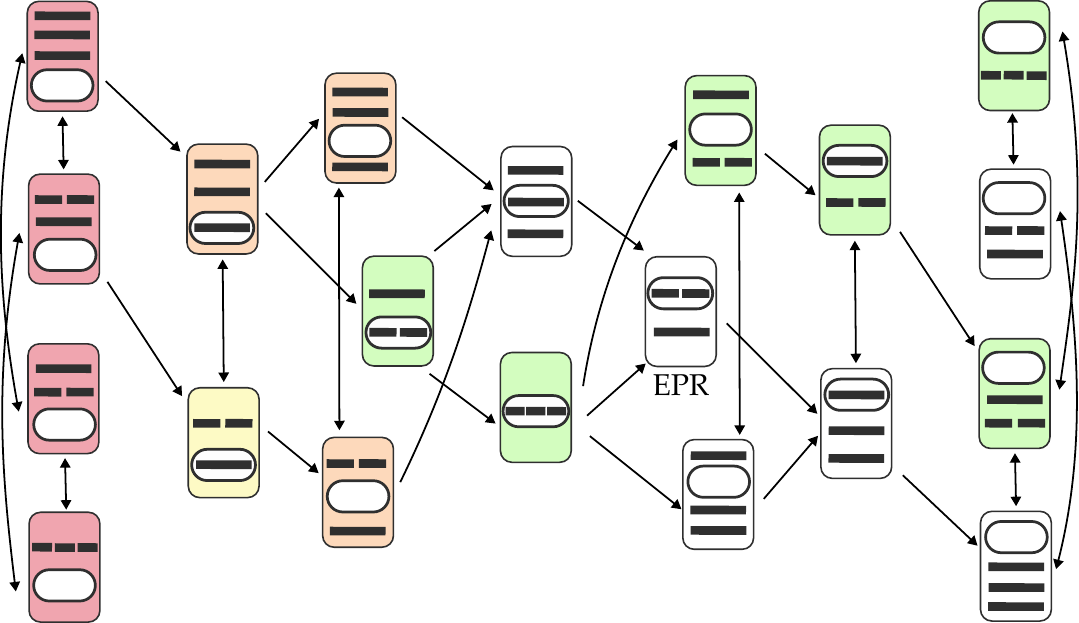}
    \caption{\small The $20$ distinct arrangements of energy levels of $K$. In each diagram, the oval denotes the singlet energy and dashes denote triplet energies. Degenerate triplets are shown side by side, while triplets degenerate with the singlet are drawn inside the oval. Red arrangements are $\QMA$-complete by \cref{thm:pm15}. Orange arrangements are $\StoqMA$-complete by \cref{thm:pm15,thm:stoqma}. The yellow arrangement corresponds to the antiferromagnetic Ising model and is $\NP$-complete. Green arrangements are in $\P$, as their ground space contains two triplets and hence admits a product-state ground state (see, e.g., \cite{cubitt2016}). An arrow indicates that one arrangement can be obtained from another by lowering a triplet energy. Under \cref{conj:flow_down}, a path implies that the target arrangement is at most as hard as the source.}
    \label{fig:level_conjectures}
\end{figure}

This picture shows two consequences of \cref{conj:flow_down}:
\begin{enumerate}
    \item There are green arrangements (which lie in $\P$) that flow to \EPR, implying that \EPR\ is in $\P$. More broadly, \cref{conj:flow_down} also implies \cref{conj:bpp}, as it would place all \KpHam\ problems not already known to be $\NP$-, $\StoqMA$-, or $\QMA$-complete into $\P$.
    \item The $\NP$-complete (yellow) arrangement flows to a $\StoqMA$-complete (orange) region. Under \cref{conj:flow_down}, this would imply $\StoqMA \subseteq \NP$. Since it is already known that $\NP \subseteq \StoqMA$, this would yield $\StoqMA = \NP$.
\end{enumerate}

\section*{AI Disclosure}
We used Claude and ChatGPT to slightly improve our proofs of \cref{thm:epr_line_simulates_box,lem:below_xy_to_xy}, and to improve the presentation of \cref{sec:intro/related_work,sec:discussion}. The authors verified the correctness and originality of all content, including references.

\section*{Acknowledgments}
K.M and J.S. acknowledge that this material is based upon work supported by the National Science Foundation Graduate Research Fellowship under Grant No.\ 2140001. K.M. acknowledges support from AFOSR (FA9550-21-1-0008). J.S acknowledges that this work is funded in part by the STAQ project under award NSF Phy-232580; in part by the US Department of Energy Office of Advanced Scientific Computing Research, Accelerated Research for Quantum Computing Program.

K.M. and J.S would like to acknowledge Anuj Apte, Matt Hastings,  Nick Hunter-Jones, Deven Mithal, Shivan Mittal, Gabriel Waite, and Paul Weigmann for helpful discussion regarding the Bethe ansatz and solvable 1D models and Ojas Parekh, Chaitanya Rayudu, and Jun Takahashi for discussion regarding quantum Monte Carlo methods and the EPR problem.

\printbibliography

@misc{aharonov1999,
  title = {Fault-{{Tolerant Quantum Computation With Constant Error Rate}}},
  author = {Aharonov, Dorit and {Ben-Or}, Michael},
  year = 1999,
  month = jun,
  archiveprefix = {arXiv},
  eprint = {quant-ph/9906129v1},
  langid = {english}
}

@misc{aharonov2025,
  title = {{{StoqMA}} vs. {{MA}}: The Power of Error Reduction},
  shorttitle = {{{StoqMA}} vs. {{MA}}},
  author = {Aharonov, Dorit and Grilo, Alex B. and Liu, Yupan},
  year = 2025,
  month = sep,
  number = {arXiv:2010.02835},
  eprint = {2010.02835},
  primaryclass = {quant-ph},
  publisher = {arXiv},
  archiveprefix = {arXiv},
  langid = {english}
}

@misc{apte2025,
  title = {Conjectured {{Bounds}} for 2-{{Local Hamiltonians}} via {{Token Graphs}}},
  author = {Apte, Anuj and Parekh, Ojas and Sud, James},
  year = 2025,
  month = jun,
  number = {arXiv:2506.03441},
  eprint = {2506.03441},
  primaryclass = {quant-ph},
  publisher = {arXiv},
  archiveprefix = {arXiv}
}

@misc{apte2025b,
  title = {A 0.8395-Approximation Algorithm for the {{EPR}} Problem},
  author = {Apte, Anuj and Lee, Eunou and Marwaha, Kunal and Parekh, Ojas and Sinjorgo, Lennart and Sud, James},
  year = 2025,
  month = dec,
  number = {arXiv:2512.09896},
  eprint = {2512.09896},
  primaryclass = {quant-ph},
  publisher = {arXiv},
  archiveprefix = {arXiv}
}

@misc{apte2025c,
  title = {Improved {{Algorithms}} for {{Quantum MaxCut}} via {{Partially Entangled Matchings}}},
  author = {Apte, Anuj and Lee, Eunou and Marwaha, Kunal and Parekh, Ojas and Sud, James},
  year = 2025,
  month = apr,
  number = {arXiv:2504.15276},
  eprint = {2504.15276},
  primaryclass = {quant-ph},
  publisher = {arXiv},
  archiveprefix = {arXiv}
}

@incollection{baxter1985,
  title = {Exactly {{Solved Models}} in {{Statistical Mechanics}}},
  booktitle = {Integrable {{Systems}} in {{Statistical Mechanics}}},
  author = {Baxter, R. J.},
  year = 1985,
  month = may,
  series = {Series on {{Advances}} in {{Statistical Mechanics}}},
  volume = {Volume 1},
  pages = {5--63},
  publisher = {WORLD SCIENTIFIC},
  doi = {10.1142/9789814415255_0002},
  isbn = {978-9971-978-11-2}
}

@article{baxter1971,
  title = {Eight-{{Vertex Model}} in {{Lattice Statistics}}},
  author = {Baxter, R. J.},
  year = 1971,
  month = apr,
  journal = {Physical Review Letters},
  volume = {26},
  number = {14},
  pages = {832--833},
  publisher = {American Physical Society},
  doi = {10.1103/PhysRevLett.26.832}
}

@article{bethe1931,
  title = {{Zur Theorie der Metalle}},
  author = {Bethe, H.},
  year = 1931,
  month = mar,
  journal = {Zeitschrift f\"ur Physik},
  volume = {71},
  number = {3},
  pages = {205--226},
  issn = {0044-3328},
  doi = {10.1007/BF01341708},
  langid = {ngerman}
}

@article{bogoljubov1958,
  title = {On a New Method in the Theory of Superconductivity},
  author = {Bogoljubov, N. N.},
  year = 1958,
  month = mar,
  journal = {Il Nuovo Cimento (1955-1965)},
  volume = {7},
  number = {6},
  pages = {794--805},
  issn = {1827-6121},
  doi = {10.1007/BF02745585},
  langid = {english}
}

@article{bravyi2006,
  title={Merlin-Arthur games and stoquastic complexity},
  author={Bravyi, Sergey and Bessen, Arvid J and Terhal, Barbara M},
  eprint={quant-ph/0611021},
  year={2006},
    archiveprefix = {arXiv}
}

@misc{bravyi2014,
  title = {On Complexity of the Quantum {{Ising}} Model},
  author = {Bravyi, Sergey and Hastings, Matthew},
  year = 2014,
  month = oct,
  number = {arXiv:1410.0703},
  eprint = {1410.0703},
  primaryclass = {quant-ph},
  publisher = {arXiv},
  archiveprefix = {arXiv}
}

@article{bravyi2017,
  title = {Polynomial-Time Classical Simulation of Quantum Ferromagnets},
  author = {Bravyi, Sergey and Gosset, David},
  year = 2017,
  month = sep,
  journal = {Physical Review Letters},
  volume = {119},
  number = {10},
  eprint = {1612.05602},
  primaryclass = {quant-ph},
  pages = {100503},
  issn = {0031-9007, 1079-7114},
  doi = {10.1103/PhysRevLett.119.100503},
  archiveprefix = {arXiv}
}

@article{crawford1996,
  title = {Experimental Results on the Crossover Point in Random 3-{{SAT}}},
  author = {Crawford, James and Auton, Larry},
  year = 1996,
  month = mar,
  journal = {Artificial Intelligence},
  volume = {81},
  number = {1-2},
  pages = {31--57},
  publisher = {Elsevier},
  issn = {0004-3702},
  doi = {10.1016/0004-3702(95)00046-1},
  langid = {american}
}

@misc{cubitt2016,
  title = {Complexity Classification of Local {{Hamiltonian}} Problems},
  author = {Cubitt, Toby and Montanaro, Ashley},
  year = 2016,
  month = mar,
  number = {arXiv:1311.3161},
  eprint = {1311.3161},
  primaryclass = {quant-ph},
  publisher = {arXiv},
  archiveprefix = {arXiv}
}

@article{deshpande2018,
  title = {Dynamical Phase Transitions in Sampling Complexity},
  author = {Deshpande, Abhinav and Fefferman, Bill and Tran, Minh C. and {Foss-Feig}, Michael and Gorshkov, Alexey V.},
  year = 2018,
  month = jul,
  journal = {Physical Review Letters},
  volume = {121},
  number = {3},
  eprint = {1703.05332},
  primaryclass = {quant-ph},
  pages = {030501},
  issn = {0031-9007, 1079-7114},
  doi = {10.1103/PhysRevLett.121.030501},
  archiveprefix = {arXiv}
}

@article{fabila-monroy2012,
  title = {Token {{Graphs}}},
  author = {{Fabila-Monroy}, Ruy and {Flores-Pe{\~n}aloza}, David and Huemer, Clemens and Hurtado, Ferran and Urrutia, Jorge and Wood, David R.},
  year = 2012,
  month = may,
  journal = {Graphs and Combinatorics},
  volume = {28},
  number = {3},
  pages = {365--380},
  issn = {1435-5914},
  doi = {10.1007/s00373-011-1055-9},
  langid = {english}
}

@misc{farhi2000,
  title = {Quantum {{Computation}} by {{Adiabatic Evolution}}},
  author = {Farhi, Edward and Goldstone, Jeffrey and Gutmann, Sam and Sipser, Michael},
  year = 2000,
  month = jan,
  number = {arXiv:quant-ph/0001106},
  eprint = {quant-ph/0001106},
  publisher = {arXiv},
  archiveprefix = {arXiv}
}

@article{gharibian2019,
  title = {Almost Optimal Classical Approximation Algorithms for a Quantum Generalization of {{Max-Cut}}},
  author = {Gharibian, Sevag and Parekh, Ojas},
  year = 2019,
  journal = {LIPIcs, Volume 145, APPROX/RANDOM 2019},
  volume = {145},
  eprint = {1909.08846},
  primaryclass = {quant-ph},
  pages = {31:1-31:17},
  issn = {1868-8969},
  doi = {10.4230/LIPIcs.APPROX-RANDOM.2019.31},
  archiveprefix = {arXiv}
}

@article{gharibian2023,
  title = {The 7 Faces of Quantum {{NP}}},
  author = {Gharibian, Sevag},
  year = 2023,
  month = dec,
  journal = {ACM SIGACT News},
  volume = {54},
  number = {4},
  eprint = {2310.18010},
  primaryclass = {quant-ph},
  pages = {54--91},
  issn = {0163-5700},
  doi = {10.1145/3639528.3639535},
  archiveprefix = {arXiv}
}

@article{ghosh2023,
  title = {Sharp Complexity Phase Transitions Generated by Entanglement},
  author = {Ghosh, Soumik and Deshpande, Abhinav and Hangleiter, Dominik and Gorshkov, Alexey V. and Fefferman, Bill},
  year = 2023,
  month = jul,
  journal = {Physical Review Letters},
  volume = {131},
  number = {3},
  eprint = {2212.10582},
  primaryclass = {quant-ph},
  pages = {030601},
  issn = {0031-9007, 1079-7114},
  doi = {10.1103/PhysRevLett.131.030601},
  archiveprefix = {arXiv}
}

@misc{gilyen2020,
  title = {({{Sub}}){{Exponential}} Advantage of Adiabatic Quantum Computation with No Sign Problem},
  author = {Gily{\'e}n, Andr{\'a}s and Vazirani, Umesh},
  year = 2020,
  month = nov,
  number = {arXiv:2011.09495},
  eprint = {2011.09495},
  primaryclass = {quant-ph},
  publisher = {arXiv},
  archiveprefix = {arXiv}
}

@misc{gribling2025,
  title = {Improved Approximation Ratios for the {{Quantum Max-Cut}} Problem on General, Triangle-Free and Bipartite Graphs},
  author = {Gribling, Sander and Sinjorgo, Lennart and Sotirov, Renata},
  year = 2025,
  month = apr,
  number = {arXiv:2504.11120},
  eprint = {2504.11120},
  primaryclass = {quant-ph},
  publisher = {arXiv},
  archiveprefix = {arXiv}
}

@misc{hamoudi2026,
  title = {Dequantization {{Barriers}} for {{Guided Stoquastic Hamiltonians}}},
  author = {Hamoudi, Yassine and Borgne, Yvan Le and Sridhara, Shrinidhi Teganahally},
  year = 2026,
  month = feb,
  number = {arXiv:2602.23183},
  eprint = {2602.23183},
  primaryclass = {quant-ph},
  publisher = {arXiv},
  archiveprefix = {arXiv}
}

@article{hastings2021,
  title = {The {{Power}} of {{Adiabatic Quantum Computation}} with {{No Sign Problem}}},
  author = {Hastings, M. B.},
  year = 2021,
  month = dec,
  journal = {Quantum},
  volume = {5},
  eprint = {2005.03791},
  primaryclass = {quant-ph},
  pages = {597},
  issn = {2521-327X},
  doi = {10.22331/q-2021-12-06-597},
  archiveprefix = {arXiv}
}

@misc{hwang2022,
  title = {Unique {{Games}} Hardness of {{Quantum Max-Cut}}, and a Conjectured Vector-Valued {{Borell}}'s Inequality},
  author = {Hwang, Yeongwoo and Neeman, Joe and Parekh, Ojas and Thompson, Kevin and Wright, John},
  year = 2022,
  month = sep,
  number = {arXiv:2111.01254},
  eprint = {2111.01254},
  primaryclass = {quant-ph},
  publisher = {arXiv},
  archiveprefix = {arXiv}
}

@article{jerrum1993,
  title = {Polynomial-{{Time Approximation Algorithms}} for the {{Ising Model}}},
  author = {Jerrum, Mark and Sinclair, Alistair},
  year = 1993,
  month = oct,
  journal = {SIAM Journal on Computing},
  volume = {22},
  number = {5},
  pages = {1087--1116},
  publisher = {{Society for Industrial and Applied Mathematics}},
  issn = {0097-5397},
  doi = {10.1137/0222066}
}

@misc{jorquera2024,
  title = {Monogamy of {{Entanglement Bounds}} and {{Improved Approximation Algorithms}} for {{Qudit Hamiltonians}}},
  author = {Jorquera, Zackary and Kolla, Alexandra and Kordonowy, Steven and Sandhu, Juspreet Singh and Wayland, Stuart},
  year = 2024,
  month = nov,
  number = {arXiv:2410.15544},
  eprint = {2410.15544},
  primaryclass = {quant-ph},
  publisher = {arXiv},
  archiveprefix = {arXiv}
}

@misc{kannan2024,
  title = {A {{Quantum Approximate Optimization Algorithm}} for {{Local Hamiltonian Problems}}},
  author = {Kannan, Ishaan and King, Robbie and Zhou, Leo},
  year = 2024,
  month = dec,
  number = {arXiv:2412.09221},
  eprint = {2412.09221},
  primaryclass = {quant-ph},
  publisher = {arXiv},
  doi = {10.48550/arXiv.2412.09221},
  archiveprefix = {arXiv}
}

@misc{kempe2005,
  title = {The {{Complexity}} of the {{Local Hamiltonian Problem}}},
  author = {Kempe, Julia and Kitaev, Alexei and Regev, Oded},
  year = 2005,
  month = oct,
  number = {arXiv:quant-ph/0406180},
  eprint = {quant-ph/0406180},
  publisher = {arXiv},
  archiveprefix = {arXiv}
}

@article{king2023,
  title = {An {{Improved Approximation Algorithm}} for {{Quantum Max-Cut}}},
  author = {King, Robbie},
  year = 2023,
  month = nov,
  journal = {Quantum},
  volume = {7},
  eprint = {2209.02589},
  primaryclass = {quant-ph},
  pages = {1180},
  issn = {2521-327X},
  doi = {10.22331/q-2023-11-09-1180},
  archiveprefix = {arXiv}
}

@book{kitaev2002,
  title = {Classical and {{Quantum Computation}}},
  author = {Kitaev, A. and Shen, A. and Vyalyi, M.},
  year = 2002,
  month = may,
  series = {Graduate {{Studies}} in {{Mathematics}}},
  volume = {47},
  publisher = {American Mathematical Society},
  address = {Providence, Rhode Island},
  doi = {10.1090/gsm/047},
  langid = {english}
}

@misc{lee2022,
  title = {Optimizing Quantum Circuit Parameters via {{SDP}}},
  author = {Lee, Eunou},
  year = 2022,
  month = sep,
  number = {arXiv:2209.00789},
  eprint = {2209.00789},
  primaryclass = {quant-ph},
  publisher = {arXiv},
  archiveprefix = {arXiv}
}

@misc{lee2024,
  title = {An Improved {{Quantum Max Cut}} Approximation via Matching},
  author = {Lee, Eunou and Parekh, Ojas},
  year = 2024,
  month = feb,
  number = {arXiv:2401.03616},
  eprint = {2401.03616},
  primaryclass = {quant-ph},
  publisher = {arXiv},
  archiveprefix = {arXiv},
  langid = {english}
}

@article{lieb1961,
  title = {Two Soluble Models of an Antiferromagnetic Chain},
  author = {Lieb, Elliott and Schultz, Theodore and Mattis, Daniel},
  year = 1961,
  month = dec,
  journal = {Annals of Physics},
  volume = {16},
  number = {3},
  pages = {407--466},
  issn = {0003-4916},
  doi = {10.1016/0003-4916(61)90115-4}
}

@misc{metlitski2005,
  title = {The {{XY Model}} in {{One Dimension}}},
  author = {Metlitski, Max A},
  year = 2005,
  langid = {english},
url={https://phas.ubc.ca/~berciu/TEACHING/PHYS503/PROJECTS/XYModel2.pdf}
}

@misc{marwaha2024,
  title = {Performance of {{Variational Algorithms}} for {{Local Hamiltonian Problems}} on {{Random Regular Graphs}}},
  author = {Marwaha, Kunal and She, Adrian and Sud, James},
  year = 2024,
  month = dec,
  number = {arXiv:2412.15147},
  eprint = {2412.15147},
  primaryclass = {quant-ph},
  publisher = {arXiv},
  doi = {10.48550/arXiv.2412.15147},
  archiveprefix = {arXiv}
}

@article{mezard2002,
  title = {Analytic and {{Algorithmic Solution}} of {{Random Satisfiability Problems}}},
  author = {M{\'e}zard, M. and Parisi, G. and Zecchina, R.},
  year = 2002,
  month = aug,
  journal = {Science},
  volume = {297},
  number = {5582},
  pages = {812--815},
  publisher = {American Association for the Advancement of Science},
  doi = {10.1126/science.1073287}
}

@article{parekh2021,
  title = {Beating {{Random Assignment}} for {{Approximating Quantum}} 2-{{Local Hamiltonian Problems}}},
  author = {Parekh, Ojas and Thompson, Kevin},
  year = 2021,
  journal = {LIPIcs, Volume 204, ESA 2021},
  volume = {204},
  eprint = {2012.12347},
  primaryclass = {quant-ph},
  pages = {74:1-74:18},
  issn = {1868-8969},
  doi = {10.4230/LIPIcs.ESA.2021.74},
  archiveprefix = {arXiv}
}

@misc{parekh2022,
  title = {An {{Optimal Product-State Approximation}} for 2-{{Local Quantum Hamiltonians}} with {{Positive Terms}}},
  author = {Parekh, Ojas and Thompson, Kevin},
  year = 2022,
  month = jun,
  journal = {arXiv.org},
  howpublished = {https://arxiv.org/abs/2206.08342v2},
  langid = {english}
}

@misc{piddock2015,
  title = {The Complexity of Antiferromagnetic Interactions and {{2D}} Lattices},
  author = {Piddock, Stephen and Montanaro, Ashley},
  year = 2015,
  month = dec,
  number = {arXiv:1506.04014},
  eprint = {1506.04014},
  publisher = {arXiv},
  archiveprefix = {arXiv}
}

@misc{piddock2025,
  title = {Quantum {{Max-Cut}} Is {{NP}} Hard to Approximate},
  author = {Piddock, Stephen},
  year = 2025,
  month = oct,
  number = {arXiv:2510.07995},
  eprint = {2510.07995},
  primaryclass = {quant-ph},
  publisher = {arXiv},
  archiveprefix = {arXiv}
}

@misc{rayudu2025,
  title = {Fast Mixing of Operator-Loop Path-Integral Quantum {{Monte Carlo}} for Stoquastic {{XY Hamiltonians}}},
  author = {Rayudu, Chaithanya and Takahashi, Jun},
  year = 2025,
  month = sep,
  number = {arXiv:2509.21683},
  eprint = {2509.21683},
  primaryclass = {quant-ph},
  publisher = {arXiv},
  archiveprefix = {arXiv}
}

@article{raz2022,
  title={Oracle separation of BQP and PH},
  author={Raz, Ran and Tal, Avishay},
  journal={ACM Journal of the ACM (JACM)},
  volume={69},
  number={4},
  pages={1--21},
  year={2022},
  publisher={ACM New York, NY}
}

@inproceedings{schaefer1978,
  title = {The Complexity of Satisfiability Problems},
  booktitle = {Proceedings of the Tenth Annual {{ACM}} Symposium on {{Theory}} of Computing},
  author = {Schaefer, Thomas J.},
  year = 1978,
  month = may,
  series = {{{STOC}} '78},
  pages = {216--226},
  publisher = {Association for Computing Machinery},
  address = {New York, NY, USA},
  doi = {10.1145/800133.804350},
  isbn = {978-1-4503-7437-8}
}

@article{slavnov2020,
  title = {Introduction to the Nested Algebraic {{Bethe}} Ansatz},
  author = {Slavnov, Nikita},
  year = 2020,
  month = sep,
  journal = {SciPost Physics Lecture Notes},
  pages = {19},
  issn = {2590-1990},
  doi = {10.21468/SciPostPhysLectNotes.19},
  langid = {english}
}

@misc{stolz2014,
  title = {Introduction to the {{Mathematics}} of the {{XY}} -{{Spin Chain}}},
  author = {Stolz, Gunter},
  year = 2014,
  langid = {english},
}

@misc{sud2025,
  title = {Quantum MaxCut Reference},
  author = {Sud, James and Marwaha, Kunal},
  year = {2025},
  url = {https://marwahaha.github.io/quantum-maxcut-reference/}
}

@article{yu2015useful,
  title={A useful variant of the Davis--Kahan theorem for statisticians},
  author={Yu, Yi and Wang, Tengyao and Samworth, Richard J},
  journal={Biometrika},
  volume={102},
  number={2},
  pages={315--323},
  year={2015},
  eprint={1405.0680},
    archiveprefix = {arXiv},
  publisher={Oxford University Press}
}

@misc{takahashi2024,
  title = {Rapidly Mixing Loop Representation Quantum {{Monte Carlo}} for {{Heisenberg}} Models on Star-like Bipartite Graphs},
  author = {Takahashi, Jun and Slezak, Sam and Crosson, Elizabeth},
  year = 2024,
  month = nov,
  number = {arXiv:2411.01452},
  eprint = {2411.01452},
  primaryclass = {quant-ph},
  publisher = {arXiv},
  archiveprefix = {arXiv}
}

@misc{waite2023,
  title = {The {{Hamiltonian Jungle}}},
  author = {Waite, Gabriel and Mann, Ryan L. and Elman, Samuel J},
  year = 2023,
  journal = {The Hamiltonian Jungle},
  url = {https://hamiltonianjungle.xyz/}
}

@article{willms2008,
  title = {Analytic {{Results}} for the {{Eigenvalues}} of {{Certain Tridiagonal Matrices}}},
  author = {Willms, Allan},
  year = 2008,
  journal = {SIAM Journal on Matrix Analysis and Applications},
  volume = {30},
  number = {2}
}

@misc{wong2026,
  title = {Lee-{{Yang}} Tensors and {{Hamiltonian}} Complexity},
  author = {Wong, Benjamin and Bravyi, Sergey and Gosset, David and Liu, Yinchen},
  year = 2026,
  month = feb,
  number = {arXiv:2602.03605},
  eprint = {2602.03605},
  primaryclass = {quant-ph},
  publisher = {arXiv},
  archiveprefix = {arXiv}
}

@article{yueh2005,
  title = {Eigenvalues of Several Tridiagonal Matrices.},
  author = {Yueh, Wen-Chyuan},
  year = 2005,
  journal = {Applied Mathematics E-Notes [electronic only]},
  volume = {5},
  pages = {66--74},
  publisher = {Department of Mathematics, Tsing Hua University},
  issn = {1607-2510},
  langid = {english}
}
\appendix

\crefalias{section}{appendix}
\crefalias{subsection}{appendix}

\clearpage
\newpage

\section{Bell and Pauli relations}\label{apx:bell_pauli_relatons}

\subsection{Mapping from Bell to Pauli picture}\label{apx:bell_pauli_relations/bell_pauli_mapping}

As in \cite[Equation 1]{piddock2015}, we can convert $K$ from the Pauli form to the Bell form via
\begin{align}
    K&=a XX + bYY + c ZZ \nonumber\\
    &= 
    2\ofb{ (a+b) \ket{\psi^+}\bra{\psi^+}
    +(a+c) \ket{\phi^+}\bra{\phi^+}
    +(b+c)\ket{\phi^-}\bra{\phi^-}
    }
    -(a+b+c)I\, \label{pauli_to_bell_map}
\end{align}
Similarly we can convert from the Bell form to the Pauli form via,
\begin{align}
        K &= \alpha \ket{\psi^+}\bra{\psi^+} + \beta\ket{\phi^+}\bra{\phi^+} + \gamma \ket{\phi^-}\bra{\phi^-} \nonumber
        \\
        &=
        \frac{1}{4}\left[ (\alpha + \beta + \gamma)I  
        + (\alpha + \beta - \gamma)XX 
        +  (\alpha - \beta + \gamma)YY 
        +  (-\alpha + \beta + \gamma)ZZ
        \right]\,.
    \label{eq:bell_to_pauli_map}
\end{align}
Neither overall scaling (e.g.,  by $1/4$ or by $2$) nor shifting by identity affects the complexity of the corresponding \KpHam\ problem.

\subsection{Action of single qubit Paulis on Bell states}\label{apx:bell_pauli_relatons/pauli_on_bell}
We present the action of all single-qubit Pauli operators on all Bell states. We present these facts without proof, as they can be easily verified by inspection.
\begin{equation}\label{eq:pauli_on_bell}
    \begin{alignedat}{4}
    X_1|\phi^+\rangle &= |\psi^+\rangle
    &\qquad X_1|\phi^-\rangle &= -|\psi^-\rangle
    &\qquad X_1|\psi^+\rangle &= |\phi^+\rangle
    &\qquad X_1|\psi^-\rangle &= -|\phi^-\rangle \\
    X_2|\phi^+\rangle &= |\psi^+\rangle
    &\qquad X_2|\phi^-\rangle &= |\psi^-\rangle
    &\qquad X_2|\psi^+\rangle &= |\phi^+\rangle
    &\qquad X_2|\psi^-\rangle &= |\phi^-\rangle \\
    Y_1|\phi^+\rangle &= -i|\psi^-\rangle
    &\qquad Y_1|\phi^-\rangle &= i|\psi^+\rangle
    &\qquad Y_1|\psi^+\rangle &= -i|\phi^-\rangle
    &\qquad Y_1|\psi^-\rangle &= i|\phi^+\rangle \\
    Y_2|\phi^+\rangle &= i|\psi^-\rangle
    &\qquad Y_2|\phi^-\rangle &= i|\psi^+\rangle
    &\qquad Y_2|\psi^+\rangle &= -i|\phi^-\rangle
    &\qquad Y_2|\psi^-\rangle &= -i|\phi^+\rangle \\
    Z_1|\phi^+\rangle &= |\phi^-\rangle
    &\qquad Z_1|\phi^-\rangle &= |\phi^+\rangle
    &\qquad Z_1|\psi^+\rangle &= |\psi^-\rangle
    &\qquad Z_1|\psi^-\rangle &= |\psi^+\rangle \\
    Z_2|\phi^+\rangle &= |\phi^-\rangle
    &\qquad Z_2|\phi^-\rangle &= |\phi^+\rangle
    &\qquad Z_2|\psi^+\rangle &= -|\psi^-\rangle
    &\qquad Z_2|\psi^-\rangle &= -|\psi^+\rangle 
    \end{alignedat}
\end{equation}

\section{Proof of \texorpdfstring{\cref{lem:vertex_replacing}}{vertex-replacing gadget}}\label{apx:vertex_replacing_gadget}
Let $P_Z \defeq \prod_{i  \in [\wt{n}]} Z_i$ denote the parity operator in the $Z$ basis. Note that $P_Z$ commutes with $H_K(\wt{G})$, as $P_Z$ commutes with any $X_iX_j$, $Y_iY_j$, and $Z_iZ_j$. Then $H_K(\wt{G})$ simultaneously diagonalizes with $P_Z$, and so $H_K(\wt{G})$ is block-diagonal by parity in the computational basis.

Let $P_X$ be the parity operator in the $X$ basis. For any eigenstate of $H_K(\wt{G})$ in some parity sector (call it $\ket{\psi}$), the state $P_X \ket{\psi}$ is an eigenstate in the other parity sector with the same eigenvalue. Since we assumed the ground state was exactly two-fold degenerate, we must have a unique ground state in each parity sector.

 Let us call the ground state in the even parity sector $\ket{\chi}$. Then, the ground state in the odd parity sector is $P_X \ket{\chi}$. These ground states define our logical $\ket{\uparrow}$ and $\ket{\downarrow}$ states. We may then compute the effective action of some single-qubit operator $O$ in this ground subspace.
\begin{align}\label{eq:O_eff_action}
    O \mapsto \begin{bmatrix}
        \braket{\chi | O |\chi} & \braket{\chi | O P_X |\chi}\\
        \braket{\chi | P_X O |\chi} & \braket{\chi | P_X O P_X |\chi}
    \end{bmatrix}.
\end{align}
Let $\mathcal{B}$ be the set of length-$\wt{n}$ bitstrings with even parity.  
It is useful to expand $\ket{\chi}$ and $P_X\ket{\chi}$ in the computational basis
\begin{align*}
    \ket{\chi} = \sum_{z \in \mathcal{B}} \alpha_z \ket{z}, \quad  P_X\ket{\chi} = \sum_{z \in \mathcal{B}} \alpha_z \ket{\overline{z}},
\end{align*}
where $\overline{z}$ denotes flipping all bits of $z$. 
We now compute the effective action of each Pauli operator at some site $i \in [\wt{n}]$:
\paragraph{X.} We first compute
\begin{align*}
    X_i \ket{\chi} = \sum_{z\in\mathcal{B}} \alpha_z \ket{z^{(i)}}, \quad P_X X_i \ket{\chi} = \sum_{z\in\mathcal{B}} \alpha_z \ket{\overline{z}^{(i)}},
\end{align*}
where $z^{(i)}$ denotes flipping the $i$th bit of $z$. Taking the inner products in \cref{eq:O_eff_action}, we see that $\braket{z_2^{(i)}|z_1}=\braket{\overline{z_2}^{(i)}|\overline{z_1}}=0$ for any $z_1, z_2 \in \mathcal{B}$. Thus, the diagonals vanish. As $P_X X_i=X_i P_X$, the off-diagonals are equal. Furthermore, since $\braket{\chi|P_X X_i|\chi}=\braket{\chi|X_i P_X|\chi}=\braket{\chi|P_X X_i|\chi}^*$, the off diagonals are real. This means that the effective Hamiltonian is Pauli $X$ with coefficient 
\begin{align*}
    \braket{\chi|P_X X_i|\chi} = \of{\sum_{z_2\in\mathcal{B}}\alpha_{z_2}^*\bra{z_2}}\of{\sum_{z_1\in\mathcal{B}}\alpha_{z_1}\ket{\overline{z_1}^{(i)}}} = \sum_{z \in \mathcal{B}} \alpha_z^* \alpha_{\overline{z}^{(i)}}.
\end{align*}
 
\paragraph{Z.} We first compute
\begin{align*}
    Z_i \ket{\chi} = \sum_{z\in\mathcal{B}} \alpha_z (-1)^{z_i}\ket{z}.
\end{align*}
Since $P_X Z_i P_X = -Z_i$, the diagonals in \cref{eq:O_eff_action} are negatives of each other. Since $Z_i$ is Hermitian, the diagonals are real. Since $n$ is odd, $\braket{\overline{z_2}|z_1}=0$ for any $z_1, z_2 \in \mathcal{B}$. Thus, the off-diagonals are $0$. So the effective Hamiltonian is Pauli $Z$ with coefficient
\begin{align*}
    \braket{\chi|Z_i|\chi} = \of{\sum_{z_2\in\mathcal{B}}\alpha_{z_2}^*\bra{z_2}}\of{\sum_{z_1\in\mathcal{B}}\alpha_{z_1}(-1)^{(z_1)_i}\ket{z_1}} = \sum_{z \in \mathcal{B}} |\alpha_z|^2 (-1)^{z_i}.
\end{align*}
\paragraph{Y.} Using $Y_i = i X_i Z_i$, we have
\begin{align*}
    Y_i \ket{\chi} = i \sum_{z\in \mathcal{B}} \alpha_z (-1)^{z_i} \ket{z^{(i)}}, \quad P_X Y_i  \ket{\chi} = i \sum_{z\in \mathcal{B}} \alpha_z (-1)^{z_i} \ket{\overline{z}^{(i)}}.
\end{align*}
Taking the inner products in \cref{eq:O_eff_action}, we see that the diagonals vanish via an identical argument to the $X_i$ case. As $P_X Y_i= -Y_i P_X$, the off-diagonals are negatives of each other. Since $\braket{\chi|P_X Y_i|\chi}=-\braket{\chi|Y_i P_X|\chi}=-\braket{\chi|P_X X_i|\chi}^*$, the off diagonals are strictly imaginary. So the effective matrix is Pauli $Y$ times the coefficient
\begin{align*}
    \frac{1}{i} \braket{\chi | P_X Y_i |\chi} 
    &= \frac{1}{i} \of{\sum_{z_2\in \mathcal{B}} \alpha_{z_2}^* \bra{z_2}}
       \of{i\sum_{z_1\in\mathcal{B}}\alpha_{z_1}(-1)^{(z_1)_i}\ket{\overline{z_1}^{(i)}} } \\
       &= \sum_{z \in \mathcal{B}} \alpha_z^* (-1)^{\of{\overline{z}^{(i)}}_i}\, \alpha_{\overline{z}^{(i)}}
       \\
       &=\sum_{z \in \mathcal{B}} \alpha_z^* (-1)^{z_i} \alpha_{\overline{z}^{(i)}}.
\end{align*}

Thus, the effective action (\cref{eq:O_eff_action}) for $X_i$, $Y_i$, $Z_i$ takes the form of a logical $X$, $Y$, and $Z$ operator, with the derived coefficients.

\section{Proof of \texorpdfstring{\cref{thm:stoqma}}{main theorem}}\label{apx:main_theorem}

The proof of \cref{thm:stoqma} follows from the reduction diagram of \cref{fig:stoqma_flowchart}. In \cref{sec:stoqma} we stated the lemmas corresponding to each arrow in the diagram and provided proof sketches for the involved lemmas. We now provide complete proofs.

\subsection{Proof of \texorpdfstring{\cref{lem:below_xy_to_xy}}{below XY to XY}}\label{apx:main_theorem/below_xy_to_xy}
We follow the proof sketch in \cref{sec:stoqma}. The first step is in \cref{subsub:to_xy_pt_1}, the second step is in \cref{subsub:to_xy_pt_2}, and the third step is in \cref{subsub:to_xy_pt_3}. 
We begin with the local term $aXX + bYY - ZZ$, for arbitrary $a > 1$ and $-1 < b < 0$. We make use of some $0 < \mu \le |b|$ that we fix in the last step.

\subsubsection{Boosting \texorpdfstring{$a$}{a} and shrinking \texorpdfstring{$b$}{b}}\label{subsub:to_xy_pt_1}
We recurse the \cref{lem:edge_replacing} gadget a constant $k = \lceil \max\left( \log_2 (\log_a 2), \log_2 (\log_{|b|} \mu) \right)\rceil$ times, where each time,
\begin{align*}
    aXX + bYY - ZZ \mapsto \frac{a^2(a-b)}{1-b} XX - \frac{b^2(a-b)}{a+1} YY - ZZ\,.
\end{align*}
We verify that this simulates some local term $a_1 XX + b_1 YY - ZZ$, where $a_1 > 2$ and $-\mu < b_1 < 0$:
\begin{itemize}
    \item Recall that $a > 1$. On each iteration, the new $XX$ coefficient is $\frac{a^2(a-b)}{1-b}>a^2 > 1$. After $k$ iterations, the $XX$ coefficient is greater than $a^{2^k}$, and so greater  than $2$.
    \item Recall that $-1 < b$. On each iteration, the new $YY$ coefficient has magnitude $\frac{b^2(a-b)}{a+1} < b^2$, and is negative since $a > b$. After $k$ iterations, the $YY$ coefficient is negative and has magnitude less than $|b|^{2^k}$, and so less than $\mu$.
\end{itemize}

\subsubsection{Restricting \texorpdfstring{$a=2$}{a=2}}\label{subsub:to_xy_pt_2}
We again apply the  \cref{lem:edge_replacing} gadget to simulate a positive linear combination of the second and third output terms ($K_2$ and $K_3$), generating the local term
$$
(\mu_2 - \mu_3) \wt{a} XX - (\mu_2 + \mu_3) \wt{b} YY - ZZ\,,
$$
where 
\begin{align*}
    \wt{a} \defeq \frac{a_1^2 (a_1-b_1)}{1-b_1}, \qquad \wt{b} \defeq \frac{b_1^2 (a_1-b_1)}{a_1+1}\,.
\end{align*}
Since $a_1 > 2$, we have $\wt{a} > 2$. This allows us to use the positive linear combination
$$
\mu_2 = \frac{1}{2} + \frac{1}{\wt{a}} \ \quad \ \mu_3 = \frac{1}{2} - \frac{1}{\wt{a}} \,.
$$
The new $XX$ coefficient is $(\mu_2 - \mu_3) \wt{a} = 2$, and the new $YY$ coefficient is $-(\mu_2 + \mu_3) \wt{b} = - \wt{b}$. As before, $-\wt{b}$ is negative, and has magnitude at most $b_1^2 < \mu$. So we have simulated the local term $2XX - \wt{b} YY - ZZ$ for some $0 < \wt{b} < \mu$.

\subsubsection{Simulating \texorpdfstring{$b=0$}{b=0}}\label{subsub:to_xy_pt_3}
Finally, we apply a vertex-replacing gadget (\cref{sec:our_gadgets/vertex_replacing}), with the five node graph $T$ in Figure~\ref{fig:uneven_5_star} as our gadget graph to simulate $a''XX-ZZ$ for some $a''> 1$. We verify this gadget has the properties we desire when $\mu > 0$ is a small enough constant. Mathematica code to verify our analysis of this gadget is available \href{https://github.com/jamessud/A-complexity-phase-transition-at-the-EPR-Hamiltonian}{here}.

\begin{figure}[ht]
    \centering
    \resizebox{0.25\textwidth}{!}{
    \begin{circuitikz}
    \tikzstyle{every node}=[font=\Huge]
    \draw  (0.25,15.5) circle (0.5cm) node{1};
    \draw  (3.25,15.5) circle (0.5cm) node{2};
    \draw  (6.25,15.5) circle (0.5cm) node{3};
    \draw  (8.75,17.5) circle (0.5cm) node{4};
    \draw  (8.75,13.5) circle (0.5cm) node{5};
    \draw [short] (8.4,13.9) -- (6.75,15.25);
    \draw [short] (8.4,17.1) -- (6.75,15.75);
    \draw [short] (5.75,15.5) -- (3.75,15.5);
    \draw [short] (2.75,15.5) -- (0.75,15.5);
    \end{circuitikz}
    }
    \caption{\small The five-node gadget graph $T$.}
    \label{fig:uneven_5_star}
\end{figure}
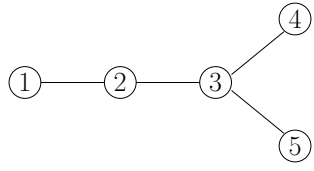

Let $H_T(b)$ be the Hamiltonian on the graph $T$ with local interaction term $2XX + bYY - ZZ$.
As described in \cref{lem:vertex_replacing}, $H_T(b)$ is block diagonal in the even and odd parity sectors. 
Consider the submatrix $M(b)$ in the even-parity sector, given by
\setcounter{MaxMatrixCols}{16}
\begin{equation*}
\resizebox{\textwidth}{!}{$
\begin{pmatrix}
 -4 & 0 & 2-b & 2-b & 0 & 0 & 2-b & 0 & 0 & 0 & 0 & 0 & 2-b & 0 & 0 & 0 \\
 0 & 0 & b+2 & b+2 & 0 & 0 & 0 & 2-b & 0 & 0 & 0 & 0 & 0 & 2-b & 0 & 0 \\
 2-b & b+2 & 0 & 0 & b+2 & 0 & 0 & 0 & 0 & 0 & 0 & 0 & 0 & 0 & 2-b & 0 \\
 2-b & b+2 & 0 & 0 & 0 & b+2 & 0 & 0 & 0 & 0 & 0 & 0 & 0 & 0 & 0 & 2-b \\
 0 & 0 & b+2 & 0 & 2 & 0 & b+2 & 2-b & b+2 & 0 & 0 & 0 & 0 & 0 & 0 & 0 \\
 0 & 0 & 0 & b+2 & 0 & 2 & b+2 & 2-b & 0 & b+2 & 0 & 0 & 0 & 0 & 0 & 0 \\
 2-b & 0 & 0 & 0 & b+2 & b+2 & 2 & 0 & 0 & 0 & b+2 & 0 & 0 & 0 & 0 & 0 \\
 0 & 2-b & 0 & 0 & 2-b & 2-b & 0 & -2 & 0 & 0 & 0 & b+2 & 0 & 0 & 0 & 0 \\
 0 & 0 & 0 & 0 & b+2 & 0 & 0 & 0 & 0 & 0 & b+2 & 2-b & 0 & 0 & 2-b & 0 \\
 0 & 0 & 0 & 0 & 0 & b+2 & 0 & 0 & 0 & 0 & b+2 & 2-b & 0 & 0 & 0 & 2-b \\
 0 & 0 & 0 & 0 & 0 & 0 & b+2 & 0 & b+2 & b+2 & 4 & 0 & b+2 & 0 & 0 & 0 \\
 0 & 0 & 0 & 0 & 0 & 0 & 0 & b+2 & 2-b & 2-b & 0 & 0 & 0 & b+2 & 0 & 0 \\
 2-b & 0 & 0 & 0 & 0 & 0 & 0 & 0 & 0 & 0 & b+2 & 0 & -2 & 0 & 2-b & 2-b \\
 0 & 2-b & 0 & 0 & 0 & 0 & 0 & 0 & 0 & 0 & 0 & b+2 & 0 & 2 & b+2 & b+2 \\
 0 & 0 & 2-b & 0 & 0 & 0 & 0 & 0 & 2-b & 0 & 0 & 0 & 2-b & b+2 & -2 & 0 \\
 0 & 0 & 0 & 2-b & 0 & 0 & 0 & 0 & 0 & 2-b & 0 & 0 & 2-b & b+2 & 0 & -2 \\
\end{pmatrix}.
$}
\end{equation*}
The eigenvalues are the roots of the characteristic polynomial
\begin{align*}
    p(\lambda,b)&\defeq \lambda \times 
    \left(-4 \left(b^2+5\right) \lambda -32 b+\lambda ^3\right) \\
    &\hspace{10px}\times 
    \Bigg(
    \lambda^{12} 
    -28 \left(b^2+5\right) \lambda^{10} 
    -224 b \lambda^9
    +16 \left(15 b^4+170 b^2+391\right) \lambda^8+3328 b \left(b^2+5\right) \lambda ^7 \\
    &
    \hspace{30px}+64 \left(13 b^6+225 b^4+1157 b^2+1745\right) \lambda^6
    -512 b \left(5 b^2+13\right) \left(5 b^2+53\right) \lambda^5 \\
    &\hspace{30px}+512 \left(2 b^8+55 b^6+308 b^4+1235 b^2+1454\right) \lambda^4
    +4096 b \left(3 b^6+80 b^4+456 b^2+515\right) \lambda^3
    \\
    &\hspace{30px}+4096 \left(5 b^8+33 b^6+25 b^4+497 b^2+340\right) \lambda^2 
    -32768 b \left(b^2+5\right) \left(5 b^4+17 b^2+20\right) \lambda \\
    &\hspace{30px}+4096 \left(9 b^8-90 b^6+809 b^4-360 b^2+144\right)
     \Bigg).
\end{align*}
This is a product of polynomials of degree $1$, $3$, and $12$.
The eigenvalues $\lambda$ are implicitly functions of $b$. For any fixed $b$, we order the eigenvalues as
\begin{align*}
    \lambda_1(b) \le \lambda_2(b) \le \cdots \le \lambda_{16}(b),
\end{align*}
and write $\vect{v}_i(b)$ as the eigenvector corresponding to $\lambda_i(b)$.

Now, for a fixed value of $b$, assume that the spectral gap is constant and the ground state is non-degenerate. Then, \cref{lem:vertex_replacing,eq:K_tensor_product} give
\begin{align*}
    &\of{K_{12}}_{--} = 2\,  t_1^X t_2^X X\otimes X + b\, t_1^Y t_2^Y Y\otimes Y -t_1^Z t_2^Z Z\otimes Z, \\
    &\of{K_{22}}_{--}= 2\,  t_2^X t_2^X X\otimes X + b\, t_2^Y t_2^Y Y\otimes Y -t_2^Z t_2^Z Z\otimes Z,
\end{align*}
The Pauli strengths are determined by the unique ground state $\vect{v}_1(0)$, and we drop the dependence on $b$ for notational convenience.

Now take combination $p_{12}K_{12}+p_{22}K_{22}$.
We aim to prepare some term $a''XX+0\,YY-ZZ$ with $a''>1$. This means we need non-negative $p_{12}$, $p_{22}$ such that
\begin{align}
    2\,t_2^X\of{p_{12}\,t_1^X+p_{22}\,t_2^X} &>1, \label{eq:5_star_xx} \\
    b\, t_2^Y\of{p_{12}\,t_1^Y+p_{22}\,t_2^Y} &=0, \label{eq:5_star_yy}\\
    -t_2^Z\of{p_{12}\,t_1^Z+p_{22}\,t_2^Z} &=-1. \label{eq:5_star_zz}
\end{align}
One way to satisfy \cref{eq:5_star_yy} for any $b$ is to ensure
\begin{align}
    p_{12}\,t_1^Y+p_{22}\,t_2^Y =0 \implies p_{22} = -p_{12}\cdot \frac{t_1^Y}{t_2^Y}. \label{eq:5_star_p_22}
\end{align}
Substituting this into \cref{eq:5_star_zz} yields
\begin{align}
    &t_2^Z \of{p_{12}t_1^Z-p_{12}\cdot \frac{t_1^Y}{t_2^Y}\cdot  t_2^Z} = 1,\nonumber\\
    &\implies p_{12} \of{t_1^Z t_2^Z - \frac{t_1^Y}{t_2^Y}\of{t_2^Z}^2}=1,\nonumber\\
    &\implies p_{12} = \of{t_1^Z t_2^Z - \frac{t_1^Y}{t_2^Y}\of{t_2^Z}^2}^{-1}\!\!. \label{eq:5_star_p_12}
\end{align}
For a fixed value of $b$, it thus remains to confirm:
\begin{enumerate}
    \item The spectral gap $\delta$ is constant.
    \item The ground state is simple (non-degenerate).
    \item $p_{12}$ and $p_{22}$ from \cref{eq:5_star_p_12,eq:5_star_p_22} are nonnegative.
    \item \cref{eq:5_star_xx} is satisfied.
\end{enumerate}

We start by analyzing the case $b=0$. By inspection 
\begin{align*}
    \vect{\lambda}(0) \approx \big(-&8.613, -5.603, -4.751, -4.472, -3.036, -1.427, -0.773, \\
    &0, 0, 0.773, 1.427, 3.036, 4.472, 4.751, 5.603, 8.613\big).
\end{align*}
In particular, $\lambda_1(0)$ is a simple eigenvalue and there is a constant spectral gap $\delta>3$. The (unnormalized) ground state vector is 
\begin{align*}
    \vect{v}_1(0) \approx \big(
    &1.532, 0.715, -0.874, -0.874, 0.519, 0.519, -0.573, -0.708, \\
    &-0.599, -0.599, 0.473, 0.587, \
    -1.211, -0.622, 1.0, 1.0\big).
\end{align*}
The norm only contributes a constant global multiplicative factor that can be absorbed into $p_{12}$, $p_{22}$.
We may then compute the effective single-qubit Pauli terms using \cref{lem:vertex_replacing} with $\vect{v_1}\big\vert_{b=0}$ as the ground state eigenvector
\begin{align*}
    \begin{alignedat}{3}
    t_1^X\big\vert_{b=0} &\approx -9.208, \quad &
    t_1^Y\big\vert_{b=0} &\approx -0.398, \quad &
    t_1^Z\big\vert_{b=0} &\approx 0.616, \\
    t_2^X\big\vert_{b=0} &\approx  9.072, \quad &
    t_2^Y\big\vert_{b=0} &\approx  0.065, \quad &
    t_2^Z\big\vert_{b=0} &\approx  0.450.
    \end{alignedat}
\end{align*}
Plugging these values into \cref{eq:5_star_p_12,eq:5_star_p_22,eq:5_star_xx} yields
\begin{align*}
    p_{12} \approx 0.6582\ge 0, \quad p_{22} \approx 4.410\ge0, \quad  2\,t_2^X\of{p_{12}\,t_1^X+p_{22}\,t_2^X}>555.218>1.
\end{align*}
We now show that for $\epsilon \defeq 10^{-7}$, this gadget retains the desired properties for any $b < 0$ and $|b| < \epsilon$. Write $M(b) = M(0) + b \cdot A$, where every element of $A$ takes value in $\{-1, 0, 1\}$. 
By Weyl's inequality, for any $i$, $|\lambda_i(b) - \lambda_i(0)| \le 16|b|$.
Since $|b| < \epsilon <  \frac{1}{2} \cdot \frac{3}{16}$, (1) the spectral gap will remain constant and (2) the minimum eigenvalue $\lambda_1(b)$ is simple.

We now invoke the Davis-Kahan theorem (e.g., \cite[Corollary 3]{yu2015useful}):
$$
\sqrt{1 - \vect{v}_1(b) \cdot \vect{v}_1(0)} \le \frac{2 \cdot 16|b|}{3}\,.
$$
This implies $\|\vect{v}_1(b) - \vect{v}_1(0)\| \le \sqrt{2} \cdot \frac{2 \cdot 16|b|}{3}$, and so any amplitude of $\vect{v}_1(0)$ changes by at most $16 |b|$.

Inspecting \cref{lem:vertex_replacing}, each Pauli strength (such as $t_1^X$) is a sum of 16 quadratic monomials of elements of $\vect{v}_1(b)$. The difference $(p+ \alpha)(q+\alpha) - pq = \alpha(p + q + \alpha)$. Each element of $\vect{v}_1(0)$ has magnitude at most $2$. As a result, each $t$ changes in magnitude by at most
$$
16 \cdot \left(16|b| \cdot (2 + 2 + 16|b|) \right)
$$
This value is at most $1500|b|$ since $|b| < \epsilon <  1/16$. Finally, we look at the effect on $p_{12}, p_{22}$. We observe $1500|b| < 1500 \epsilon < 2 \cdot 10^{-4}$, so each $t$ varies by at most $2 \cdot 10^{-4}$. We conduct interval arithmetic on $p_{12}$ and $p_{22}$, which implies $0.646 \le p_{12} \le 0.672$ and $3.88 \le p_{22} \le 4.19$. So (3) $p_{12} > 0$ and $p_{22} > 0$. We again use interval arithmetic to ensure (4) \cref{eq:5_star_xx} is satisfied:
$$
2 t_2^X (p_{12} t_1^X + p_{22} t_2^X) > 18(0.672 \cdot (-9.21) + 3.88 \cdot 9.07) > 522 > 1\,.
$$
To complete the proof, we may choose any $0<\mu <\epsilon = 10^{-7}$.

\subsection{Proof of \texorpdfstring{\cref{lem:xy_to_xx_and_-yy}}{XY to XX, -YY}}\label{apx:main_theorem/xy_to_xx_and_-yy}

We follow the proof sketch in \cref{sec:stoqma}. The first step is in \cref{subsub:xy_chain_pt_1_start,subsub:xy_chain_pt_1_end} and the second step is in \cref{subsub:xy_chain_pt_2}. 

\subsubsection{Diagonalizing the chain}\label{subsub:xy_chain_pt_1_start}
For the first step, we use a vertex-replacing gadget (\cref{sec:our_gadgets/vertex_replacing}). For our gadget graph, we choose a path graph of odd length ($L \defeq 2m - 1$).

To analyze this gadget graph, we use notation from the physics literature on the antiferromagnetic $XY$ model (i.e. local term $aXX-YY$). We first interchange $YY$ and $ZZ$ (\cref{sec:prelims/normal_form}).
We then define $r\defeq -a$ and $\gamma=\frac{r-1}{r+1}$ to rewrite the local term as
\begin{align*}
    -rX_iX_j-Y_iY_j = \frac{-r-1}{2} \cdot \left( (1 + \gamma) X_i X_{j} + (1-\gamma) Y_i Y_{j}  \right)\,,
\end{align*}
where $r<-1$, $\gamma>1$. We remove the global scaling of $\frac{-r-1}{2}$. The gadget Hamiltonian $H_{K}(G')$ is then given by 
\begin{align}
    \label{eqn:xyspinchain}
        H = \sum_{j=1}^{L-1} H_{j,j+1} = \sum_{j=1}^{L-1} (1 + \gamma) X_j X_{j+1} + (1-\gamma) Y_j Y_{j+1} \,.
\end{align}
where we suppress $K$ and $G'$ for notational convenience. In order to apply the tools of \cref{sec:our_gadgets/vertex_replacing}, we must first show the ground space of $H$ is exactly two-dimensional with a constant spectral gap. 
We proceed by exactly diagonalizing \cref{eqn:xyspinchain}, which was first achieved in~\cite{lieb1961}. To keep this section self-contained, we repeat this calculation using our choice of open boundary conditions and odd length. Our analysis is very similar (but not the same) as the notes of 
\cite{metlitski2005}.

First we use the raising and lowering operators $a_j = \frac{1}{2} (X_j - iY_j)$, $a_j^\dagger = \frac{1}{2} (X_j + iY_j)$. These obey the relations
\begin{align*}
(\forall j \ne k) \quad [a_j,a_k^\dagger] &= [a_j,a_k] = 0\,,
\\
\{Z_j, a_j\} &= Z_j a_j + a_j Z_j  = - a_j + a_j = 0\,,
\\
    \{a_j^\dagger, a_j\} &= a_j^\dagger a_j  + a_j a_j^\dagger = \frac{1}{2}(X_j^2 -i^2 Y_j^2) = I\,.
\end{align*}
This implies
\begin{align*}
X_j X_{j+1} + Y_j Y_{j+1}
&= (a_j + a_j^\dagger) (a_{j+1} + a_{j+1}^\dagger)
+ (i \cdot (a_j - a_j^\dagger)) (i \cdot (a_{j+1} - a_{j+1}^\dagger))
\\
&=(a_j + a_j^\dagger) (a_{j+1} + a_{j+1}^\dagger) - (a_j - a_j^\dagger) (a_{j+1} - a_{j+1}^\dagger)
\\
&= 2( a_j^\dagger a_{j+1} + a_j a_{j+1}^\dagger)
\\
&= 2( a_{j+1} a_j^\dagger + a_j a_{j+1}^\dagger)\,.
\end{align*}
Similarly,
\begin{align*}
    X_j X_{j+1} - Y_j Y_{j+1} = 2(a_j a_{j+1} +  a_{j+1}^\dagger a_j^\dagger)\,.
\end{align*}
We now make the Jordan-Wigner transformation $c_j = (Z_1 \dots Z_{j-1}) a_j$,  $c_j^\dagger = (Z_1 \dots Z_{j-1}) a_j^\dagger$. Then $c_j, c_j^\dagger$ satisfy the canonical anticommutation relations (CAR):
\begin{align*}
    \{c_j,c_k^\dagger\} = \delta_{jk} I \quad \quad \{c_j,c_k\} = \{c_j^\dagger, c_k^\dagger\} = 0\,.
\end{align*}
One can verify this by noticing that $\{c_j, c_j^\dagger\} = \{a_j, a_j^\dagger\} = I$, and that $\{Z_j, a_j\} = 0$. 

We now rewrite $H$ using $a_{j+1} a_j^\dagger  = a_{j+1} Z_j a_j^\dagger =c_j^\dagger c_{j+1}$ and $a_j a_{j+1} = a_j Z_j a_{j+1} = c_j c_{j+1}$:
\begin{align*}
    H &= 2 \sum_{j=1}^{L-1} \left(  a_{j+1} a_j^\dagger + a_j a_{j+1}^\dagger + \gamma(a_j a_{j+1} + a_{j+1}^\dagger a_j^\dagger) \right)
    \\
    &= 2 \sum_{j=1}^{L-1} \left( 
    c_j^\dagger c_{j+1} + c_{j+1}^\dagger c_j + \gamma( c_j c_{j+1} + c_{j+1}^\dagger c_j^\dagger)
    \right) \,.
\end{align*}
Using the CAR, we may rewrite $H = \vect{c}^{\,\dagger} M \vect{c}$ where $\vect{c} = (c_1, \dots, c_n, c_1^\dagger, \dots, c_n^\dagger)^T$ and
\begin{align*}
    M = \begin{bmatrix}
        A & B \\
        -B & -A
    \end{bmatrix} \quad \quad 
    A = \begin{bmatrix}
        0 & 1 & 0  & \dots \\
        1 & 0 & 1 & \dots \\
        0 & 1 & 0 & \dots \\ 
        0 & 0 & 1 & \dots
    \end{bmatrix} \quad \quad
    B =  \gamma \cdot \begin{bmatrix}
        0 & 1 & 0  & \dots \\
        -1 & 0 & 1 & \dots \\
        0 & -1 & 0 & \dots \\ 
        0 & 0 & -1 & \dots
    \end{bmatrix} \,.
\end{align*}
Next, we make the Bogoliubov transformation. Let $S = A + B$, and let $U S V^\dagger = \Lambda$ be the singular value decomposition of $S$ with singular values $\{\lambda_1, \dots, \lambda_n\}$. Then $\Lambda = \Lambda^{\dagger} = V S^\dagger U^\dagger = V(A-B)U^\dagger$. Then we can diagonalize $M$; i.e.
\begin{align*}
    W = \frac{1}{2} \begin{bmatrix}
        V+U & V - U \\
        V - U & V + U 
    \end{bmatrix}
    \quad \quad
    W M W^\dagger = \begin{bmatrix}
        \Lambda & 0  \\
        0 & - \Lambda
    \end{bmatrix}\,.
\end{align*}
Let $\vect{b} = W \vect{c}$. Then
$$
b_j =\frac{1}{2} \sum_k V_{jk}(c_k + c^\dagger_k) + U_{jk}(c_k - c^\dagger_k)\,.
$$
Using the CAR of $c$ and unitarity of $U$ and $V$, we have
\begin{align*}
\{b_j^\dagger, b_j\} &= \frac{1}{2} \sum_k |V_{jk}|^2(c_k + c_k^\dagger)(c_k + c_k^\dagger) - |U_{jk}|^2(c_k - c_k^\dagger)(c_k - c_k^\dagger) 
 = \frac{1}{2} \sum_k |V_{jk}|^2 + |U_{jk}|^2 = 1\,.
\end{align*}
It happens that $\vect{b}$ satisfies the canonical anticommutation relations (CAR), but we need only this.
\begin{align*}
    H = \vect{b}^\dagger W M W^\dagger \vect{b} = \sum_{j=1}^{L} \lambda_j (b_j^\dagger b_j - b_j b_j^\dagger) = E_0 + 2 \sum_{j=1}^{L} \lambda_j b_j^\dagger b_j\,,
\end{align*}
where $E_0 = -\sum_{j=1}^{L} \lambda_j$. Since all $\lambda_j \ge 0$ this means the vacuum state $\ket{\Omega}$ is a ground state with energy $E_0$. For each $\alpha \in \{0,1\}^{L}$, there is an eigenvector  $\ket{\psi_{\alpha}} = (b^\dagger)^{\alpha} \ket{\Omega}$ with energy $E_0 + 2 \sum_{j; \alpha_j = 1} \lambda_j$.

We can find the spectral gap by finding the singular values of $S = A+B$. This has form 
\begin{align*}
    S =\begin{bmatrix}
        0 & a & 0  & \dots \\
        b & 0 & a & \dots \\
        0 & b & 0 & \dots \\ 
        0 & 0 & b & \dots
    \end{bmatrix} \,,
\end{align*}
for $a = 1 + \gamma$ and $b = 1 - \gamma$. Then 
\begin{align*}
    S^\dagger S = \begin{bmatrix}
        0 & b & 0  & \dots \\
        a & 0 & b & \dots \\
        0 & a & 0 & \dots \\ 
        0 & 0 & a & \dots
    \end{bmatrix} 
    \begin{bmatrix}
        0 & a & 0  & \dots \\
        b & 0 & a & \dots \\
        0 & b & 0 & \dots \\ 
        0 & 0 & b & \dots
    \end{bmatrix} 
    =
    \begin{bmatrix}
        b^2 & 0 & ab  & 0 & \dots & \dots & \dots \\
        0 & a^2 + b^2 & 0 & ab & \dots & \dots & \dots \\
        ab & 0 & a^2 + b^2 & 0  & \dots & \dots & \dots \\ 
        0 & ab & 0 & a^2 + b^2 & \dots & \dots & \dots \\
        \dots & \dots & \dots & \dots & \dots & \dots & \dots \\
        \dots & \dots & \dots & \dots & \dots & \dots & ab \\
        \dots & \dots & \dots & \dots & \dots & ab & a^2
    \end{bmatrix} 
\end{align*}
$S^\dagger S$ is in fact a Toeplitz matrix with $(S^\dagger S)_{jj} = a^2 + b^2$ and $(S^\dagger S)_{j,j-2} = (S^\dagger S)_{j+2,j} = ab$, except on the corners $(S^\dagger S)_{11} = b^2$ and $(S^\dagger S)_{nn} = a^2$. We can block-diagonalize this matrix into one submatrix formed by the even rows and columns, and another submatrix formed by the odd rows and columns.

The submatrix of even rows and columns is exactly a tridiagonal Toeplitz matrix
\begin{align*}
    \begin{bmatrix}
        a^2+b^2 & ab & 0  & \dots \\
        ab & a^2+b^2 & ab & \dots \\
        0 & ab & a^2+b^2 & \dots \\ 
        0 & 0 & ab & \dots
    \end{bmatrix} 
\end{align*}
and so has eigenvalues in the range $[a^2+b^2-2ab,a^2+b^2+2ab] = [(a-b)^2,(a+b)^2]$ (e.g. \cite{willms2008}). For the odd rows and columns, the matrix can be written as
\begin{align*}
    \begin{bmatrix}
        a^2 + b^2 - \alpha  & ab & 0  & \dots & \dots \\
        ab & a^2 + b^2 & ab & \dots & \dots \\
        0 & ab & a^2 + b^2 & \dots  & \dots \\ 
        0 & 0 & ab & \dots & \dots \\
        \dots & \dots & \dots & \dots & \dots \\
        \dots & \dots & \dots & \dots & ab \\
        \dots & \dots & \dots & ab & a^2 + b^2 - \beta
    \end{bmatrix} 
\end{align*}
for $\alpha = a^2$ and $\beta = b^2$. The eigenvalues can be exactly determined (e.g. \cite[Section 3.1.8]{willms2008}); there is one eigenvalue at $a^2 + b^2 - \alpha -\beta = 0$ and all others in the range $[a^2+b^2-2ab,a^2+b^2+2ab] = [(a-b)^2,(a+b)^2]$.

Plugging in $a = (1 + \gamma)$ and $b = (1-\gamma)$, we see that there is exactly \emph{one} singular value of $S$ at $0$, and all others are in the range $[|a+b|, |a-b|] = [2, 2|\gamma|]$. 
The ground state is two-dimensional for $\gamma \ne 0$, and the spectral gap is twice the minimum nonzero singular value (at least $4$ when $\gamma > 1$).

\subsubsection{Computing \texorpdfstring{$\of{K_{11}}
_{--}$}{the effect of K\textsubscript{11}}}\label{subsub:xy_chain_pt_1_end}
We now compute the effect of the local Hamiltonian term applied to the first site of two distinct spin chains. We start by computing the projectors of single-site Paulis into the ground space.
Let $b_q$ be the mode above that has zero energy. Then we associate the logical states with the two-dimensional ground space $\ofc{\ket{0^{(L)}},\ket{1^{(L}}}=\{\ket{\Omega}, b_q^\dagger \ket{\Omega}\}$. For a ground state $\ket{\psi} = u\ket{\Omega} + v b_q^\dagger\ket{\Omega}$, we want to measure the observables $X_1 = a_1^\dagger + a_1 = c_1^\dagger + c_1$ and $ Y_1 = i (a_1 - a_1^\dagger) = i(c_1 - c_1^\dagger)$. We do this by representing the observables in the $\vect{b}$ basis. Here, the only way $\bra{\psi} O\ket{\psi}$ is nonzero is through the mode $b_q$. So for $c_1 = \sum_{j=1}^{2n} W_{1,j}^\dagger \vect{b}_j$, we may consider just the effect of $W_{1,q}^\dagger b_q + W_{1,L+q}^\dagger b_q^\dagger = \frac{1}{2} \left( (V+U)_{1,q}^\dagger b_q + (V-U)_{1,q}^\dagger b_q^\dagger\right)$. 
Similarly, for $c_1^\dagger = \sum_{j=1}^{2n} W_{L+1,j}^\dagger \vect{b}_j$, we consider just the effect of $\frac{1}{2} \left( (V-U)_{1,q}^\dagger b_q + (V+U)_{1,q}^\dagger b_q^\dagger\right)$. So then
\begin{align*}
    \of{X_1}_{--} = V_{1,q}^\dagger (b_q^\dagger + b_q)=V_{1,q}^\dagger X^{(L)}\,,
    \quad \quad 
    \of{Y_1}_{--} = i U_{1,q}^\dagger (b_q - b_q^\dagger)=-U_{1,q}^\dagger Y^{(L)}\,.
\end{align*}
So $X_1$ and $Y_1$ act as Pauli X and Y operators on $\{\ket{\Omega}, b_q^\dagger \ket{\Omega}\}$, with coefficient strengths $V_{1,q}^\dagger, -U_{1,q}^\dagger$. 

We now compute the strengths. By the singular value theorem, since $S = U^\dagger \Lambda V$, the columns of $V^\dagger$ are eigenvectors of $S^\dagger S$ and the columns of $U^\dagger$ are eigenvectors of $S S^\dagger$.\footnote{However, this does not imply that the columns are \emph{sorted} in the same order.}  So $V_{1,q}^\dagger$ is the first coefficient of the zero-mode eigenvector of $S^\dagger S$.  By \cite[Equation 46]{willms2008}, this mode has the (unnormalized) eigenvector $\vect{v} = (v_1, 0, v_2, 0, \dots, v_m)^T$, where 
\begin{align*}
    v_j = \sin(j \theta) + \frac{(1+\gamma)^2}{|\gamma^2 - 1|}\sin((j-1)\theta)\,, \quad \quad \theta = \arccos\left(-\frac{1+\gamma^2}{|\gamma^2 - 1|}\right)\,.
\end{align*}
When $\gamma \ne 0$, $\theta$ is complex since the argument to $\arccos$ is below $-1$.

Similarly, $U_{1,q}^\dagger$ is the first coefficient of the zero-mode eigenvector of $S S^\dagger$. This has the same decomposition as $S S^\dagger$ after switching the values of $\alpha$ and $\beta$. By \cite[Lemma 2]{yueh2005}, this is equal to the previous eigenvector with entries reversed; i.e. $\vect{u} = (u_1, \dots, u_{L})^T = (v_m, 0, \dots, v_2, 0, v_1)^T$. This implies the relative strength
\begin{align*}
    \frac{U_{1,q}^\dagger}{V_{1,q}^\dagger}  =  \frac{V_{m,q}^\dagger}{V_{1,q}^\dagger} = \frac{\sin(m\theta)}{\sin(\theta)} + \frac{(1+\gamma)^2}{|\gamma^2 - 1|} \cdot  \frac{\sin((m-1)\theta)}{\sin(\theta)} = \mathcal{U}_{m-1}(\cos \theta) + \frac{(1+\gamma)^2}{|\gamma^2 - 1|} \cdot \mathcal{U}_{m-2}(\cos \theta)\,,
\end{align*}
where $\mathcal{U}_{k}$ is the $k^{\text{th}}$ Chebyshev polynomial of the second kind:
\begin{align*}
    \mathcal{U}_k(x) = \frac{(x + \sqrt{x^2 - 1})^{k+1} - (x - \sqrt{x^2 - 1})^{k+1}}{2 \sqrt{x^2 - 1}}\,.
\end{align*}
Since $\gamma > 1$, we have $x = \cos \theta = -\frac{\gamma^2 + 1}{\gamma^2 - 1}$, and so $\sqrt{x^2 - 1} = \frac{2\gamma}{\gamma^2 - 1}$, and so $x \pm \sqrt{x^2 - 1} = -\frac{(\gamma \mp 1)^2}{\gamma^2 - 1}$. Recall that $r = - \frac{\gamma + 1}{\gamma - 1}$; then $x - \sqrt{x^2 - 1} = r$ and $x + \sqrt{x^2 - 1} = 1/r$. So
\begin{align*}
     \frac{U_{1,q}^\dagger}{V_{1,q}^\dagger}  &= 
     \mathcal{U}_{m-1}(\cos \theta) - r \cdot \mathcal{U}_{m-2}(\cos \theta)\\
     &=
    \frac{\gamma^2 - 1}{4 \gamma}
     \left( r^{-m} - r^m - r \cdot r^{1-m} + r \cdot r^{m-1}
     \right)
     = \frac{\gamma^2 - 1}{4 \gamma} \cdot \frac{1-r^2}{r^m}
     = \frac{1}{r^{m-1}}\,.
\end{align*}
By a similar calculation, we can write $v_j = v_1 \frac{1}{r^{j-1}}$ for all $j$, and so $\vect{v}$ has total normalization 
\begin{align*}
    \sum_{j=1}^m \|v_j\|^2 = \|v_1\|^2 \sum_{j=1}^m \frac{1}{r^{2(j-1)}} =  \|v_1\|^2 \cdot \frac{1 - r^{-2m}}{1 - r^{-2}}\,.
\end{align*}
And so the squared magnitude of $V^\dagger_{1,q}$ is $\frac{1-r^{-2}}{1-r^{-2m}}$.

Now we can compute the effective two-qubit term from \cref{eq:K_tensor_product}:
\begin{align*}
    \big((1 + \gamma) X_iX_j + (1 - \gamma) Y_iY_j\big)_{--} 
    &= (1 + \gamma) |V_{1,q}^\dagger|^2 X_i^{(L)}X_j^{(L)} + (1 - \gamma) |U_{1,q}^\dagger|^2 Y_i^{(L)}Y_j^{(L)}
    \\
    &= |V_{1,q}^\dagger|^2 \left( (1 + \gamma) X_i^{(L)}X_j^{(L)} + (1-\gamma) \frac{1}{r^{2m-2}} Y_i^{(L)}Y_j^{(L)} \right)
    \\
    &= 
    \frac{(1-r^{-2})(1 + \gamma)}{1-r^{-2m}} 
    \cdot \left( X_i^{(L)}X_j^{(L)} + \frac{1}{r^{2m-1}} Y_i^{(L)}Y_j^{(L)}
    \right) \,.
\end{align*}
Since the chain has length $L = 2m-1$, the term is $X_i^{(L)}X_j^{(L)} - \frac{1}{|r|^{L}} Y_i^{(L)}Y_j^{(L)}$ up to global scaling.

Since $\gamma > 1$, we have $|r| > 1$. For any $c > 0$, we may take a spin chain of length $L=c \cdot \log_r n$ to simulate a local Hamiltonian term proportional to $XX - \frac{1}{n^c} YY$, 
From \cref{sec:prelims/normal_form}, we can interchange  $YY$ and $ZZ$ and scale by $n^c$ to get $n^c XX -  ZZ$. 

\subsubsection{Generating \texorpdfstring{$XX$}{XX} and \texorpdfstring{$-YY$}{-YY}}\label{subsub:xy_chain_pt_2}
For the second step outlined in the proof sketch, we apply the gadget in \cref{lem:edge_replacing} to simulate both $n^{3c} XX-ZZ$ and $-n^{3c} XX-ZZ$.
Choosing $\frac{1}{n^{3c}}$ times the first term simulates $XX$ to error $\frac{1}{n^{3c}}$ per edge. Choosing an equal weight on the two terms simulates $-ZZ$.

To simulate both $+XX$ and $-ZZ$ terms on a graph $G([n], E, \{w_{ij}\})$ at any target inverse polynomial error $\alpha$, we choose $c$ such that $\left( \sum_{(i,j) \in E} w_{ij} \right)\cdot \frac{1}{n^{3c}}  < \alpha$. Finally, to simulate both $+XX$ and $-YY$ terms, we again interchange $YY$ and $ZZ$ (\cref{sec:prelims/normal_form}).

\subsection{Proof of \texorpdfstring{\cref{lem:xxz_to_xx_and_-yy-zz}}{XX-YY-ZZ to XX, -YY-ZZ}}\label{apx:main_theorem/xxz_to_xx_and_-yy-zz}
We follow the proof sketch in \cref{sec:stoqma}. The first step is in \cref{subsub:bipartite_part1}, the second step is in \cref{subsub:bipartite_part2_start} through \cref{subsub:bipartite_part2_end}, and the third step is in \cref{subsub:bipartite_part3}. 

\subsubsection{Boosting \texorpdfstring{$a$}{a}}
\label{subsub:bipartite_part1}
We apply \cref{lem:edge_replacing}. For $b=-1$, the second simulated term ($K_2$) yields $a'XX-YY-ZZ$, where
\begin{align*}
    a' = \frac{a^2 (a+1)}{2}>a^2.
\end{align*}
Repeating this gadget $k$ times boosts the $XX$ coefficient to a value larger than $a^{2^k}$ while keeping the $YY$ and $ZZ$ coefficients equal to $-1$.  
We set $k=\lceil\log_2(\log_a4)\rceil$ so that the output $XX$ coefficient is at least $4$.

\subsubsection{Complete bipartite gadget}
\label{subsub:bipartite_part2_start}
We build a vertex-replacing gadget from local term $aXX-YY-ZZ$ with $a>4$.
We use the unweighted complete bipartite graph $K_{L, L-1}$ as the gadget graph, where $L \ge 2$ is an integer that will be chosen later. In order to use the tools of \cref{sec:our_gadgets/vertex_replacing}, we must  show that the spectral gap of $H_K(K_{L,L-1})$ is at least constant, and that the ground space is exactly two dimensional.

It is convenient to switch to the maximization picture; i.e. consider the maximum eigenvector of $-H_K(K_{L,L-1})$. We thus consider the local term $-aXX+YY+ZZ$. Now, by the interchangeability of $XX$, $YY$, and $ZZ$ from \cref{sec:prelims/normal_form}, we can write the Hamiltonian term with equivalent spectrum
$XX+YY-aZZ$. Using \cref{eq:bell_to_pauli_map}, this is equivalent to
\begin{align*}
     -a\of{\ket{\phi^+}\bra{\phi^+} + \ket{\phi^-}\bra{\phi^-}} + (a+2) \ket{\psi^+}\bra{\psi^+} + (a-2) \ket{\psi^-}\bra{\psi^-}.
\end{align*}
We can add $aI$ and divide by $2$ to get
\begin{align*}
    (a+1) \ket{\psi^+}\bra{\psi^+} + (a-1) \ket{\psi^-}\bra{\psi^-}.
\end{align*}
We show how to represent this Hamiltonian from the perspective of \emph{token graphs}:
\begin{definition}[{Token graphs (\cite{fabila-monroy2012})}]\label{def:token_graphs}
    Given an unweighted  graph $G([n],E)$ and some integer $0\le k \le n$, let the $k$\emph{-th token graph} $T_k(G)$ be a simple graph defined as follows:
    \begin{itemize}[label=\raisebox{0.4ex}{\scalebox{0.75}{$\bullet$}}]
        \item Vertices: vertices are $\binom{\ofb{n}}{k}$, the set of $k$-tuples of the vertices $\ofb{n}$, which contains $\binom{n}{k}$ elements.
        \item Edges: vertices $A$ and $B$ are connected by an edge if and only if their symmetric difference $A\triangle B =\ofc{a,b}$, where $a\in A$, $b\in B$, and $(a,b)\in E$.
    \end{itemize}
\end{definition}
We further define $T(G)$ as the disjoint union of token graphs $T_k(G)$ with $0\le k \le n$. Then,  it is shown in \cite[Section 3]{apte2025} that for any graph $G=([n],E)$ the Hamiltonians  $\sum_{(i,j)\in E(G)} \ket{\psi^+}\bra{\psi^+}$ and $\sum_{(i,j)\in E(G)} \ket{\psi^-}\bra{\psi^-}$ are equivalent to $\frac{1}{2}Q(T(G))$ and $\frac{1}{2}L(T(G))$, respectively, where $Q$ is the \emph{signless Laplacian} and $L$ is the \emph{Laplacian}. 
This means the total Hamiltonian on $G = K_{L,L-1}$ is  
\begin{align*}
    &\of{\frac{a+1}{2}}Q\of{T\of{G}} + \of{\frac{a-1}{2}}L\of{T\of{G}} \\
    &= \of{\frac{a+1}{2}}\of{D\of{T\of{G}}+A\of{T\of{G}}}+\of{\frac{a-1}{2}}\of{D\of{T\of{G}}-A\of{T\of{G}}}\\
    &= aD\of{T\of{G}} +A\of{T\of{G}}.
\end{align*}
Since $T(G)$ consists of disjoint token graphs, it suffices to compute the spectrum of 
\begin{align*}
    M_{k,L} \defeq a D\of{T_k\of{K_{L,L-1}}}+A\of{T_k\of{K_{L,L-1}}}
\end{align*}
for each token graph $0 \le k \le n'$, where $n'=2L-1$ is the total number of nodes. Because $T_k(G) \cong T_{n'-k}(G)$ for any $G$, there are only $L$ unique token graphs of $K_{L,L-1}$, given by $0\le k \le L-1$. It thus suffices to compute the spectrum of $M_{k,L}$ for  $0\le k \le L-1$. 

Let $\wt{T}_{L}$ be the disjoint union of the token graphs $T_k(G)$  for each $0\le k \le L-1$, and let $\wt{M}_L=aD\of{\wt{T}_{L}} +A\of{\wt{T}_{L}}$.  Our goal is to show the spectral gap of $\wt{M}_L$ is constant, and to determine the maximum eigenvector. 

\subsubsection{Properties of \texorpdfstring{$M_{k,L}$}{M}} 
We start by showing some useful properties about $M_{k,L}$:
\begin{claim}\label{claim:primitivity}
    For any $1 \le k \le L-1$, $M_{k,L}$ is non-negative, irreducible, and primitive.
\end{claim}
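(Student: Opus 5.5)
Non-negativity is immediate from the form $M_{k,L}=aD(T_k(K_{L,L-1}))+A(T_k(K_{L,L-1}))$ with $a>4>0$: degrees are nonnegative integers and adjacency entries are $0$ or $1$, so every entry of $M_{k,L}$ is nonnegative. For irreducibility, I will use the standard fact that a nonnegative symmetric matrix is irreducible iff its associated graph (off-diagonal support) is connected. The off-diagonal support of $M_{k,L}$ is exactly the edge set of the token graph $T_k(K_{L,L-1})$, so I need to show $T_k(K_{L,L-1})$ is connected for $1\le k\le L-1$. I would invoke the result of \cite{fabila-monroy2012} that the token graph of a connected graph is connected, or give a direct argument: a $k$-subset is a placement of $k$ tokens on vertices, and a token may slide along an edge to an unoccupied vertex. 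In $K_{L,L-1}$, any token on side $A$ can move to any empty vertex on side $B$ and vice versa. Since $k\le L-1<2L-1$, there is always an empty vertex. Then any two placements can be connected by routing tokens through the opposite side, for instance by moving each misplaced token to its target via at most two slides, using an empty vertex as a buffer. This is the step needing the most care, namely the bookkeeping showing that an empty vertex is available on the needed side. If side $B$ is fully occupied when I want to route an $A$-token through it, I would first move a $B$-token into an empty $A$-vertex. Alternatively, I will simply cite connectivity of token graphs of connected graphs.

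For primitivity, I will use that an irreducible nonnegative matrix with at least one strictly positive diagonal entry is primitive (aperiodic). The diagonal entries are $a\deg_{T_k}(S)$ for each $k$-subset $S$. Since $T_k$ is connected and has at least two vertices when $1\le k\le L-1$ (as $\binom{2L-1}{k}\ge 2L-1\ge 3$), every vertex has degree at least $1$, so all diagonal entries are at least $a>0$. Hence $M_{k,L}$ is primitive; equivalently, $M_{k,L}^N>0$ for some $N$, which can be seen since a self-loop at every vertex of a connected graph makes every pair joined by walks of all sufficiently large lengths. This is what later allows Perron–Frobenius to give a simple, strictly positive top eigenvector in each token sector.
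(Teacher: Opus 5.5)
Your proof is correct and follows the paper's argument. Nonnegativity comes from $a>0$. Irreducibility comes from connectivity of the token graph $T_k(K_{L,L-1})$; the paper phrases this as being able to move all tokens onto the left side, which has $L>k$ vertices, and your routing argument or citation of \cite{fabila-monroy2012} serves the same purpose. Primitivity then follows from irreducibility together with the strictly positive diagonal entries $a\deg(S)$, and your observation that every vertex of $T_k$ has degree at least one is exactly what the paper uses implicitly.
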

\begin{proof}
    Non-negativity comes from $a>0$ and the definition of the degree and adjacency matrix. To see irreducibility, note that starting from any configuration, one can always move all tokens to the left side of the graph. Primitivity holds from irreducibility and because the diagonal elements of $M_{k,L}$ are all nonzero.
\end{proof}
\begin{claim}\label{claim:perron}
    The maximum eigenvalue of $M_{k,L}$ is simple, with an eigenvector with all positive entries.
\end{claim}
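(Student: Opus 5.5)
This follows directly from the Perron--Frobenius theorem for primitive matrices, applied through \cref{claim:primitivity}. First I fix the setting. $M_{k,L}$ is a real symmetric matrix, since both $D$ and $A$ of the token graph are symmetric. So all its eigenvalues are real, and the maximum eigenvalue coincides with its spectral radius once we know it is nonnegative. By \cref{claim:primitivity}, for $1 \le k \le L-1$ the matrix $M_{k,L}$ is entrywise nonnegative, irreducible, and primitive.

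Next I invoke the Perron--Frobenius theorem for primitive nonnegative matrices. It gives a positive real eigenvalue $\rho$ equal to the spectral radius. This eigenvalue is algebraically simple, and it has an eigenvector with strictly positive entries. Primitivity additionally gives $|\lambda| < \rho$ for every other eigenvalue $\lambda$, though simplicity and positivity already follow from irreducibility alone. Because the spectrum is real, every other eigenvalue satisfies $\lambda < \rho$. Hence $\rho$ is the maximum eigenvalue, it is simple, and its eigenvector can be taken entrywise positive. That is exactly the claim.

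The remaining step is the degenerate case $k=0$, which \cref{claim:primitivity} excludes. There $T_0(G)$ is a single vertex, so $M_{0,L}$ is the $1\times 1$ matrix $(0)$. Its unique eigenvalue is trivially simple, with eigenvector $(1)$. So the statement holds for $k=0$ as well, if it is meant to include it.

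There is no real obstacle here. The only point needing care is that "maximum eigenvalue" and "spectral radius" agree. This holds because $M_{k,L}$ is symmetric and nonnegative with a positive Perron root, which rules out a negative eigenvalue of larger magnitude competing for simplicity.
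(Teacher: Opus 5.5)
Your proof is correct and matches the paper's, which simply invokes the Perron--Frobenius theorem for the nonnegative, irreducible, primitive matrix $M_{k,L}$ from \cref{claim:primitivity}. Your additional remarks are accurate refinements of that one-line argument: symmetry guarantees that the largest eigenvalue coincides with the Perron root, irreducibility alone already gives simplicity, and the case $k=0$ is trivial.
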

\begin{proof}
    This follows from the  Perron-Frobenius theorem for nonnegative, irreducible, primitive matrices.
\end{proof}
Now, fix $0 \le k \le L-1$ and consider the token graph $T_k\of{K_{L,L-1}}$. We can partition the vertices of this token graph according to the number of tokens on the left side of the partition. Call the part with $j$ such tokens $C_{j,k}$. Then there are $L$ parts corresponding to $0\le j \le L-1$. For any part $C_{j,k}$, we can compute the following quantities:
\begin{claim}\label{claim:degree_C_jk}
    There are $\binom{L}{j}\binom{L-1}{k-j}$ vertices in $C_{j,k}$.
\end{claim}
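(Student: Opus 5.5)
The plan is a direct counting argument: the vertices of $C_{j,k}$ are identified with pairs of subsets, one on each side of the bipartition, and the count follows from the product rule.

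By Definition~\ref{def:token_graphs}, a vertex of $T_k(K_{L,L-1})$ is a $k$-element subset $A$ of the $n'=2L-1$ vertices of $K_{L,L-1}$. Write the vertex set as the disjoint union of the left side $\mathcal{L}$, with $|\mathcal{L}|=L$, and the right side $\mathcal{R}$, with $|\mathcal{R}|=L-1$.

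\textbf{The bijection.} The part $C_{j,k}$ consists of those $A$ with $|A\cap\mathcal{L}|=j$, and hence $|A\cap\mathcal{R}|=k-j$. The map
\begin{align*}
A\mapsto (A\cap\mathcal{L},\,A\cap\mathcal{R})
\end{align*}
is a bijection from $C_{j,k}$ onto the set of pairs $(S,R)$ with $S\subseteq\mathcal{L}$, $|S|=j$, and $R\subseteq\mathcal{R}$, $|R|=k-j$. Its inverse is $(S,R)\mapsto S\cup R$, which is well defined because $\mathcal{L}\cap\mathcal{R}=\emptyset$.

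\textbf{The count.} The number of such pairs is $\binom{L}{j}\binom{L-1}{k-j}$. The formula also covers the degenerate cases: if $k-j<0$ or $k-j>L-1$, the binomial coefficient vanishes and the part is empty, as it should be.

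\textbf{Consistency check.} Summing over $j$ recovers $\binom{2L-1}{k}$ by Vandermonde's identity, which matches the total number of vertices of $T_k$. There is no real obstacle here; the only point needing care is that tokens are indistinguishable, so a vertex is a set and not a sequence. The edge structure of the token graph plays no role in this count.
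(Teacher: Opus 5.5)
Your proof is correct and matches the paper's argument. The paper states this claim without an explicit proof, since it is the standard count: choose the $j$ left-side tokens in $\binom{L}{j}$ ways and the $k-j$ right-side tokens in $\binom{L-1}{k-j}$ ways. That is exactly your bijection $A\mapsto (A\cap\mathcal{L},A\cap\mathcal{R})$.
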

\begin{claim}\label{claim:adjacency_C_jk}
    Every vertex $v$ in $C_{j,k}$, is connected to exactly $(k-j)(L-j)$ vertices in $C_{j+1,k}$ and exactly $j(L-1-(k-j))$ vertices in $C_{j-1,k}$. Thus, its degree is
    \begin{align*}
        d_{j,k} \defeq (k-j)(L-j)+j(L-1-(k-j)).
    \end{align*}
\end{claim}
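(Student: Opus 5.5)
The plan is to count neighbours directly from \cref{def:token_graphs}, using that $K_{L,L-1}$ is complete bipartite with no edges inside a side. Fix a vertex $v\in C_{j,k}$ of $T_k(K_{L,L-1})$. It is a $k$-subset $A$ of the $2L-1$ vertices, with exactly $j$ tokens on the left side (of size $L$) and $k-j$ tokens on the right side (of size $L-1$). So the left side has $L-j$ unoccupied vertices and the right side has $L-1-(k-j)$ unoccupied vertices.

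By \cref{def:token_graphs}, a neighbour $B$ of $A$ satisfies $A\triangle B=\{x,y\}$ with $x\in A$, $y\notin A$ and $(x,y)\in E(K_{L,L-1})$. In words, $B$ is obtained by moving a single token from an occupied vertex $x$ to an unoccupied vertex $y$ adjacent to it. Conversely, the pair $(x,y)$ is recovered from $B$ as $x\in A\setminus B$ and $y\in B\setminus A$. Hence neighbours of $A$ are in bijection with such admissible pairs, and it suffices to count pairs.

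Every edge of $K_{L,L-1}$ joins the two sides, so each move either sends a token from right to left or from left to right.
\begin{itemize}
\item A right-to-left move chooses one of the $k-j$ right tokens and one of the $L-j$ empty left vertices. Every such pair is an edge, since the graph is complete bipartite. The result has $j+1$ left tokens, so it lies in $C_{j+1,k}$, and this gives $(k-j)(L-j)$ neighbours there.
\item A left-to-right move chooses one of the $j$ left tokens and one of the $L-1-(k-j)$ empty right vertices. This gives $j(L-1-(k-j))$ neighbours, all in $C_{j-1,k}$.
\end{itemize}
There are no neighbours in $C_{j,k}$ itself or in any other part, because no edges lie within a side.

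Summing the two counts gives the degree $d_{j,k}=(k-j)(L-j)+j(L-1-(k-j))$. The only point needing care is the injectivity of the correspondence between moves and neighbours, and that was settled above.
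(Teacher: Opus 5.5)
Your proof is correct and follows the same approach as the paper: count the right-to-left token moves ($k-j$ tokens times $L-j$ empty left vertices) and the left-to-right moves ($j$ tokens times $L-1-(k-j)$ empty right vertices), using that $K_{L,L-1}$ is complete bipartite. You also check explicitly that distinct moves give distinct neighbours and that no neighbours lie in any other part; the paper leaves both points implicit.
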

\cref{claim:adjacency_C_jk} is because to get from $C_{j,k}$ to $C_{j+1,k}$, any of the $k-j$ tokens on the right can be moved to any of the $L-j$ empty spaces on the left. Similarly to get from $C_{j,k}$ to $C_{j-1,k}$ any of the $j$ tokens on the left can be moved to any of the $(L-1-(k-j))$ empty spaces on the right. 

Now, the automorphism group of $K_{L,L-1}$ contains $S_L \times S_{L-1}$, where $S_t$ is the permutation group on $t$ elements. This corresponds to permuting the vertices within each side of the bipartition independently. Any such automorphism $\sigma$ acts on the vertices of $T_k\of{K_{L,L-1}}$ by $A \mapsto \sigma\of{A}$, and maps each part $C_{j,k}$ to itself, since permuting within each side of the bipartition preserves the number of tokens on that side. Since $\sigma$ is a graph automorphism, the corresponding permutation matrix $P_{\sigma}$ commutes with $M_{k,L}$. Therefore, if $\vect{v}$ is an eigenvector of $M_{k,L}$ with eigenvalue $\lambda$, so is $P_{\sigma}M_{k,L}$. Thus we can define a \emph{symmetrized} eigenvector 
\begin{align}
    \overline{\vect{v}} \defeq \frac{1}{\abs{S_L}\abs{S_{L-1}}} \sum_{\sigma \in S_L \times S_{L-1}} P_{\sigma} \vect{v}.
\end{align}
If $\overline{\vect{v}}\neq 0$, it is a $\lambda$-eigenvector of $M_{L,k}$. For now assume this is the case.
By construction, $\overline{\vect{v}}$ is invariant under permutations of $K_{L,L-1}$ within either side of the partition, so it has the same coefficient on each vertex in a part $C_{j,k}$. Thus we can decompose $\overline{\vect{v}}$ as 
\begin{align*}
    \overline{\vect{v}} = \sum_{j=0}^k \beta_j \vect{c}_{j,k}, \qquad \sum_{j=0}^k \abs{\beta_j}^2=1,
\end{align*}
and $\vect{c}_{j,k}$ are tensor products of symmetric (Dicke) states with $j$ tokens on the left side and $k-j$ on the right
\begin{align*}
    \vect{c}_{j,k} = \of{\sqrt{\frac{1}{\binom{L}{j}}} \sum_{J \in \binom{\ofb{L}}{j}}\mathbf{1}_{J,L}}  \otimes \of{\sqrt{\frac{1}{\binom{L-1}{k-j}}} \sum_{J \in \binom{\ofb{L-1}}{k-j}}\mathbf{1}_{J,L-1}} ,
\end{align*}
where $\mathbf{1}_{A,B}$ is the indicator vector of length $B$ that is $1$ for indices in $A$ and $0$ otherwise. Thus, $\vect{c}_{j,k}$ represents a uniform superposition over all the vertices in part $C_{j,k}$. For convenience, we define the $k+1$-dimensional subspace spanned by these vectors $W_k = \text{span}\of{\ofc{\vect{c}_{j,k}}_{0 \le j \le k}}$.

We can then compute the action of the degree matrix $D$ on the vectors $\vect{c}_{j,k}$.
\begin{align*}
    D\of{T_k\of{K_{L,L-1}}} \vect{c}_{j,k} = d_{j,k} \vect{c}_{j,k},
\end{align*}
where we used that degrees of vertices in each part are constant and previously computed as
\begin{align*}
    d_{j,k} &\defeq a\of{(k-j)(L-j)+j(L-1-(k-j)}.
\end{align*}
Now, the action of $A$ is slightly more complicated. We previously computed that a vertex in part $C_{j,k}$  is connected to exactly $(k-j)(L-j)$ vertices in $C_{j+1,k}$ and exactly $j(L-1-(k-j))$ vertices in $C_{j-1,k}$, thus $A\of{T_k\of{K_{L,L-1}}} \vect{c}_{j,k}$ is equal to 
\begin{align*}
     &j(L-1-(k-j)) \sqrt{\frac{\binom{L}{j}\binom{L-1}{k-j}}{\binom{L}{j-1}\binom{L-1}{k-j+1}}} \vect{c}_{j-1,k}+ (k-j)(L-j)\sqrt{\frac{\binom{L}{j}\binom{L-1}{k-j}}{\binom{L}{j+1}\binom{L-1}{k-j-1}}} \vect{c}_{j+1,k}\\
    &=j(L-1-(k-j)) \sqrt{\tfrac{(-j+k+1) (-j+L+1)}{j (L-1-(k-j))}}\vect{c}_{j-1,k} +(k-j)(L-j)\sqrt{\tfrac{(j+1) (j-k+L)}{(k-j) (L-j)}}\vect{c}_{j+1,k}\\
    &= \gamma_j \vect{c}_{j-1,k} + \beta_j \vect{c}_{j+1,k},
\end{align*}
where in the last line we defined
\begin{align*}
    \gamma_j &\defeq \sqrt{(-j+k+1) (-j+L+1)j(L-1-(k-j))},\\
    \beta_j &\defeq \sqrt{(j+1) (j-k+L)(k-j)(L-j)}.
\end{align*}
Note that $\gamma_j=\beta_{j-1}$, so the matrix is symmetric. So in total the matrix $aD+A$ acting on the symmetrized states is the symmetric tridiagonal matrix
\begin{align*}
    N_{k, L} \defeq 
    \begin{pmatrix}
        d_0 & \beta_0 & 0 & 0 & \cdots&  0 & 0 & 0 \\
        \beta_0 & d_1 & \beta_1& 0 & \cdots & 0 & 0 & 0 \\
        0 & \beta_1 & d_2 & \beta_2 & \cdots & 0 & 0 & 0 \\
        0 & 0 & \beta_2 & d_3 & \ldots & 0 & 0 & 0 \\
        0 & 0 & 0 & \beta_3 & \ldots & 0 & 0 & 0 \\
        \vdots & \vdots & \vdots & \vdots  & \ddots & \vdots & \vdots & \vdots \\
        0 & 0 & 0 & 0 & \cdots & d_{k-2} & \beta_{k-2} & 0 \\
        0 & 0 & 0 & 0 & \cdots & \beta_{k-2} & d_{k-1} & \beta_{k-1} \\
        0 & 0 & 0 & 0 & \cdots & 0& \beta_{k-1} & d_k \\
    \end{pmatrix},
\end{align*}
where $d_j \defeq a\,d_{j,j}$.
This formulation allows us to make the following observation
\begin{claim}\label{claim:max_eigval_symmetric}
    For any $1 \le k \le L-1$, the unique maximum eigenvector of $M_{k,L}$ is in $W_k$.
\end{claim}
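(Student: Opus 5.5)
The plan is to combine the Perron--Frobenius facts in \cref{claim:primitivity,claim:perron} with the symmetrization argument above. By \cref{claim:perron}, the maximum eigenvalue $\lambda_{\max}$ of $M_{k,L}$ is simple, and its eigenvector $\vect{v}$ can be chosen with all entries strictly positive. For each automorphism $\sigma \in S_L \times S_{L-1}$, the permuted vector $P_\sigma \vect{v}$ is again a $\lambda_{\max}$-eigenvector, because $P_\sigma$ commutes with $M_{k,L}$. Simplicity of $\lambda_{\max}$ then forces $P_\sigma \vect{v} = c_\sigma \vect{v}$ for some scalar $c_\sigma$. Since $P_\sigma$ is a permutation matrix, it preserves both the $\ell_2$ norm and entrywise positivity, so $c_\sigma = 1$. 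Hence $\vect{v}$ is invariant under the whole group $S_L \times S_{L-1}$.

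Invariance means $\vect{v}$ is constant on each orbit of the group acting on vertices of $T_k(K_{L,L-1})$. The next step is to identify these orbits with the parts $C_{j,k}$. A vertex is a $k$-subset with $j$ elements on the left side and $k-j$ on the right. The group $S_L \times S_{L-1}$ acts transitively on such subsets for fixed $j$, since one can independently permute the left side and the right side. So each $C_{j,k}$ is exactly one orbit, $\vect{v}$ is constant on it, and therefore $\vect{v}$ is a linear combination of the uniform vectors $\vect{c}_{j,k}$. This shows $\vect{v} \in W_k$.

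This route also settles the earlier caveat that the symmetrized vector $\overline{\vect{v}}$ might be zero. Here $\overline{\vect{v}} = \vect{v}$, and it is nonzero because its entries are positive. It follows that the maximum eigenvector of $M_{k,L}$ equals the top eigenvector of the tridiagonal matrix $N_{k,L}$ obtained by restricting to $W_k$. Indeed, $W_k$ is invariant under $aD + A$ by the computations preceding the claim, and $\lambda_{\max}$ is the largest eigenvalue of this restriction.

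The only delicate point is the orbit/transitivity identification, which is elementary. The case $k=0$ is excluded by the hypothesis; it would in any case be trivial, since $T_0$ has a single vertex. I do not expect a genuine obstacle: the substance is simply that simplicity plus positivity rules out a nontrivial sign or phase character under the symmetry group.
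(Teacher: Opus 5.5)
Your proposal is correct and follows the paper's route: Perron--Frobenius simplicity and positivity of the top eigenvector, combined with the $S_L\times S_{L-1}$ symmetry of $T_k(K_{L,L-1})$. The only difference is cosmetic. You show $P_\sigma\vect{v}=\vect{v}$ directly by forcing $c_\sigma=1$. The paper instead argues that the group average $\overline{\vect{v}}$ is nonzero because $\vect{v}>0$, and then invokes uniqueness. Your version states this logic more cleanly than the paper's terse "$\vect{v}\notin W_k \Rightarrow \sum_{z\in C_j}v_z=0$" step, and it makes the orbit/transitivity identification explicit.
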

\begin{proof}
    By \cref{claim:perron}, the maximum eigenvector of $M_{k,L}$ is unique and nonnegative. Suppose this eigenvector $\vect{v}$ is not in $W_k$. Then, $\sum_{z \in C_j}v_j =0$ for some $0 \le j \le k$. However, since all $v_j$ are positive, this can never occur.
\end{proof}

\subsubsection{Spectral gap}
We now prove the constant spectral gap of $\wt{M}_L$. We first observe
\begin{claim}\label{claim:lambda_1_M_L_lower_bound}
\begin{align*}
    \lambda_1\of{\wt{M}_L} \ge aL(L-1).
\end{align*}
\end{claim}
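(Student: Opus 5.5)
The plan is a Rayleigh-quotient (variational) lower bound. We are in the maximization picture, where $\lambda_1(\wt{M}_L)$ is the largest eigenvalue. Since $\wt{M}_L$ is a direct sum of the blocks $M_{k,L}$, it suffices to exhibit one block $k$ and one unit vector $\vect{x}$ with $\vect{x}^T M_{k,L}\vect{x} \ge aL(L-1)$. We have the min–max characterization $\lambda_{\max}(M) = \max_{\|\vect{x}\|=1}\vect{x}^T M\vect{x}$, so such a vector gives the bound immediately.

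For the test vector, take one of the symmetrized Dicke vectors $\vect{c}_{j,k}$, each the uniform superposition over the part $C_{j,k}$. The degree matrix acts diagonally on it: $D\,\vect{c}_{j,k} = d_{j,k}\vect{c}_{j,k}$ by \cref{claim:adjacency_C_jk}. The adjacency matrix maps $\vect{c}_{j,k}$ into $\mathrm{span}\{\vect{c}_{j-1,k},\vect{c}_{j+1,k}\}$, which is orthogonal to $\vect{c}_{j,k}$. Hence
\[
\vect{c}_{j,k}^T M_{k,L}\,\vect{c}_{j,k} = a\,d_{j,k},
\]
which is just a diagonal entry of the tridiagonal matrix $N_{k,L}$. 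Equivalently, since $N_{k,L}$ is entrywise nonnegative (or simply by the Rayleigh quotient on a basis vector), its top eigenvalue is at least its largest diagonal entry.

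It remains to maximize $d_{j,k} = (k-j)(L-j) + j(L-1-(k-j))$ over admissible $(j,k)$. I would choose $k = L-1$ and $j = 0$. This is the configuration in which all $L-1$ tokens sit on the right side (which has exactly $L-1$ vertices) and the left side is empty. Every token can then move to any of the $L$ empty left vertices, so $d_{0,L-1} = (L-1)L$. This block is among the representatives $0\le k\le L-1$ kept in $\wt{M}_L$, so $\lambda_1(\wt{M}_L) \ge \lambda_{\max}(M_{L-1,L}) \ge a\,L(L-1)$.

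The only step needing care is bookkeeping rather than a real obstacle. The paper's notation $d_j \defeq a\,d_{j,j}$ is ambiguous about where the factor $a$ enters, so I would recompute directly from $M_{k,L} = aD + A$ that the factor $a$ multiplies the raw degree exactly once. I would also check that $\vect{c}_{0,L-1}$ is normalized; it is, being a single basis vector, since $C_{0,L-1}$ has $\binom{L}{0}\binom{L-1}{L-1} = 1$ element.
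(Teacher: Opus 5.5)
Your proposal is correct and is essentially the paper's proof: your test vector $\vect{c}_{0,L-1}$ is exactly the indicator of the unique configuration $v^*$ in $C_{0,L-1}$. The paper likewise uses its Rayleigh quotient with respect to $M_{L-1,L}=aD+A$, which equals the diagonal entry $a\,L(L-1)$. Your check that the factor $a$ enters exactly once is a sensible precaution, since the paper's notation for $d_{j,k}$ and $d_j$ is inconsistent.
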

\begin{proof}
    Consider the configuration $v^*$, which corresponds to the unique configuration in the part $C_{0,L-1}$, i.e. the configuration where there are $0$ tokens on the left and $L-1$ tokens on the right. Now, consider the vector $\ofb{1}_{j=0,k=L-1}$, which is one in the entry corresponding to $v^*$ and is zero elsewhere. The Rayleigh coefficient of this vector with respect to $M_{L-1,L}$ is $aL(L-1)$.
\end{proof} 
We now upper bound the second largest eigenvalue $\lambda_2\of{\wt{M}_L}$. We proceed by cases:
\begin{itemize}
    \item Case 1: When $k=0$, the token graph is empty, so the eigenvalue is $0$.
    \item Case 2: When $1\le k \le L-2$, it suffices to bound eigenvalues of $N_{k,L}$, as the remaining eigenvalues of $M_{k,L}$ are smaller by \cref{claim:max_eigval_symmetric}.
    \item Case 3: When $k=L-1$, we need to bound $\binom{n'}{k}-1$ eigenvalues of $M_{L-1,L}$ away from $\lambda_1$. 
\end{itemize}
We now prove Case $2$:
\begin{claim}\label{claim:max_eigval_N_kL}
    For all $1\le k \le L-2$,     
\begin{align*}
        \lambda_1(N_{k,L}) \le aL(L-1) - \min\ofc{(a-2)L +\tfrac12, \, (2a-3)L+(\tfrac52-2a)}.
    \end{align*}
\end{claim}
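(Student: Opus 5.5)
The plan is to bound $\lambda_1(N_{k,L})$ by the Gershgorin circle theorem. Since $N_{k,L}$ is a nonnegative symmetric tridiagonal matrix, its largest eigenvalue is at most its maximum row sum:
\begin{align*}
    \lambda_1(N_{k,L}) \le \max_{0\le j\le k} R_j, \qquad R_j \defeq a\,d_{j,k} + \beta_{j-1} + \beta_j,
\end{align*}
with $\beta_{-1}=\beta_k=0$. Here $d_{j,k}=(k-j)(L-j)+j(L-1-k+j)$ is the degree from \cref{claim:adjacency_C_jk}. The aim is to show that every $R_j$ lies below $aL(L-1)$ by at least the stated minimum. If the plain row sums turn out to be too weak in some regime, the fallback is a diagonal similarity $D N_{k,L} D^{-1}$ with a slowly varying positive weight, which is the Collatz--Wielandt form of the same bound.

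First I would record the shape of the degree. As a function of $j$, $d_{j,k} = kL - j(2k-2j+1)$ is a convex quadratic with leading coefficient $2$. On $0\le j\le k$ it is therefore maximized at an endpoint, where $d_{0,k}=kL$ and $d_{k,k}=k(L-1)$. It also decreases by a linear-in-$L$ amount as soon as $j$ moves one step inside. For the off-diagonal terms I would use AM--GM on the factors of $\beta_j$ paired as $\sqrt{(j+1)(L-k+j)}\cdot\sqrt{(k-j)(L-j)}$. This bounds each $\beta_j$ by $O(L)$ when $j$ or $k-j$ is bounded, and by $O(\text{geometric mean of neighbouring degree contributions})$ in general.

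The case analysis then runs as follows.
\begin{itemize}
    \item \emph{Endpoint rows ($j=0$ and $j=k$).} Since $k\le L-2$, we have $R_0 \le aL(L-2) + \beta_0$ with $\beta_0=\sqrt{kL(L-k)} \le \sqrt{2L(L-2)}$, and similarly for $j=k$. These rows should give the branch $aL(L-1)-R_j \ge (a-2)L + \tfrac12$, after bounding $\sqrt{2L(L-2)}\le 2L-\tfrac12$-type estimates.
    \item \emph{Interior rows ($1\le j\le k-1$).} Here $a\,d_{j,k}$ drops by at least $a(2k-2j+1)j$-type amounts relative to $akL$. The task is to show this drop beats $\beta_{j-1}+\beta_j$, each of which is at most roughly $\sqrt{j(k-j)}\,L$. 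The worst case should be $j=1$ (or $j=k-1$) with $k=L-2$. Evaluating there, and using convexity of the degree together with monotonicity of the $\beta$ bound to cover all other interior $j$, should yield the branch $(2a-3)L + (\tfrac52 - 2a)$.
\end{itemize}

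The main obstacle is the interior rows when $a$ is only moderately large ($a>4$). The off-diagonal sum $\beta_{j-1}+\beta_j$ can be as large as order $jL$, so the Gershgorin bound is tight only because the diagonal drop also scales like $a\cdot jL$. I would first try to prove $R_j \le R_1$ for all $1\le j\le k-1$, viewing $R_j$ as a function of $j$ and using concavity of $\sqrt{\cdot}$ against the convex quadratic degree term. If that comparison fails for a middle range of $j$, I would switch to the weighted Collatz--Wielandt bound. The weights would be chosen so that mass is shifted toward the endpoints, which absorbs the excess off-diagonal contribution into the large diagonal drop.
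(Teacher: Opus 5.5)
Your framework (Gershgorin row sums of $N_{k,L}$ plus AM--GM on the $\beta$'s) is the same as the paper's, but as written the argument does not close. There are two concrete problems.

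First, the endpoint step asserts $\beta_0=\sqrt{kL(L-k)}\le\sqrt{2L(L-2)}$ for all $1\le k\le L-2$. This is false away from $k=L-2$: for $L=10$, $k=5$ it reads $\sqrt{250}\le\sqrt{160}$, and in general $\beta_0$ reaches order $L^{3/2}$. You cannot replace the diagonal by its $k=L-2$ value and then bound $\beta_0$ separately. The slack $aL(L-2-k)$ you discard is exactly what pays for the larger $\beta_0$; the same issue affects $\beta_{k-1}$ in row $j=k$.

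Second, the interior rows are never actually bounded. ``$R_j\le R_1$ by concavity of $\sqrt{\cdot}$ against the convex degree'' is not an argument, since a convex term plus a concave term has no forced monotonicity. Nothing is said about $k<L-2$, and the Collatz--Wielandt fallback is unspecified. Your diagnosis of where the difficulty lies is also inverted: interior rows have slack at least $2(a-1)j(k-j)$, and the tightest row is the endpoint $j=0$, $k=L-2$.

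Finally, you need not match the two branches of the minimum to particular rows. Since $(2a-3)L+\tfrac52-2a-\bigl((a-2)L+\tfrac12\bigr)=(a-1)(L-2)\ge 0$, it suffices to show that every row sum is at most $aL(L-1)-(a-2)L-\tfrac12$.

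The missing step is to push your own pairing through AM--GM uniformly. Set $m\defeq L-k\ge 2$, $p\defeq j$, $q\defeq k-j$. Then:
\begin{itemize}
\item $\beta_j=\sqrt{(p+1)(p+m)}\sqrt{q(q+m)}\le\bigl(p+\tfrac{m+1}{2}\bigr)\bigl(q+\tfrac m2\bigr)$;
\item $\beta_{j-1}\le\bigl(p+\tfrac{m-1}{2}\bigr)\bigl(q+1+\tfrac m2\bigr)$;
\item $aL(L-1)-a\,d_{j,k}=a\bigl(m(m-1)+2pq+mp+(m-1)q\bigr)$.
\end{itemize}
The two $\beta$ bounds remain valid at $j=0$ and $j=k$, where the missing $\beta$ is $0$ and the right-hand side is nonnegative. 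Hence
\begin{align*}
aL(L-1)-R_j-\bigl((a-2)L+\tfrac12\bigr)\ \ge\ 2(a-1)pq+(a-1)(m-1)p+(a-1)(m-2)q+m\bigl((a-\tfrac12)m-2a+\tfrac32\bigr)\ \ge\ 1 .
\end{align*}
This handles every $j$ and every $k$ at once, with no case split and no reweighting.

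The paper takes a slightly different path within the same framework. It uses the additive AM--GM bound $\beta_j\le\tfrac12\bigl((j+1)(k-j)+(L-j)(L-k+j)\bigr)$. The resulting row bound is a convex quadratic in $j$ with leading coefficient $2(a-1)$, so only $j=0$ and $j=k$ matter. Both endpoint values increase in $k$, and evaluating them at $k=L-2$ gives the two branches of the minimum. Your pairing yields the same $j$-dependence with a constant smaller by $m(m-1)/2$. So either the paper's convexity reduction or the termwise bound above finishes the proof.
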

\begin{proof}
    By Gershgorin's circle theorem, any eigenvalue $\lambda$ obeys
    \begin{align*}
    \lambda \le \max_{0\le j\le k} \{ d_j + |\beta_j| + |\beta_{j-1}| \},
    \end{align*}
    Using AM--GM, $\sqrt{xy}\le \frac{x+y}{2}$, we bound
    \begin{align*}
        \beta_j &\le \tfrac12\bigl((j+1)(k-j)+(L-j)(L-k+j)\bigr), \\
        \beta_{j-1} &\le \tfrac12\bigl(j(k-j+1)+(L-j+1)(L-k+j-1)\bigr).
    \end{align*}
    Adding and simplifying,
    \begin{align*}
        \beta_j+\beta_{j-1} \le (k-j)(j+1)+j(k-j+1)+(L-j)(L-k+j).
    \end{align*}
    Therefore, 
    \begin{align*}
        \lambda \le \max_{0\le j\le k} \bigl[&a((k-j)(L-j)+j(L-1-(k-j))) \\
        &+ (k-j)(j+1)+j(k-j+1)+(L-j)(L-k+j)\bigr].
    \end{align*}
    Expanding and collecting terms gives
    \begin{align*}
        \lambda \le \max_{0\le j\le k} \bigl[2(a-1)j^2 + (a-1)((L-1)-L-2k)j + \tfrac{2L-1}{2} + (a-1)kL + L(L-1)\bigr].
    \end{align*}
    Since $2(a-1)>0$, this quadratic is convex, so the maximum occurs at $j=0$ or $j=k$.
    
    At $j=0$,
    \begin{align*}
        \lambda \le (a-1) k L+L^2-\frac{1}{2},
    \end{align*}
    which is increasing in $k$, and at $k=L-2$ equals
    \begin{align*}
        L (a (L-2)+2)-\frac{1}{2} = aL(L-1)-\of{(a-2) L+\frac{1}{2}}.
    \end{align*}
    
    At $j=k$,
    \begin{align*}
        \lambda \le (a-1) k (L-1)+L^2-\frac{1}{2},
    \end{align*}
    which is also increasing in $k$ and at $k=L-2$ equals
    \begin{align*}
        a (L-2) (L-1)+3 L-\frac{5}{2} =  aL(L-1)-\of{2 a (L-1)-3 L+\frac{5}{2}}.
    \end{align*}
    Taking the minimum over the two expressions yields the lemma.
\end{proof}

For Case $3$ we show
\begin{claim}\label{claim:lambda_2_M_L_minus_1}
    \begin{align*}
        \lambda_2\of{M_{L-1,L}} \le &\; aL(L-1)\\
        &\,-\min\ofc{\of{(4 a-5) L-10 a+\frac{21}{2}}, \of{(a-2) L-a+\frac{3}{2}}, (3 - 3 a + 2(a-1) L)}.
    \end{align*}
\end{claim}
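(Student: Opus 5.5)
We bound the second eigenvalue of $M_{L-1,L}$ by splitting $\mathbb{R}^{\binom{n'}{L-1}}$ into the symmetric subspace $W_{L-1}$ and its orthogonal complement $W_{L-1}^\perp$. Both are invariant under $M_{L-1,L}$: $W_{L-1}$ is the fixed space of the commuting action of $S_L\times S_{L-1}$, and $M_{L-1,L}$ is symmetric. By \cref{claim:max_eigval_symmetric}, the top eigenvector lies in $W_{L-1}$. Hence $\lambda_2(M_{L-1,L})$ is the larger of two quantities: $\lambda_2(N_{L-1,L})$, and the top eigenvalue of $M_{L-1,L}$ restricted to $W_{L-1}^\perp$. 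The three terms in the minimum should come from these two pieces. I expect one or two to come from the tridiagonal matrix $N_{L-1,L}$, and the rest from the non-symmetric sector.

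\emph{Symmetric sector.} Here $k=L-1$, so $d_{j,L-1}=(L-1-j)(L-j)+j^2$. The diagonal entries $d_j=a\,d_{j,L-1}$ are largest at $j=0$, where they equal $aL(L-1)$, and the next largest is at $j=L-1$, namely $a(L-1)^2$. To bound $\lambda_2(N_{L-1,L})$, I would use Cauchy interlacing. Delete the first row and column, leaving the principal submatrix indexed by $j=1,\dots,L-1$. Interlacing gives $\lambda_2(N)\le\lambda_1(N')$. Then I bound $\lambda_1(N')$ with Gershgorin together with the AM--GM bound on $\beta_j$, exactly as in \cref{claim:max_eigval_N_kL}. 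The row bound is again a convex quadratic in $j$, so it suffices to check the endpoints $j=1$ and $j=L-1$. At $j=1$ I must handle the term $\beta_0$ separately, since it lies in the deleted row and column. I expect this computation to produce the terms $(a-2)L-a+\tfrac32$ and $3-3a+2(a-1)L$. If plain Gershgorin is too lossy near $j=1$, I would first conjugate by a diagonal similarity, rescaling the basis vectors $\vect{c}_{j,k}$, before applying Gershgorin.

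\emph{Non-symmetric sector.} For $\vect{v}\perp W_{L-1}$, every part $C_{j,L-1}$ satisfies $\sum_{z\in C_{j}}v_z=0$. Suppose $\vect{v}$ is an eigenvector. The part $C_{0,L-1}$ has a single vertex $v^*$, so $v_{v^*}=0$. Then $\vect{v}$ is supported on the remaining vertices, and its eigenvalue is at most $\lambda_1$ of the principal submatrix of $M_{L-1,L}$ with row and column $v^*$ deleted. To bound that, I would use Gershgorin on the remaining rows. A vertex in $C_{j,L-1}$ with $j\ge1$ has row sum $(a+1)d_{j,L-1}$. This is too large for $j$ near $L-1$, since $(a+1)(L-1)^2$ can exceed $aL(L-1)$. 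So I need a finer argument.

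The refinement is to decompose further using the $S_L\times S_{L-1}$ representation structure. Each part $C_{j}$ is a product of two Johnson schemes. On the orthogonal complement of the Dicke vectors, the adjacency between parts acts like a reduced hopping: the eigenvalues of Johnson-scheme transition maps on non-trivial isotypic components are smaller than on the trivial component. Alternatively, and more simply, I could use the fact that $C_{1,L-1}$ and $C_{L-1,L-1}$ are small. On each non-trivial isotypic component, the same tridiagonal-type reduction, with modified entries $\beta_j'\le\beta_j$ and a shortened range of $j$, then gives a Gershgorin bound of the form $aL(L-1)-((4a-5)L-10a+\tfrac{21}{2})$.

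The main obstacle is this non-symmetric sector. It requires a clean bound on the hopping terms within non-trivial isotypic components, or some other argument that avoids the near-top diagonal entries near $j=L-1$. I would first try the crude deletion-of-$v^*$ plus weighted Gershgorin approach before resorting to the full Johnson-scheme decomposition.
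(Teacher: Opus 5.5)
The gap is in your non-symmetric sector. You correctly reduce it to bounding $\lambda_1(M')$, where $M'$ is $M_{L-1,L}$ with the row and column of $v^*$ deleted. You also correctly see that plain Gershgorin on $M'$ fails, since rows in $C_{L-1,L-1}$ sum to $(a+1)(L-1)^2$. From there, though, you only list candidates: the Johnson-scheme bound on non-trivial isotypic components is asserted with a guessed outcome, and the weighted Gershgorin has no weights specified.

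The paper closes this with one more Perron--Frobenius step, applied to $M'$ itself. For $L\ge 3$, $M'$ is nonnegative and irreducible. To see irreducibility, move right-side tokens leftward into $C_{L-1,L-1}$, which never visits $v^*$; configurations within $C_{L-1,L-1}$ are then linked through $C_{L-2,L-1}$. Because $v^*$ is the unique configuration of $C_{0,L-1}$, it is fixed by $S_L\times S_{L-1}$, so $M'$ still commutes with that action. Its unique positive top eigenvector is therefore invariant, so it lies in $\operatorname{span}\{\vect{c}_{j,L-1}\}_{j\ge 1}$. On that span $M'$ acts as $N'$, which is $N_{L-1,L}$ with its first row and column removed. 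Hence $\lambda_1(M')=\lambda_1(N')$.

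Your weighted-Gershgorin idea also works, and needs no irreducibility, if you put weight $|C_{j,L-1}|^{-1/2}$ on every vertex of $C_{j,L-1}$. The weighted row sum at such a vertex is then exactly $d_j+\beta_j+\beta_{j-1}$, with $\beta_0$ absent when $j=1$; these are the Gershgorin row sums of $N'$. Either way, the sector split is superfluous. Cauchy interlacing gives $\lambda_2(M_{L-1,L})\le\lambda_1(M')$ directly, and the whole claim reduces to Gershgorin on $N'$.

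Consequently your attribution of the three terms is wrong: none of them comes from a non-symmetric sector.
Row $j=1$ of $N'$ gives $d_1+\beta_1\le aL(L-1)-(2(a-1)L-3a+3)$, using $\beta_1=2\sqrt{(L-1)(L-2)}\le 2L-3$.
For $2\le j\le L-1$, the AM--GM row bound is $aL(L-1)-2aj(L-j-\tfrac12)+(2j+1)(L-j)-\tfrac12$. This is convex in $j$. Its endpoint $j=2$ gives $(4a-5)L-10a+\tfrac{21}{2}$, and its endpoint $j=L-1$ gives $(a-2)L-a+\tfrac32$.
So if you treat row $1$ separately, you must also check $j=2$, not only $j=1$ and $j=L-1$.
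Alternatively, keep the $\beta_0$ bound at $j=1$ and use convexity on all of $1\le j\le L-1$. That endpoint gives $(2a-3)L-3a+\tfrac72$, which exceeds $(a-2)L-a+\tfrac32$ by $(a-1)(L-2)\ge 0$, so the stated claim still follows.
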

\begin{proof}
Consider the matrix $M'_{L-1,L}$ formed by removing the row and column of $M_{L-1,L}$ corresponding to the configuration $v^*$. Now, this matrix has dimension $\binom{n'}{L-1}-1$ and has eigenvalues $\mu_1 \ge \mu_2 \ge \cdots \ge \mu_{\binom{n'}{L-1}-1}$. By the Cauchy interlacing theorem, if the eigenvalues of $M_{L-1,L}$ are given by $\kappa_1 \ge \kappa_2 \ge \cdots \ge \kappa_{\binom{n'}{L-1}}$, then we have that $\kappa_2 \le \mu_1$. Now, notice that by arguments identical to \cref{claim:primitivity}, $M'_{L-1,L}$ is nonnegative, irreducible, and primitive. Thus, Perron-Frobenius holds, meaning that the unique maximum eigenvector must be positive, which, as before, means that it must lie in the span of the set $C_{L-1,L} \setminus \ofc{\vect{c}_{0, L-1}}$, i.e, its corresponding eigenvector is an eigenvector of $N'_{L-1,L}$, the matrix $N_{L-1,L}$ with the first row and column removed. 

Now, we apply Gershgorin's theorem to $N'_{L-1,L}$ to bound the eigenvalues. Note that the row sums of $N'_{L-1,L}$ are equal to those of $N_{L-1,L}$ except in the row corresponding to $j=1$, which now is simply $d_1+ \beta_1$. We borrow from the proof of \cref{claim:max_eigval_N_kL} that the upper bounds of the Gershgorin discs are convex in $j$ for $j=2$ to $j=k=L-1$. So it suffices to check $j=1$, $j=2$, and $j=L-1$. 
\begin{itemize}
    \item When $j=2$, $k=L-1$, all eigenvalues $\lambda$ obey
    \begin{align*}
        \lambda \le aL(L-1) - \of{(4 a-5) L-10 a+\frac{21}{2}}.
    \end{align*}
    \item When $j=k=L-1$, all eigenvalues $\lambda$ obey
    \begin{align*}
        \lambda \le aL(L-1) - \of{(a-2) L-a+\frac{3}{2}}.
    \end{align*}
    \item When $j=1$, $k=L-1$,  all eigenvalues $\lambda$ obey
    \begin{align*}
        \lambda \le d_1 + \beta_1 = a\of{(L-2)(L-1)+(L-1-(L-2)}+\sqrt{(2) (1-(L-1)+L)(L-2)(L-1)}.
    \end{align*}
    We can again use AM-GM to get rid of the square root
    \begin{align*}
        \lambda \le d_1 + \beta_1 &= a\of{(L-2)(L-1)+(L-1-(L-2)}+\frac{2(1-(L-1)+L)+(L-2)(L-1)}{2} \\
        &= aL(L-1)-(3 - 3 a + 2(a-1) L).
    \end{align*}
\end{itemize}
Combining the three bounds, completes the proof.
\end{proof}
Now, combining Case $2$ (\cref{claim:max_eigval_N_kL}), Case $3$ (\cref{claim:lambda_2_M_L_minus_1}), and \cref{claim:lambda_1_M_L_lower_bound} we can bound the second eigenvalue of $\wt{M}_L$ in terms of the first eigenvalue
\begin{align*}
    \lambda_2(\wt{M}_L) \le \lambda_1(\wt{M}_L) - \min\Big\{&(4 a-5) L-10 a+\tfrac{21}{2}, \, (a-2) L-a+\tfrac{3}{2}, \,2(a-1) L +3(1-a),\nonumber \\
    &\,(a-2)L +\tfrac{1}{2}, \, (2a-3)L+(\tfrac{5}{2}-2a)\Big\}. 
\end{align*}
Now, for any $a \ge 4$, there exists a large constant $L_a$ such that for any $L > L_a$, we have
\begin{align*}
    \min\Big\{&(4 a-5) L-10 a+\tfrac{21}{2}, \, (a-2) L-a+\tfrac{3}{2}, \,2(a-1) L +3(1-a),\nonumber \\
    &\,(a-2)L +\tfrac{1}{2}, \, (2a-3)L+(\tfrac{5}{2}-2a)\Big\} = (a-2)L-a+\tfrac32.
\end{align*}
We will choose $L$ satisfying $L>L_a$ later. 
Thus, we have 
\begin{align}
    \delta = \lambda_1(\wt{M}_L) - \lambda_2(\wt{M}_L) \ge (a-2)L-a+\tfrac32. \label{eq:bipartite_spectral_gap}
\end{align}

Because $\widetilde{M}_L$ is block diagonal,
\cref{claim:lambda_1_M_L_lower_bound} and \cref{claim:max_eigval_N_kL}  show that 
the maximum energy eigenvector of is supported only on bitstrings of Hamming weight $L-1$ (i.e. the $k=L-1$ block). Let this vector be called $\vect{\alpha}$. By $T_k(G)\cong T_{n'-k}(G)$, there is exactly one other maximum energy eigenvector of the total Hamiltonian that is supported on Hamming weight $L$ bitstrings. One of $L$ or $L-1$ must be even, so we have a maximum eigenstate in the even and odd parity sectors, and can apply \cref{lem:vertex_replacing} to determine the effective interactions from the gadget graph $K_{L,L-1}$.

\subsubsection{Bounding single-qubit gate strength}
Now, assume $L$ is odd. Then $L-1$ is even. We showed the existence of a length $2^{n'-1}$ maximum eigenvector $\vect{\alpha}$ in the even $k=L-1$ Hamming weight sector. Furthermore, we showed in \cref{claim:max_eigval_symmetric} that $\vect{\alpha}$ is symmetric (i.e. it is in $W_{L-1}$) and we showed in \cref{claim:perron} that $\vect{\alpha}>0$. Thus, there is a corresponding positive length-$L$ eigenvector $\vect{\overline{\alpha}}$ that has the same eigenvalue with respect to the symmetrized matrix $N_{L-1,L}$. These vectors are related by
\begin{align*}
    \alpha_z = \frac{\overline{\alpha}_j}{\sqrt{\binom{L}{j}\binom{L-1}{L-1-j}}} = \frac{\overline{\alpha}_j}{\sqrt{\binom{L}{j}\binom{L-1}{j}}},
\end{align*}
since $\overline{\alpha}_j^2 = \sum_{z \in C_{j,L-1}} \alpha_z^2$. 

We now choose $i$, the single qubit Pauli location, to be the first vertex on the right side of the bipartition. We then compute strengths using \cref{lem:vertex_replacing}.

\paragraph{Computing \texorpdfstring{$t_i^Z$}{Z coefficient}}
We start with $\of{Z_i}_{--}$. Because $\vect{\alpha}$ is real and only has support on weight $L-1$ bitstrings we write
\begin{align*}
    t_i^Z &= \sum_{z \in \binom{\ofb{n'}}{L-1}} \alpha_z^2 (-1)^{z_i} =\sum_{j=0}^{L-1} \frac{\overline{\alpha}_j^2}{\binom{L}{j}\binom{L-1}{j}}\sum_{z \in C_{j,L-1}} (-1)^{z_i}. 
\end{align*}
We now compute the number of configurations in part $C_{j,L-1}$ with $z_i=1$. First, there are $\binom{L}{j}$ ways to choose $j$ vertices on the left side. Then, because $i$ is on the right side and $z_i=1$, we must choose a remaining $L-1-j-1$ vertices from the right side out of the remaining $L-2$ vertices. So in total there are 
\begin{align*}
    \binom{L}{j}\binom{L-2}{L-j-2}=\binom{L}{j}\binom{L-2}{j},
\end{align*}
such configurations. Thus we have 
\begin{align*}
    t_i^Z &= \sum_{j=0}^{L-1} \frac{\overline{\alpha}_j^2}{\binom{L}{j}\binom{L-1}{j}}\of{\binom{L}{j}\binom{L-1}{j}-\binom{L}{j}\binom{L-2}{j}} =\sum_{j=0}^{L-1}\overline{\alpha}_j^2\of{\frac{2j}{L-1}-1}. 
\end{align*}
We then show 
\begin{claim}\label{claim:alpha_0_large}
    $\overline{\alpha}_0 > \sqrt{\frac{a-2-1/L}{a-1}}$.
\end{claim}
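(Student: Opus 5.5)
The plan is a variational (Rayleigh-quotient) argument on the symmetrized tridiagonal matrix $N_{L-1,L}$, combined with the Gershgorin row bounds already computed. Let $\overline{\vect{\alpha}}$ be the normalized positive top eigenvector of $N_{L-1,L}$, with eigenvalue $\lambda_1$. By \cref{claim:lambda_1_M_L_lower_bound} and \cref{claim:max_eigval_symmetric}, $\lambda_1 \ge aL(L-1) = d_0$. On the other hand,
\begin{align*}
\lambda_1 = \sum_{j} d_j \overline{\alpha}_j^2 + 2\sum_j \beta_j \overline{\alpha}_j\overline{\alpha}_{j+1}.
\end{align*}
Applying $2xy \le x^2+y^2$ to each off-diagonal term gives
\begin{align*}
\lambda_1 \le \sum_j r_j\,\overline{\alpha}_j^2, \qquad r_j \defeq d_j+\beta_j+\beta_{j-1},
\end{align*}
with the convention $\beta_{-1}=\beta_{L-1}=0$. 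So $\lambda_1$ is bounded by an $\overline{\alpha}^2$-weighted average of the Gershgorin row bounds.

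Next I would separate the $j=0$ row from the others. At $k=L-1$ and $j=0$ we have $d_0=aL(L-1)$ and $\beta_0=\sqrt{1\cdot 1\cdot (L-1)\cdot L}<L-\tfrac12$. Hence $r_0 = aL(L-1)+\beta_0$. For $j\ge 1$, I would show a uniform bound $r_j \le aL(L-1)-g$ for an explicit $g$ of order $(a-2)L$. The bounds from the proof of \cref{claim:lambda_2_M_L_minus_1} already do most of this work. The AM--GM form of $r_j$ is a convex quadratic in $j$, so on $2\le j\le L-1$ it suffices to check the endpoints $j=2$ and $j=L-1$; the row $j=1$, which now includes $\beta_0$, must be computed directly. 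This gives $g=\min\{(4a-5)L-10a+\tfrac{21}{2},\,(a-2)L-a+\tfrac32,\,\dots\}$, where the last entry is the $j=1$ row with $\beta_0$ added back. Redoing the $j=1$ row tightly is where care is needed.

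Combining the two bounds gives
\begin{align*}
aL(L-1)\le \lambda_1 \le \overline{\alpha}_0^2\,(aL(L-1)+\beta_0) + (1-\overline{\alpha}_0^2)(aL(L-1)-g),
\end{align*}
which rearranges to $\overline{\alpha}_0^2 \ge g/(g+\beta_0)$. Since $x\mapsto x/(x+\beta_0)$ is increasing, it remains to check $g/(g+\beta_0) > (a-2-1/L)/(a-1)$. This is equivalent to $g > (a-2-1/L)\beta_0$. Using $\beta_0 < L$, it suffices to have $g \ge (a-2)L-1$, possibly after replacing $\beta_0$ by the sharper $\sqrt{L(L-1)} \le L-\tfrac12$.

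The main obstacle is this last constant-chasing step. The term $(a-2)L-a+\tfrac32$ is slightly smaller than $(a-2)L-1$ once $a>\tfrac52$. So a crude AM--GM bound at the $j=L-1$ row may lose the $-1/L$ slack. If so, I would first try keeping the exact square roots $\beta_{L-2}$ (and $\beta_0$) instead of their AM--GM bounds. If that still fails, I would use a weighted inequality $2xy\le tx^2+y^2/t$ with $t$ tuned per edge, which redistributes the off-diagonal mass toward the dominant $j=0$ row. Finally, I would take $L>L_a$ large so that the other entries of the minimum are irrelevant and only the $\Theta(1)$ corrections need to be compared.
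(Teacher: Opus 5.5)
Your argument is correct, but it reaches the bound through a different key step than the paper. The paper uses the trial vector $\vect{v}^*=(1,0,\dots,0)$, writes $\vect{v}^*=\sqrt{c}\,\overline{\vect{\alpha}}+\sqrt{1-c}\,\overline{\vect{\alpha}}^{\perp}$, and bounds $aL(L-1)=(\vect{v}^*)^T N_{L-1,L}\vect{v}^*\le c\lambda_1+(1-c)\lambda_2$. That requires $\lambda_1\le d_0+\beta_0$ and also the second-eigenvalue bound $\lambda_2\le aL(L-1)-\of{(a-2)L-a+\tfrac32}$, which rests on the interlacing and Perron--Frobenius argument of \cref{claim:lambda_2_M_L_minus_1}.

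You instead bound the Rayleigh quotient of the eigenvector itself by its weighted row sums, $\lambda_1\le\sum_j r_j\overline{\alpha}_j^2$. So you never need $\lambda_2$, only $\lambda_1\ge d_0$ and Gershgorin-type row estimates for $N_{L-1,L}$. This is more self-contained, and it shows directly where constants can be sharpened; for example, the exact last-row deficit is $(a-\sqrt2)(L-1)$.

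With the crude constants, the two routes give literally the same inequality. For $L\ge3$ the minimum deficit over rows $j\ge1$ is $g\ge(a-2)L-a+\tfrac32$. Your row $j=1$, with $\beta_0$ included, has deficit at least $(2a-3)L-3a+\tfrac72$, which is larger. Using $\beta_0=\sqrt{L(L-1)}<L-\tfrac12$, your bound $\overline{\alpha}_0^2\ge g/(g+\beta_0)$ becomes $\overline{\alpha}_0^2\ge\frac{(a-2)L-a+3/2}{(a-1)L-a+1}$. This is exactly the paper's lower bound on $c$.

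The obstacle you anticipate at the end is not real; it comes from an algebra slip. The inequality $g/(g+\beta_0)>(a-2-\tfrac1L)/(a-1)$ is equivalent to $(1+\tfrac1L)\,g>(a-2-\tfrac1L)\,\beta_0$, not to $g>(a-2-\tfrac1L)\,\beta_0$. You then further weakened $\beta_0<L-\tfrac12$ to $\beta_0<L$. Keeping both, the crude value already suffices:
\[
\of{1+\tfrac1L}\of{(a-2)L-a+\tfrac32}-\of{a-2-\tfrac1L}\of{L-\tfrac12}=(a-1)\of{\tfrac12-\tfrac1L}\ge 0
\]
for $L\ge2$, and strictness follows from $\beta_0<L-\tfrac12$. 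So you need neither exact square roots nor tuned weights. Your first fallback (the exact $\beta_{L-2}$) would also have closed the argument for large $L$.
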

\begin{proof}
    Define $\vect{v}^*$ as the length $L$ vector $(1,0,\ldots 0)$. This obtains an energy of $aL(L-1)$ on $N_{L-1,L}$. Decompose $\vect{v}^*$ as
    \begin{align*}
        \vect{v}^* = \sqrt{c}\,\vect{\overline{\alpha}} + \sqrt{1-c}\,\vect{\overline{\alpha}}^{\perp},
    \end{align*}
    where $0 < c < 1$ by Perron-Frobenius and $\vect{\overline{\alpha}}^{\perp}$ is any vector orthogonal to $\vect{\overline{\alpha}}$. Now, we can compute
    \begin{align*}
        \of{\vect{v}^*}^T N_{L, L-1} \vect{v}^* &= c \of{\vect{\overline{\alpha}}}^T\! N_{L, L-1}\, \vect{\overline{\alpha}} + (1-c) (\vect{\overline{\alpha}}^{\perp})^T N_{L, L-1}\,\vect{\overline{\alpha}}^{\perp} + 2 \sqrt{c}\sqrt{1-c}\, (\vect{\overline{\alpha}}^{\perp})^T N_{L, L-1}\, \vect{\overline{\alpha}}\\
        &=c \of{\vect{\overline{\alpha}}}^T\! N_{L, L-1} \,\vect{\overline{\alpha}} + (1-c) (\vect{\overline{\alpha}}^{\perp})^T N_{L, L-1}\, \vect{\overline{\alpha}}^{\perp} + 2 \sqrt{c}\sqrt{1-c}\, (\vect{\overline{\alpha}}^{\perp})^T \vect{\overline{\alpha}}\, \lambda_1\\
        &=c \of{\vect{\overline{\alpha}}}^T \!N_{L, L-1}\, \vect{\overline{\alpha}} + (1-c) (\vect{\overline{\alpha}}^{\perp})^T N_{L, L-1} \,\vect{\overline{\alpha}}^{\perp}\\
        &\le c \lambda_1(N_{L-1,L}) + (1-c) \lambda_2(N_{L-1,L})
    \end{align*}
    Recall that $\lambda_1(N_{L-1,L})=\lambda_1(\wt{M}_L)$ and $\lambda_2(N_{L-1,L}) \le \lambda_2(\wt{M}_L)$.
   We previously computed for large enough $L$ (\cref{claim:max_eigval_N_kL} and \cref{claim:lambda_2_M_L_minus_1})
   \begin{align*}
       \lambda_2(\wt{M}_L) \le a \, L (L-1)- (a-2)L+a-\tfrac32.
   \end{align*}
   We have bounded the Gershgorin discs of $N_{k,L}$ of all $0 \le k \le L-1$ \emph{except} that corresponding to the first row of $N_{L-1,L}$. This, we can bound using AM-GM 
   \begin{align}
       \lambda_1(\wt{M}_L) \le d_0+\beta_0 = a(L-1)L+\sqrt{L(L-1)} \le aL(L-1)+L-1/2.\label{eq:bipartite_lambda_1_upper_bound}
   \end{align}
   Note that $\of{\vect{v}^*}^T N_{L, L-1} \vect{v}^* = aL(L-1)$, so
   \begin{align*}
       aL(L-1)&\le c \lambda_1(N_{L-1,L}) + (1-c) \lambda_2(N_{L-1,L}) \\
       \implies 0 &\le c \left(a (L-1) L+L-\frac{1}{2}\right)+(1-c) \left(a (L-1)
   L-(a-2) L+a-\frac{3}{2}\right)-a (L-1) L\\
        &=a (c-1) (L-1)-cL+c+2 L-\frac{3}{2}.
   \end{align*}
   Solving the inequality for $c$ yields
   \begin{align*}
       c \ge \frac{L(a-2)+(\tfrac32-a)}{L(a-1)+(1-a)} = 1-\frac{L-\tfrac12}{(L-1)(a-1)} =  1- \frac{1}{a-1} -\frac{\tfrac12}{(L-1)(a-1)} \ge \frac{a-2-1/L}{a-1},
   \end{align*}
   where the last inequality holds when $L \ge 2$.
   The proof follows by observing $\sqrt{c} = (\vect{v}^*)^T \vect{\overline{\alpha}} = \alpha_0$.
\end{proof}

Taking \cref{claim:alpha_0_large} then yields
\begin{align*}
    t_i^Z &= -\overline{\alpha}_0^2 + \sum_{j=1}^{L-1} \overline{\alpha}_j^2\of{\frac{2j}{L-1}-1} 
    \le  -\overline{\alpha}_0^2 + \sum_{j=1}^{L-1} \overline{\alpha}_j^2
    = 1 - 2\overline{\alpha}_0^2
    < 1 - 2\left( \frac{a-2-1/L}{a-1}\right)\,.
\end{align*}
When $a\ge 4$ and $L \ge 3$,  we have by inspection $t_i^Z < 1 - 2(2-1/3)/3 = -1/9$. Since $t_i^Z$ can have magnitude at most $1$, $t_i^Z = \Theta(1)$.

\paragraph{Computing \texorpdfstring{$t_i^X$ and $t_i^Y$}{X and Y coefficients}}
We now show  for any  $a \ge 4$, there exists some constant $L_a'$ such that for all $L > L_a'$ we have $1/(3a)^{L} \le t_i^X = t_i^Y  \le (2/a)^{L}$. Like our computation of $t_i^Z$, we start by using \cref{lem:vertex_replacing} to express $t_i^X$  in terms of the length $L$ maximum eigenvector $\alpha$ of $N_{L-1,L}$
\begin{align*}
    t_i^X &= \sum_{z \in \binom{\ofb{n'}}{L-1}} \alpha_z \alpha_{\overline{z}^{(i)}} = \sum_{j=0}^{L-1} \sum_{z \in C_{j,L-1}} \alpha_z \alpha_{\overline{z}^{(i)}}.
\end{align*}
We simplify this expression by looking at the bitstring $\overline{z}^{(i)}$. This means that starting from a configuration in part $C_{j,L-1}$, we flip all bits except for that corresponding to vertex $i$. If we flip all bits we end up with $L-j$ tokens on the left side and $j$ tokens on the right side. This is a total of $L$ tokens. However, $\vec{\alpha}$ is only supported on bitstrings with $L-1$ tokens (Hamming weight $L-1$). Thus, in order for $\overline{z}^{(i)}$ to have $L-1$ total tokens we must have that $\overline{z}$ has a token at vertex $i$ (i.e  $\overline{z}_i=1$). 
So $i$ must be one of the $j$ tokens on the right side after flipping, which happens with a $j/(L-1)$ fraction of configurations in $C_{j,L-1}$. This means when $j=0$ there is no configuration remaining in the $k$-th token graph. So in total we have
\begin{align*}
    t_i^X &= \sum_{j=1}^{L-1} \binom{L}{j}\binom{L-1}{j} \frac{\overline{\alpha}_j}{\sqrt{\binom{L}{j}\binom{L-1}{j}}}\frac{\overline{\alpha}_{L-j}}{\sqrt{\binom{L}{L-j} \cdot \binom{L-1}{L-j}}}\frac{j}{L-1}  \\
    &=\sum_{j=1}^{L-1} \overline{\alpha}_j \overline{\alpha}_{L-j} \of{\frac{j}{L-1}} \sqrt{\frac{L}{j}-1}\\
    &=\sum_{j=1}^{L-1} \overline{\alpha}_j \overline{\alpha_{L-j}} \frac{1}{L-1}\sqrt{j}\sqrt{L-j}.
\end{align*}
Note that the only difference between $t_i^X$ and $t_i^Y$ is the presence of $(-1)^{z_i}$ in each term. In the analysis of $t_i^X$, however, the only terms that survive are when  $z_i=0$, so this term is constant and thus $t_i^X=t_i^Y$.

Notice when $k=L-1$, the matrix $N_{L-1,L}$ is specified by 
\begin{align}
    d_j &\defeq  a (L(L-1)- 2j(L-j-\tfrac12) \qquad \forall j\in \ofb{0,L-1},\label{eq:N_L_minus_1_degree}  \\
    \beta_j &\defeq (j+1)\sqrt{(L - j)(L-(j+1))} \qquad \forall j\in \ofb{0,L-2}.\label{eq:N_L_minus_1_beta}
\end{align}
Because $\vect{\overline{\alpha}}$ satisfies $N_{L-1,L}\vect{\overline{\alpha}}=\lambda_1 \vect{\overline{\alpha}}$, we can derive the recurrence relation for each $j=\ofb{1,L-2}$.
\begin{align}
    &\beta_{j-1}\overline{\alpha}_{j-1} + d_j \overline{\alpha}_j +\beta_j \overline{\alpha}_{j+1} =\lambda_1 \overline{\alpha}_j  \nonumber\\
    &\implies \beta_{j-1}\overline{\alpha}_{j-1} = (\lambda_1-d_j)\overline{\alpha}_j-\beta_j\overline{\alpha}_{j+1}\nonumber\\
    &\implies   \frac{\overline{\alpha}_{j-1}}{\overline{\alpha}_j}=\frac{(\lambda_1-d_j)- \beta_j \frac{\overline{\alpha}_{j+1}}{\overline{\alpha}_{j}}}{\beta_{j-1}}\label{eq:bipartite_inductive_recurrence},
\end{align}
When $j=L-1$ we instead have
\begin{align}
    &\beta_{j-1}\overline{\alpha}_{j-1}+d_j \overline{\alpha}_j =\lambda_1 \overline{\alpha}_j  \nonumber\\
    &\implies \beta_{j-1}\overline{\alpha}_{j-1} = (\lambda_1-d_j)\overline{\alpha}_j\nonumber\\
    &\implies   \frac{\overline{\alpha}_{j-1}}{\overline{\alpha}_j}=\frac{(\lambda_1-d_j)}{\beta_{j-1}},\label{eq:bipartite_base_recurrence}
\end{align}

Now we upper bound $t_i^X$. We begin by showing
$\frac{\overline{\alpha}_{j-1}}{\overline{\alpha}_{j}} \ge \frac{a}{\sqrt{2}}$ for all $j=\ofb{1, L-1}$. We prove this by induction. 

We take $j=L-1$ as the base case. By plugging in \cref{eq:N_L_minus_1_degree,eq:N_L_minus_1_beta} with $j=L-1$ into the recurrence relation from \cref{eq:bipartite_base_recurrence} we get
\begin{align*}
     \frac{\overline{\alpha}_{L-2}}{\overline{\alpha}_{L-1}}&=\frac{(\lambda_1-d_{L-1})}{\beta_{L-2}}
     \ge \frac{aL(L-1) - (aL(L-1)-2a(L-1)(L-(L-1)-\tfrac12))}{(L-1)\sqrt{2}} = \frac{a}{\sqrt{2}},
\end{align*}
where in the inequality we used \cref{claim:lambda_1_M_L_lower_bound} to lower bound the maximum eigenvalue.
We then induct on decreasing $j$ using the general recurrence relation in \cref{eq:bipartite_inductive_recurrence}
\begin{align*}
    \frac{\overline{\alpha}_{j-1}}{\overline{\alpha}_j}&=\frac{(\lambda_1-d_j)- \beta_j \frac{\overline{\alpha}_{j+1}}{\overline{\alpha}_{j}}}{\beta_{j-1}} \ge \frac{2aj(L-j-\tfrac12)+(j+1)(L-j-\tfrac12)\frac{\sqrt{2}}{a}}{j(L-j+\tfrac12)}\\
    &=a \cdot \frac{L-j-\tfrac12}{L-j+\tfrac12} \cdot \frac{2 j + (j+1)\frac{\sqrt{2}}{a}}{j} \ge a \cdot \frac{3}{5} \cdot \frac{2 j + (j+1)\frac{\sqrt{2}}{a}}{j} \ge \frac{6(a^2-\sqrt{2})}{5a}.
\end{align*}
In the first inequality, we used that by the AM-GM inequality, 
\begin{align*}
    \sqrt{(L-j)(L-(j+1)}\le \tfrac12((L-j)+(L-j-1)=L-j-\tfrac12 \implies \beta_j \le (j+1)(L-j-\tfrac12),
\end{align*}
and that $\lambda_1 \ge a L (L-1)$ (\cref{claim:lambda_1_M_L_lower_bound}). The second to last inequality is valid for $L \ge 3$ and the last inequality is obtained by minimizing the expression over $j \in \ofb{1, L-2}$, and noting the minimum occurs at $j=1$. Finally, for all $a\ge 4$ we have 
\begin{align*}
    \frac{6(a^2-\sqrt{2})}{5a} \ge \frac{a}{\sqrt{2}}.
\end{align*}
We also show $\frac{\overline{\alpha}_{j-1}}{\overline{\alpha}_{j}} \le 2a+2$ for all $1\le j\le L-1$. We again use \cref{eq:bipartite_inductive_recurrence} for $1 \le j \le L-2$
\begin{align*}
    \frac{\overline{\alpha}_{j-1}}{\overline{\alpha}_{j}} = \frac{\of{\lambda_1-d_j} - \beta_{j}\frac{\overline{\alpha}_{j+1}}{\overline{\alpha}_{j}}}{\beta_{j-1}} \le \frac{\of{\lambda_1-d_j}}{\beta_{j-1}} \le \frac{L + 2aj(L-j)}{j(L-j)} \le 2a+\frac{L}{j(L-j)} \le 2a+2,
\end{align*}
where in the second inequality we used 
\begin{align*}
    \lambda_1 &\le aL(L-1)+\sqrt{L(L-1)} \le aL(L-1)+L,\\
    \beta_{j-1} &= j\sqrt{(L-j+1)(L-j)} \ge j(L-j)
\end{align*} 
and in the last inequality we used that $\frac{L}{j(L-j)}$ is minimized in the range $1\le j \le L-2$ at  $j=1$, yielding $\frac{L}{L-1}$, which is at most $2$. The case $j=L-1$ follows from an identical argument using \cref{eq:bipartite_base_recurrence} instead as the recurrence relation.

Now, we have shown that for all $1 \le j \le L-1$
\begin{align}
    \frac{1}{2a+2}\le \frac{\overline{\alpha}_{j}}{\overline{\alpha}_{j-1}}\le \frac{\sqrt{2}}{a}.
\end{align}
This, in particular implies that for all $1 \le j \le L-1$,
\begin{align*}
    \of{\frac{1}{2a+2}}^j \le \frac{\overline{\alpha}_j}{\overline{\alpha}_0} \le \of{\frac{\sqrt{2}}{a}}^j
\end{align*}
The upper bound on $\frac{\overline{\alpha}_j}{\overline{\alpha}_0}$ bounds the sum from above
\begin{align*}
    t_i^X &= \sum_{j=1}^{L-1} \frac{\sqrt{j}\sqrt{L-j}}{L-1} \overline{\alpha}_j \overline{\alpha}_{L-j}\le \sum_{j=1}^{L-1}  \overline{\alpha}_j \overline{\alpha}_{L-j}
    \le \sum_{j=1}^{L-1} \overline{\alpha}_0^2 \of{\frac{\sqrt{2}}{a}}^L \le L \of{\frac{\sqrt{2}}{a}}^L,
\end{align*}
where in the first inequality we used that $\frac{\sqrt{j(L-j)}}{L-1} \le 1$.

The lower bound on $\frac{\overline{\alpha}_j}{\overline{\alpha}_0}$ bounds the sum from below
\begin{align*}
    t_i^X &= \sum_{j=1}^{L-1} \frac{\sqrt{j}\sqrt{L-j}}{L-1} \overline{\alpha}_j \overline{\alpha}_{L-j} \ge \sum_{j=1}^{L-1}\frac{\sqrt{j}\sqrt{L-j}}{L-1} \overline{\alpha}_0^2 \of{\frac{1}{2a+2}}^j \of{\frac{1}{2a+2}}^{L-j} \\
    &= \of{\frac{1}{2a+2}}^L \overline{\alpha}_0^2 \sum_{j=1}^{L-1} \frac{\sqrt{j}\sqrt{L-j}}{L-1}\ge \of{\frac{1}{2a+2}}^L \left(\frac{a-2-\tfrac1L}{a-1}\right) \frac{(L+1)}{3}.
\end{align*}
Here, in the last inequality we use \cref{claim:alpha_0_large} to lower bound $\overline{\alpha}_0$ and that  $\sqrt{j(L-j)} \geq \frac{2j(L-j)}{L}$ for all $j \in \ofb{0,L-1}$, so
\begin{align*}
    \sum_{j=1}^{L-1} \frac{\sqrt{j}\sqrt{L-j}}{L-1}
    &\geq \frac{2}{L(L-1)} \sum_{j=1}^{L-1} j(L-j) 
= \frac{2}{L(L-1)} \left( \frac{L^2(L-1)}{2} - \frac{(L-1)L(2L-1)}{6} \right) = \frac{(L+1)}{3}.
\end{align*}

Thus, for any $a\ge 4$, there is a constant $L'_a$ such that for any $L>L'_a$, we have $t_i^X=t_i^Y$ and 
\begin{align*}
   \frac{1}{(3a)^L} \le  \of{\frac{1}{2a+2}}^L \left(\frac{a-2-\tfrac1L}{a-1}\right) \frac{(L+1)}{3} \le t_i^X \le L \of{\frac{\sqrt{2}}{a}}^L \le \left(\frac{2}{a}\right)^L\,.
\end{align*}

\subsubsection{Computing \texorpdfstring{$\of{K_{ii}}_{--}$}{K}}
\label{subsub:bipartite_part2_end}
From \cref{sec:prelims/normal_form} we interchange Paulis $XX \leftrightarrow ZZ$ so that we arrive at our original formulation of the local term $K$. Since we have computed $\of{O_i}_{--}$ for all three Paulis, we use \cref{sec:our_gadgets/vertex_replacing} to find that the effective local Hamiltonian term is
\begin{align*}
    \of{K_{ii}}_{--} &= a \, \of{t_i^Z}^2 X^{(L)} \otimes X^{(L)} - \of{t_i^X}^2 \big(Y^{(L)} \otimes Y^{(L)}+  Z^{(L)} \otimes Z^{(L)}\big),\\
    -1 \le t_i^Z &\le -1/9,
    \quad\,
    \quad\,
    \quad\, \of{\frac{1}{3a}}^L 
    \le t_i^X  \le 
    \left(\frac{2}{a}\right)^L\,.
\end{align*}

We will choose $L = \max[L_a, L_{a'}, 3,  c \cdot \log_{3a} n]$ for some constant $c$. Then $t_i^X$ is at least an inverse polynomial in $n$; i.e. $t_i^X \ge n^{-c}$.
We divide by the coefficient in front of the $YY$ and $ZZ$ terms to put the Hamiltonian into canonical form:
\begin{align*}
    \of{K_{ii}}_{--} = g(L) X^{(L)}\otimes X^{(L)}  - Y^{(L)}\otimes Y^{(L)} - Z^{(L)}\otimes Z^{(L)}.
\end{align*}
The coefficient $g(L) =  a \cdot (t_i^Z)^2 \cdot (t_i^X)^{-2}$ is at least $a/81 \cdot (a/2)^L$ and at most $a \cdot (3a)^L$.

\subsubsection{Simulating \texorpdfstring{$XX$}{XX} and \texorpdfstring{$-YY-ZZ$}{-YY-ZZ}}
\label{subsub:bipartite_part3}
Finally, we show the last step of the proof sketch in \cref{sec:stoqma}. 
We apply \cref{lem:edge_replacing} to generate local terms $K_2=a' XX -b'YY-ZZ$ and $K_3=-a' XX -b'YY-ZZ$, where
\begin{align*}
    a' = \frac{1}{2}g(L)^2(g(L)+1), \quad b' = 1.
\end{align*}
Choosing the first term simulates $XX$ to error $1/a'$ per edge. Choosing an equal weight on the two terms simulates $-YY-ZZ$. 

Note that choosing $L = c \log_{3a} n$ for large enough $c$ makes $g(L)$ an arbitrarily large polynomial in $n$, and $1/a'$ an arbitrarily small polynomial in $n$. To simulate both $+XX$ and $-YY-ZZ$ terms on a graph $G([n], E, \{w_{ij}\})$ at any target inverse
polynomial error $p$, we choose $L$ such that $\left(\sum_{(i,j) \in E} w_{ij} \right) \cdot 1/a' \le p$.

\subsection{Proof of \texorpdfstring{\cref{lem:xx_and_-yy-zz_to_tim}}{XX,-YY-ZZ to TIM}}\label{apx:main_theorem/xx_and_-yy-zz_to_tim}
We now show how to use the $\ofc{XX,-YY-ZZ}^+\!$-Hamiltonian to simulate
\begin{align*}
    H_{\text{target}} = \sum_{(i,j)\in E} \alpha_{ij} X_iX_j + \sum_i \beta_i Z_i, 
\end{align*}
for any graph $G(V,E)$ and $\alpha, \beta \in \mathbb{R}_{\ge 0}$.

For this step, we require second order perturbation theory, as formalized in \cref{sec:prelim}. Our gadget is vertex-replacing, and very similar to that of \cite[Section 3.2]{piddock2015}. We simulate a target system of $|V| = n$ qubits using $2n$ qubits, creating a pair of qubits $i'$, $i''$ for each qubit $i$. We then choose the Hamiltonians in \cref{lem:2nd_order_pert} to be
\begin{align*}
    H_0 &= \sum_{i=1}^n  \frac{1}{2}\of{X_{i'} X_{i''}+I} = \sum_{i=1}^n \ket{\psi^+}\bra{\psi^+}_{i'i''} + \ket{\phi^+}\bra{\phi^+}_{i'i''},\\
    V_{\text{extra}} &= \sum_{(i,j)\in E} \alpha_{ij}X_{i'}X_{j'},\\
    V_{\text{main}} &= \sum_{j=1}^n \of{-\sqrt{\beta_{j}/2}}\cdot  \of{Y_{f(j)'}Y_{j'}+Z_{f(j)'}Z_{j'}+Y_{f(j)'}Y_{j''}+Z_{f(j)'}Z_{j''}},
\end{align*}
where $f(j) = j-1 \mod n$.

For each pair of physical qubits $(i', i'')$, the ground space of $H_0$ is $\of{S_-}_i \defeq \text{span}\!\ofc{\ket{\psi^-}_i,\ket{\phi^-}_i}$ and excited space is  $\of{S_+}_i \defeq \text{span}\!\ofc{\ket{\psi^+}_i,\ket{\phi^+}_i}$. (We use the subscript $i$ to mean that the state lives on qubits $i'$ and $i''$.)
We identify $\ket{0^L}_i=\ket{\psi^-}_i$, $\ket{1^L}_i=\ket{\phi^-}_i$ as the logical qubit $i$. Then, the projector onto the ground space of $H_0$ is the simultaneous projector onto the ground state of each logical qubit
\begin{align*}
    P_- = \otimes_{i =1}^n \of{P_-}_i, \quad \of{P_-}_i \defeq \of{\ket{\psi^-}\bra{\psi^-}+\ket{\phi^-}\bra{\phi^-}}_{i}.
\end{align*}

Consider $\of{\sigma_{i^*}\sigma_{j^*}}_{--}$ where $\sigma\in \ofc{X,Y,Z}$, $i^* \in \{i', i''\}$, and $j^* \in \{j', j''\}$. Both the operator and the projector are a tensor product over pairs $(i',i'')$, so the nontrivial term is 
\begin{align*}
    &\of{\ket{\psi^-}\bra{\psi^-}+\ket{\phi^-} \bra{\phi^-}}_{i} \sigma_{i^*} \of{\ket{\psi^-}\bra{\psi^-}+\ket{\phi^-}\bra{\phi^-}}_{i} \\
    \otimes &\of{\ket{\psi^-}\bra{\psi^-}+\ket{\phi^-}\bra{\phi^-}}_{j} \sigma_{j^*} \of{\ket{\psi^-}\bra{\psi^-}+\ket{\phi^-}\bra{\phi^-}}_{j}.
\end{align*}
Using  \cref{eq:pauli_on_bell}, we verify that $X_{i'}$ and $X_{i''}$ are the only single-qubit Paulis that map $\of{S_-}_i$ to itself (and likewise map $\of{S_+}_i$ to itself). The operators $Y_{i'}$, $Y_{i''}$, $Z_{i'}$, $Z_{i''}$ map from $\of{S_-}_i$ to $\of{S_+}_i$ and vice versa. Thus, $\of{\sigma_{i'}\sigma_{j'}}_{--}$ and $\of{\sigma_{i'}\sigma_{j''}}_{--}$ are nonzero only for $\sigma=X$. Thus, $\of{V_{\text{main}}}_{--}=0$. For $V_{\text{extra}}$, the nontrivial terms are
\begin{align*}
    \of{X_{i'}X_{j'}}_{--} &\mapsto \of{-\ket{\psi^-}\bra{\phi^-}-\ket{\phi^-}\bra{\psi^-}}_i \otimes \of{-\ket{\psi^-}\bra{\phi^-}-\ket{\phi^-}\bra{\psi^-}}_j= X^L_i X^L_j,
\end{align*}
so $\of{V_{\text{extra}}}_{--} = \sum_{\of{i,j}\in E} \alpha_{ij} X^L_i X^L_j$.

We now compute the second-order contributions. Again, consider an operator $\sigma_{i^*}\sigma_{j^*}$. We aim to compute $\of{\sigma_{i^*}\sigma_{j^*}}_{-+} \defeq P_- \sigma_{i^*}\sigma_{j^*} P_+$. 
 The projector $P_+\defeq I-P_-$ is equal to the sum of
$\otimes_{i=1}^n P_i$ where $P_i \in \{(P_+)_i, (P_-)_i\}$, and at least one $P_i$ is $\of{P_+}_i \defeq \of{\ket{\psi^+}\bra{\psi^+}+\ket{\phi^+}\bra{\phi^+}}_{i}$. 

Suppose that $\sigma=X$. 
Since $X_{i'}$ and $X_{i''}$ map $(S_-)_i$ to itself, any term $\of{\sigma_{i^*}\sigma_{j^*}}_{-+}$ is zero. The same is true for $\of{\sigma_{i^*}\sigma_{j^*}}_{+-}$.  So $\of{V_{\text{extra}}}_{-+}=\of{V_{\text{extra}}}_{+-}=0$.

Suppose that $\sigma=Y$. Any term of $\of{\sigma_{i'}\sigma_{j'}}_{-+}$, decomposed by $P_+$, is
\begin{align*}
    \of{\otimes_{k=1}^n \of{P_-}_k} Y_{i^*}Y_{j^*} \of{\otimes_{k=1}^n P_k}.
\end{align*}
Note that $Y_{i^*}$ maps from $\of{S_+}_i$ to $\of{S_-}_i$ and vice versa. Thus, only terms with $P_i = \of{P_{+}}$ survive. The same holds for $Y_{j^*}$. The remaining terms $P_k$ for $k\notin \ofc{i,j}$ must be $P_-$. So the only non-vanishing term is 
\begin{align*}
    \of{Y_{i^*}Y_{j^*}}_{-+}&=\of{\otimes_{k=1}^n \of{P_-}_k} Y_{i^*}Y_{j^*} \of{\of{P_+}_i \otimes \of{P_+}_j \otimes_{k\in \ofb{n}\setminus \ofc{i,j}} \of{P_{-}}_k} \\
    &= \of{P_-}_i \of{P_-}_j Y_{i^*}Y_{j^*}  \of{P_+}_i \of{P_+}_j.
\end{align*}
Identical expressions hold for $Z_{i^*}Z_{j^*}$.
Let $V_{ij}=Y_{i'}Y_{j'}+Y_{i'}Y_{j''}+Z_{i'}Z_{j'}+Z_{i'}Z_{j''}$ be the negative of a single term of $V_{\text{main}}$ corresponding to the edge $(i,j)$. Then,
\begin{align*}
    \of{V_{ij}}_{-+}\!\! = \of{P_-}_i \of{P_-}_j  V_{ij}\of{P_+}_i \of{P_+}_j. 
\end{align*}
We use the following relations from \cref{eq:pauli_on_bell}:
\begin{align*}
      Y_1|\phi^-\rangle &= i|\psi^+\rangle
    &\qquad Y_1|\psi^-\rangle &= i|\phi^+\rangle 
    \\
    \qquad Y_2|\phi^-\rangle &= i|\psi^+\rangle
    &\qquad Y_2|\psi^-\rangle &= -i|\phi^+\rangle
     \\
    Z_1|\phi^-\rangle &= |\phi^+\rangle
    &\qquad Z_1|\psi^-\rangle &= |\psi^+\rangle \\
    \qquad Z_2|\phi^-\rangle &= |\phi^+\rangle
    &\qquad Z_2|\psi^-\rangle &= -|\psi^+\rangle 
\end{align*}
Expanding out $\of{V_{ij}}_{-+}$:
\begin{align*}
    \of{V_{ij}}_{-+} \!\!&= \of{\ket{\psi^-}\bra{\psi^-}+\ket{\phi^-}\bra{\phi^-}}_i \of{\ket{\psi^-}\bra{\psi^-}+\ket{\phi^-}\bra{\phi^-}}_j \\
    &\hspace{15px}\times\of{Y_{i'}Y_{j'}+Y_{i'}Y_{j''}+Z_{i'}Z_{j'}+Z_{i'}Z_{j''}}\of{\ket{\psi^+}\bra{\psi^+}+\ket{\phi^+}\bra{\phi^+}}_i \of{\ket{\psi^+}\bra{\psi^+}+\ket{\phi^+}\bra{\phi^+}}_j \\
    &=\of{-i\ket{\phi^-}\bra{\psi^+}-i\ket{\psi^-}\bra{\phi^+}}_i \of{-2i\ket{\phi^-}\bra{\psi^+}+(i-i)\ket{\psi^-}\bra{\phi^+}}_j  \\
    &\hspace{15px} + \of{\ket{\psi^-}\bra{\psi^+}+\ket{\phi^-}\bra{\phi^+}}_i \of{(1-1)\ket{\psi^-}\bra{\psi^+}+(1+1)\ket{\phi^-}\bra{\phi^+}}_j \\
    &= -2\of{\ket{\phi^-}\bra{\psi^+}+\ket{\psi^-}\bra{\phi^+}}_i \of{\ket{\phi^-}\bra{\psi^+}}_j + 2 \of{\ket{\psi^-}\bra{\psi^+}+\ket{\phi^-}\bra{\phi^+}}_i \of{\ket{\phi^-}\bra{\phi^+}}_j \\
    &= 2\of{\ket{\phi^-}_i\ket{\phi^-}_j\! \of{\!-\! \bra{\psi^+}_i\bra{\psi^+}_j+\bra{\phi^+}_i\bra{\phi^+}_j} + \ket{\psi^-}_i\ket{\phi^-}_j \of{\!-\! \bra{\phi^+}_i \bra{\psi^+}_j + \bra{\psi^+}_i\bra{\phi^+}_j}}. 
\end{align*}
Note that $\ket{\psi^+}_i\ket{\psi^+}_j$ have energy $2$ with respect to $H_0$, and similarly for $\ket{\psi^+}_i\ket{\phi^+}_j$, $\ket{\phi^+}_i\ket{\psi^+}_j$, and $\ket{\phi^+}_i\ket{\phi^+}_j$. These states thus have energy $1/2$ with respect to $H_0^{-1}$, so 
\begin{align*}
    \of{V_{ij}}_{-+} H_0^{-1} \of{V_{ij}}_{+-} = \frac{1}{2} \of{V_{ij}}_{-+} \of{V_{ij}}_{-+}^{\dagger}.
\end{align*}
We can then compute 
\begin{align*}
    \frac{1}{2} \of{V_{ij}}_{-+} \of{V_{ij}}_{-+}^{\dagger} &= \frac{1}{2} \times  4 \times \of{2 \ket{\phi^-}_i\ket{\phi^-}_j\bra{\phi^-}_i\bra{\phi^-}_j + 2\ket{\psi^-}_i\ket{\phi^-}_j\bra{\psi^-}_i\bra{\phi^-}_j} \\
    &= 4 \of{\ket{1^L1^L}_{ij}\bra{1^L1^L}_{ij}+\ket{0^L1^L}_{ij}\bra{0^L1^L}_{ij}} \\
    &= 2 (I - Z^L_j).
\end{align*}
Remember that $V_{ij}$ is the negative of a term in $V_{\text{main}}$. Then
\begin{align*}
    -\of{V_{\text{main}}}_{-+}H_0^{-1}\of{V_{\text{main}}}_{+-} = - \sum_{j=1}^n \frac{\beta_j}{2} \cdot  \frac{1}{2} \of{V_{ij}}_{-+} \of{V_{ij}}_{-+}^{\dagger}
    =
    -(\sum_{j=1}^n \beta_j) I + \sum_{j=1}^n \beta_j Z_j^L\,.
\end{align*}

We now apply \cref{lem:2nd_order_pert}. $H_0$ has ground state energy $0$ and all nonzero eigenvalues at least $1$, $(V_{\text{main}})_{--} = (V_{\text{extra}})_{+-} = (V_{\text{extra}})_{-+} = 0$, and so 
$$
(V_{\text{extra}})_{--} -\of{V_{\text{main}}}_{-+}H_0^{-1}\of{V_{\text{main}}}_{+-} 
=
\sum_{\of{i,j}\in E} \alpha_{ij} X^L_i X^L_j  + \sum_{j=1}^n \beta_j Z_j^L - (\sum_{j=1}^n \beta_j) I\,=\overline{H}_{\text{target}}\,.
$$

\section{Proof of \texorpdfstring{\cref{lem:energy_orderings}}{unique energy orderings}}\label{apx:energy_orderings}
Let $t \in \{0,1,2,3\}$ represent the number of triplet states that are degenerate with the singlet. Let $s = 3 - t$ be the number of remaining triplet states. These $s$ triplets must lie either above or below the singlet.

For a fixed $s \ge 1$, we count all possible arrangements by splitting into cases. Suppose exactly $m$ of the $s$ triplets lie above the singlet, and the remaining $s-m$ lie below, where $m = 0,1,\dots,s$. These cases are mutually exclusive and exhaustive, so we sum over all $m$.

For a fixed $m$, the $m$ triplets above the singlet can be grouped into degenerate energy levels. The number of such groupings is given by the number of \emph{compositions} of $m$. A composition is a way of writing a nonnegative number $n$ as a sum over smaller positive numbers. The number of compositions is given by 
\begin{align*}
    f(n) =
        \begin{cases}
        2^{n-1}, & n \ge 1, \\
        1, & n=0.
        \end{cases}
\end{align*}
Similarly, the $s-m$ triplets below the singlet can be arranged in $f(s-m)$ ways. Since these choices are independent, the total number of arrangements for fixed $m$ is $f(m)f(s-m)$. Thus, for fixed $s$ there are
\begin{align*}
    F(s) = \sum_{m=0}^s f(m)f(s-m),
\end{align*}
arrangements. We evaluate $F(s)$ by separating the edge cases:
\begin{itemize}
    \item $m=0$: $f(0)f(s) = 1 \cdot 2^{s-1} = 2^{s-1}$,
    \item $m=s$: $f(s)f(0) = 2^{s-1} \cdot 1 = 2^{s-1}$,
    \item $1 \le m \le s-1$: $f(m)f(s-m) = 2^{m-1} \cdot 2^{s-m-1} = 2^{s-2}$.
\end{itemize}
There are $s-1$ terms $1 \le m \le s-1$, so
\begin{align*}
    F(s) = 2^{s-1} + 2^{s-1} + \mathbbm{1}_{s > 1} (s-1)2^{s-2}
= 2 \cdot 2^{s-1} + \mathbbm{1}_{s > 1} (s-1)2^{s-2}.
\end{align*}
Finally, we evaluate for each $s$ and find
$F(0)+F(1)+F(2)+F(3)=1+2+5+12=20$.

\end{document}